\setlist[enumerate]{leftmargin=.5in}
\setlist[itemize]{leftmargin=.5in}
\crefname{hypothesis}{Assumption}{Assumption}
\title{From Estimation to Sampling for Bayesian Linear Regression with Spike-and-Slab Prior\thanks{Submitted to the editors DATE.
\funding{This work is supported under DOE.}}}
\author{Qijia Jiang\thanks{Lawrence Berkeley National Laboratory
  (\email{qjiang@lbl.gov}).}}
\begin{document}

% Methodology (stat.ME); Computation (stat.CO); Machine Learning (stat.ML)

\maketitle

% REQUIRED
\begin{abstract}
%  Abstracts must be able to stand alone and so cannot contain citations to the paper's references, equations, etc.  An abstract must consist of a single paragraph and be concise. Because of online formatting, abstracts must appear as plain as possible. Any equations should be inline.
We consider Bayesian linear regression with sparsity-inducing prior and design efficient sampling algorithms leveraging posterior contraction properties. A quasi-likelihood with Gaussian spike-and-slab (that is favorable both statistically and computationally) is investigated and two algorithms based on Gibbs sampling and Stochastic Localization are analyzed, both under the same (quite natural) statistical assumptions that also enable valid inference on the sparse planted signal. The benefit of the Stochastic Localization sampler is particularly prominent for data matrix that is not well-designed.  % While both algorithms rely on warm-start for efficient sampling, 
\end{abstract}

% REQUIRED
\begin{keywords}
Gibbs Sampler, Spike-and-Slab Sparse Linear Regression, Stochastic Localization, Posterior Contraction of Frequentist Bayesian procedure
\end{keywords} 

% REQUIRED
\begin{MSCcodes}
65C60, 68W40, 62C10 
\end{MSCcodes}

\section{Introduction}

In this work we study posterior sampling arising from high-dimensional Bayesian variable selection -- our focus is on sampling from the full posterior for uncertainty quantification purpose as opposed to computing aspect of it (e.g., point estimators). Given design matrix $X\in \mathbb{R}^{n\times p}$ and response $y\in \mathbb{R}^n$, the linear regression model with Spike-and-Slab prior has posterior 
\begin{equation}
\label{eqn:general}
\pi(\beta|y,X)\propto \mathcal{L}(y|X,\beta)\mathbb{P}_{\text{prior}}(\beta)\propto \exp(-\frac{1}{2\sigma^2}\|y-X\beta\|_2^2) (\otimes_{i=1}^p(1-z)G_0(\beta_i)+zG_1(\beta_i))
\end{equation}
for some $z\in (0,1)$, where $G_0$ has density more concentrated around $0$ than $G_1$. What makes the Bayesian methodology attractive is that it comes with credible sets instead of a single summary statistics; however, we emphasize that we will study Bayesian guarantee in a frequentist framework in this paper, where we assume there is a planted (and \emph{fixed}) $k$-sparse signal $\beta^*$ for which data is generated from, i.e., $y=X\beta^*+\epsilon$ for $\epsilon\sim \mathcal{N}(0,\sigma^2 I)$. This prior can be viewed as a regularized least squares / penalized likelihood if one draws parallel to the frequentist perspective, where Lasso ($\ell_1$ penalty) corresponds to the posterior mode of i.i.d Laplace($\lambda$) prior with density $\frac{\lambda}{2}\exp(-\lambda|\beta|)$:
\[\hat{\beta}_{\text{Lasso}}\leftarrow\arg\min_\beta \; \|y-X\beta\|_2^2+\lambda\sum_{i=1}^p |\beta_i|\, .\]
Lasso, however, isn't fully Bayesian in the sense credible interval building upon the posterior distribution does not provide valid coverage guarantee \cite[Theorem 7]{castillo2015regression} for $\beta^*$. Therefore good performance of posterior mode doesn't automatically translate to good performance of the full posterior. This is, in some sense, not surprising since it has to balance between the task of selection and prediction (i.e., shrinkage and bias). Spike and Slab prior, on the other hand, by explicitly introducing two scales/groups, is better at dealing with this tension. Indeed, favorable statistical properties can be established on the posterior for inference on the unknown sparse $\beta^*$ -- in what follows, we will design sampling procedures under statistical assumptions for the model and will be mostly concerned with the scaling with $p$ when it comes to computational methods. We note that for the purpose of recovering the sparse $\beta^*$, classical BvM says that data will eventually wash out the influence of the prior choice, however mismatch between the prior and the truth will be reflected in the slow posterior contraction rate of $\pi_n(\beta |y^n)\rightarrow \delta_{\beta^*}$ in terms of statistical efficiency. Another way to see this manifested is through the variational inequality 
\[-\log\mathbb{E}_{\text{prior} (\beta)} \mathcal{L}_{y,X}(\beta)=\min_{\rho\ll \mathbb{P}_{\text{prior}}(\beta)} \{-\mathbb{E}_\rho [\log\mathcal{L}_{y,X}(\beta)]+\text{KL}(\rho || \mathbb{P}_{\text{prior}} (\beta))\}\]
and the minimizer $\rho^*$ is precisely \eqref{eqn:general} when $\mathcal{L}_{y,X}(\beta)$ is the likelihood function, therefore the posterior will concentrate on maximizers of the likelihood in presence of the evidence from data, while staying faithful to the prior knowledge one may have.
% Recovery is best if prior ‘matches’ truth - Mismatch slows down, but does not prevent, recovery

\subsection{Related Literature}
Statistical properties of \eqref{eqn:general} have been studied by \cite{castillo2015regression,narisetty2014bayesian,spike-and-slab-lasso} with different choices for $G_0,G_1,z,\sigma$. On a closely related prior, computational-statistical guarantees given by \cite{yang2016computational} highlight that sharp concentration of the high-dimensional posterior distribution (i.e., $\pi_n(z^*|y)\gtrsim 1-p^{-1}$ with probability at least $1-p^{-c}$ assuming smallest non-zero element of $\beta^* \gtrsim\sigma^2\log p/n$) need not lead to polynomial mixing of MCMC algorithm. \emph{Unless} one restricts the size of the state space the prior is supported on $\mathbbm{1}\{\|z\|_0\leq u\}$, the authors show that the gradient-free Metropolis-Hastings algorithm (also known as Add-Delete-Swap in this context) can have mixing time scaling exponentially with $p$. However, this upper bound $u$ depends on quantities unknown in practice. Gibbs sampler is widely used for spike-and-slab models, and its convergence is analyzed in \cite{atchade2021approximate} with numerical speedup investigated in \cite{biswas2022scalable}. Various approximate schemes exist, where in \cite{ray2022variational} mean-field variational inference ideas are used (i.e., reduce model search space from $2^p$ to $p$ assuming coordinates are independent) to show posterior contraction but since the objective to be optimized is non-convex, guarantee for convergence to global optima is hard to establish (in fact it was empirically observed that the result can be sensitive to initialization). Some previous attempts also focus on designing efficient algorithms for computing point estimators such as posterior modes using e.g., EM algorithms for priors with continuous support \cite{rovckova2014emvs}. 

% Adaptive MCMC methods make non-local moves and show promise in practice but precise quantitative mixing time bounds remain challenging.

% replaces the high dimensional covariance matrix by a sparse one so that large matrix operations can be avoided. The active part has a low dimension, and is sampled from the multivariate normal distribution. The inactive part has a high dimension, but we simply sample it from a normal distribution with independent marginals. 

% Geared toward finding posterior modes of the parameter posterior π(β, θ, σ| y) rather than simulating from the entire model posterior π(γ | y), the EM algorithm derived here offers potentially enormous computational savings over stochastic search alternatives, especially in problems with a large number p of potential predictors. 

The philosophy we adopt for sampling from the non-log-concave spike-and-slab posterior \eqref{eqn:general} is close in spirit to (1)\cite{belloni2009computational}, where posterior converges to a normal limit as both the sample size $n$ and parameter dimension $p$ grow to infinity at appropriate rate (reminiscent of Bernstein-von-Mises theorem which states the posterior approach a Gaussian centered at MLE with Fisher information covariance under appropriate assumptions), and show polynomial time mixing in $p$ -- an assumption on the starting point for the algorithm that falls in the approximate support of the posterior, i.e., where CLT applies, is also imposed; (2) A line of investigation on Bayesian nonlinear inverse problem \cite{nickl} also crucially hinges on warm start into the locally convex region where most of the posterior mass concentrates for polynomial-time convergence of the MCMC algorithm they design. On the other hand, standard off-the-shelf gradient-based HMC, MALA samplers typically struggle for potentials deviating significantly from log-concavity beyond functional inequalities 
 -- one could check that the Log-Sobolev constant (therefore mixing time) scales exponentially with the separation between the peaks, in addition to already expensive gradient calculation, without the possible help of parallel tempering/replica exchange that avoids being trapped in separated modes. In fact, these are not surprising in light of the asymptotic posterior shape characterization in \cite[Theorem 6]{castillo2015regression} where they are shown to be well-approximated by random (i.e., data-dependent) mixture of Gaussians.
 
 \subsection{Notation \& Outline}
(In)equalities with $\lesssim, \gtrsim, \asymp$ hold up to absolute constants. For two models $z,z' \in \{0,1\}^p$, $z\subset z'$ means that the active components of $z$ is a subset of that of $z'$, and $\|z\|_0$ counts the number of non-zeros/active elements. We write $j\notin z$ to indicate $z_j=0$. Total-variation distance is defined as $\|\mu-\nu\|_{\text{TV}}=\sup_{A\in \mathcal{B}} |\mu(A)-\nu(A)|\in [0,1]$, and Wasserstein-2 distance is defined as $W_2(\mu,\nu) = \inf_{x\sim \mu, y\sim \nu} \mathbb{E}[\|x-y\|^2]^{1/2}$, which satisfies triangle inequality. Moreover, we use $o_n(1)$ to specify a quantity tending to $0$ as $n \rightarrow \infty$, and $O_p(a)$ for the usual stochastic boundedness. Both $X_n\xrightarrow{P} X$ and $\text{p-}\lim_{n\rightarrow \infty} X_n = X$ denote convergence in probability. In what follows, \Cref{sec:meta_gibbs} studies Gibbs sampler, \Cref{sec:SL-meta} the Stochastic Localization Sampler, both under warm start and posterior contraction assumptions. These statistical assumptions are justified in \Cref{sec:statistics} for the particular quasi-likelihood posterior with continuous spike-and-slab prior that we focus on in this work.
% use $o_{n,P}(1)$ for a quantity tending to $0$ in probability. 

\section{(Scalable) Gibbs Sampler}
\label{sec:meta_gibbs}
In this section, we (1) give Gibbs update and efficient implementation for point-mass-like spike-and-slab priors, along with its random design analogue for Gaussian design matrix; (2) provide mixing guarantee from a warm start. We also highlight the bottleneck for Gibbs-based samplers for this class of posteriors.

\subsection{Point-mass-like Spike-and-Slab}
A popular approach of conducting Bayesian variable selection in the regime $p \gg n$ is through setting up a hierarchical model: for linear model $y=X\beta+\epsilon$ with $\epsilon\sim\mathcal{N}(0,\sigma^2 I_n)$ for $\sigma^2$ the noise variance (inverse Gamma distribution on $\sigma^2$ is sometimes considered but we will assume that it's known here) and the sparsity prior $z_j\sim\text{Bern}(q)$ where $\beta_j\vert z_j \sim z_j\mathcal{N}(0,\tau_1^2)+(1-z_j)\delta_0(\beta_j)$ for all $j\in[p]$, the joint posterior is 
\[\pi(\beta,z\vert y)\propto \mathcal{N}(y;X\beta,\sigma^2 I_n)\prod_{j=1}^p (q\mathcal{N}(\beta_j;0,\tau_1^2))^{z_j}\cdot((1-q)\delta_0(\beta_j))^{1-z_j}\, .\]
The Gibbs update, which relies on the availability of conditional probabilities, becomes
\begin{align*}
\pi (\beta\vert z,y) &\propto \mathcal{N}(y;X\beta,\sigma^2)\prod_{j=1}^p (\delta_0(\beta_j))^{1-z_j}\cdot(\mathcal{N}(\beta_j;0,\tau_1^2))^{z_j}\\
&\propto \exp\left(-\frac{1}{2\sigma^2}(\beta^\top X^{\top}X\beta-2\beta^\top X^\top y)-\beta^\top D(\frac{z_j}{2\tau_1^2})\beta\right)\prod_{j=1}^p (\delta_0(\beta_j))^{1-z_j}\\
&\sim \mathcal{N}(\bar{\beta};\Sigma^{-1}\bar{X}^\top y,\sigma^2\Sigma^{-1}) \prod_{j=1}^p (\delta_0(\beta_j))^{1-z_j}
\end{align*}
for $\Sigma(z) = \bar{X}^\top \bar{X}+2\sigma^2 D(\frac{z_j}{2\tau_1^2})$, where $\bar{X}$ denotes the $n\times \|z\|_0$ sub-matrix with $z_j=1$, $\bar{\beta}$ the subvector with active coordinates, and $D(\cdot)$ a $\|z\|_0\times \|z\|_0$ diagonal matrix with the indicated components. In other words, 
\begin{equation}
\label{eqn:point_mass_beta}
\pi (\beta\vert z,y)\sim \mathcal{N}\left(\bar{\beta}; (\bar{X}^\top \bar{X}+\frac{\sigma^2}{\tau_1^2}I)^{-1}\bar{X}^\top y, \sigma^2(\bar{X}^\top \bar{X}+\frac{\sigma^2}{\tau_1^2}I)^{-1}\right)\otimes \prod_{j=1}^p (\delta_0(\beta_j))^{1-z_j}
\end{equation}
where $\delta_0(\beta_j)$ denotes Dirac delta, i.e., $\beta_j=0$ if $z_j=0$. The conditional distribution for $z$ is
\begin{align}
\pi(z\vert \beta,y) &\propto \prod_{j=1}^p (q\mathcal{N}(\beta_j;0,\tau_1^2))^{z_j}\cdot((1-q)\delta_0(\beta_j))^{1-z_j} \nonumber \\
&\sim \prod_{j=1}^p \text{Bern}\left(z_j;\frac{q\mathcal{N}(\beta_j;0,\tau_1^2)}{(1-q)\delta_0(\beta_j)+q\mathcal{N}(\beta_j;0,\tau_1^2)}\right) \label{eqn:point_mass_z}
\end{align}
which suggests $z_j=0$ if $\beta_j=0$ and $z_j=1$ if $\beta_j\neq 0$. It might be tempting to conclude that this is computationally favorable as \eqref{eqn:point_mass_beta} involves inversion of a lower-dimensional matrix, as opposed to a continuous prior of the form $\beta_j\vert z_j \sim z_j\mathcal{N}(0,\tau_1^2)+(1-z_j)\mathcal{N}(0,\tau_0^2)$, that necessarily requires matrix inversion of size $p\times p$. However, the updates \eqref{eqn:point_mass_beta}-\eqref{eqn:point_mass_z} in fact lead to a non-convergent / reducible Markov chain, i.e., the chain gets stuck whenever it generates $\beta_j=0$, although statistically the posterior on $\beta$ contracts at the \emph{near minimax-optimal} rate for the recovery of $\beta^*$. For example for a related prior where $\beta_1,\dots,\beta_p$ are i.i.d from $(1-r)\delta_0+r\text{Laplace}$, and $r\sim \text{Beta}(1,p^u)$ hyper prior with $u>1$,
% for a related prior on model size: $\mathbb{P}(z,\beta) = \pi_n(s){p \choose s}^{-1} g(\beta_{1-z})\delta_0(\beta_z)\asymp \mathbb{P}(s)\exp(-s\log(p/s)) g(\beta_{1-z})\delta_0(\beta_z)$ for $\|z\|_0 = s$, 
the important work of \cite{castillo2015regression} showed under $k_n$-sparse compatibility assumption on the design matrix for the high-dimensional setting $p>n$, uniformly over $k_n$-sparse signals, % if all nonzero |\beta_i| are suitably big, then posterior probability of true model tends to 1
\begin{equation}
\label{eqn:gold-standard}
\sup_{\|\beta^*\|_0\leq k_n}\mathbb{E}_{\beta^*}\left[\pi_n(\beta\colon \|\beta-\beta^*\|_1 \gtrsim k_n\sqrt{\log p}/\|X\|\,\vert\, y^n)\right]\xrightarrow{n\rightarrow \infty} 0\,.
\end{equation}
Note this is a remarkably strong statement about the \emph{complete} posterior $\pi(\cdot|y)$, which is a random measure over $\beta$ for any fixed $\beta^*$, and not just aspect of it such as the posterior mode / mean as   % with $k_n/n\rightarrow 0$
\[\sup_{\|\beta^*\|_0\leq k_n} \mathbb{E}_{\beta^*}\left[ \left\|\int \beta \pi(\beta\,|\, y^n) \, d\beta -\beta^*\right\|^2 \right]\lesssim 2k_n\log(p/k_n)\,, \]
which the Lasso estimator $\hat{\beta}_{\text{Lasso}}$ also verify with an appropriate choice of $\lambda$. Above $k_n\rightarrow \infty$ is permitted as $n\rightarrow \infty$.

For this reason, computational strategies involving \emph{exact-sparsity} inducing priors resort to Add-Delete-Swap or shotgun stochastic search \cite{shotgun}, which integrate out the regression coefficients from the posterior (i.e., design samplers based on $\mathbb{P}(z|y)$ over $\{0,1\}^p$), but falls short of solving both the variable selection ($z$) and parameter estimation ($\beta$) problems simultaneously. On the other hand, Gibbs can handle spike-and-slab prior with continuous support effortlessly, that doesn't have this trans-dimensionality problem, but inversion of a $p\times p$ matrix renders the sampling procedure expensive. The quasi-likelihood approach below, which is a variant of the classical formulation \eqref{eqn:general}, provides a middle ground that balance between the desirable statistical performance and computational convenience, as we will elaborate. 

% The most commonly used approach for dealing with posterior distributions from exact-sparsity inducing priors consists in integrating out the regression coefficients (George and McCulloch (1997)). Recent results by Yang et al. (2015) have shown that such an approach indeed scales well with the dimension of the parameter space. However, it holds the limitation that it does not solve both the variable selection and sparse estimation problems jointly. Furthermore, it does not easily extend to non-Gaussian slabs, or to non-Gaussian models. 

\begin{proposition}
\label{prop:sparsify}
The sparsified likelihood \cite{atchade2018approach} that has posterior (with $\tau_1 \gg \tau_0$)
\begin{equation}
\label{eqn:quasi_posterior}
\pi(\beta,z\vert y)\propto \mathcal{N}(y;X_{z}\beta_{z},\sigma^2 I_n)\prod_{j=1}^p (q\mathcal{N}(\beta_j;0,\tau_1^2))^{z_j}\cdot((1-q)\mathcal{N}(\beta_j;0,\tau_0^2))^{1-z_j}
\end{equation}
targets a different posterior than Skinny Gibbs \cite{narisetty2018skinny}, but can also be sampled using Gibbs with a reduced-dimensional matrix inversion operation at each iteration.  
\end{proposition}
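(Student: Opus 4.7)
The plan is to read the two Gibbs conditionals directly off the joint density in \eqref{eqn:quasi_posterior}, exhibit their reduced-dimensional linear-algebra structure, and then distinguish the stationary distribution from that of Skinny Gibbs.

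First I would derive $\pi(\beta \mid z, y)$. Because the sparsified quasi-likelihood $\mathcal{N}(y; X_z\beta_z,\sigma^2 I_n)$ involves only the active sub-vector $\beta_z$, the joint in \eqref{eqn:quasi_posterior} splits as a product over active and inactive coordinates. Completing the square in the exponent $-\frac{1}{2\sigma^2}\|y-\bar{X}\bar{\beta}\|^2-\frac{1}{2\tau_1^2}\|\bar{\beta}\|^2$, with $\bar{X}$ the $n\times\|z\|_0$ submatrix of active columns, yields a multivariate Gaussian on $\bar{\beta}$ with mean $(\bar{X}^\top\bar{X}+\frac{\sigma^2}{\tau_1^2}I)^{-1}\bar{X}^\top y$ and covariance $\sigma^2(\bar{X}^\top\bar{X}+\frac{\sigma^2}{\tau_1^2}I)^{-1}$, exactly as in \eqref{eqn:point_mass_beta}; the inactive coordinates $\{\beta_j: z_j=0\}$ are independent draws from $\mathcal{N}(0,\tau_0^2)$, the Gaussian spike. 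The only matrix to be factorized/inverted is therefore of size $\|z\|_0\times\|z\|_0$, not $p\times p$, which delivers the claimed computational savings whenever the chain stays near sparse configurations.

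Next I would derive $\pi(z\mid\beta,y)$. Grouping factors in \eqref{eqn:quasi_posterior} by the binary indicator, each $z_j$ is conditionally Bernoulli with odds ratio equal to the prior ratio $q\mathcal{N}(\beta_j;0,\tau_1^2)\,/\,(1-q)\mathcal{N}(\beta_j;0,\tau_0^2)$ multiplied by the likelihood ratio that compares $\mathcal{N}(y;X_z\beta_z,\sigma^2 I_n)$ under $z_j=1$ versus $z_j=0$ (the former adds the column contribution $X_j\beta_j$ to the fit, the latter removes it); this coupling across coordinates does not factorize like \eqref{eqn:point_mass_z} but is still a cheap Bernoulli sweep. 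Crucially, because both spike and slab are continuous, neither outcome forces $\beta_j=0$, so the chain on $(\beta,z)\in\mathbb{R}^p\times\{0,1\}^p$ avoids the reducibility pathology of \eqref{eqn:point_mass_beta}--\eqref{eqn:point_mass_z}.

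For the comparison with Skinny Gibbs, I would recall that \cite{narisetty2018skinny} retains the full model $X\beta$ in the likelihood but replaces the Gram matrix $X^\top X$ by a block-diagonalized surrogate (zeroing out the active/inactive cross block and approximating the inactive diagonal block by $nI$), so that the active and inactive $\beta$'s become conditionally independent under a \emph{different} modified likelihood. Our formulation \eqref{eqn:quasi_posterior} instead deletes the inactive columns from the likelihood altogether via $X_z\beta_z$, and a one-line comparison of the two densities shows that the two targets do not coincide as measures on $(\beta,z)$. The calculations above are largely routine Gaussian manipulations; the one point worth emphasizing, which is really the reason the proposition is worth stating, is that the reduced inversion comes from sparsifying the \emph{likelihood}, not the prior — had we kept $X\beta$ in the likelihood with a Gaussian spike-and-slab prior, the posterior precision would always be $p\times p$ no matter how small $\|z\|_0$ is.
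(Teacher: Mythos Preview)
Your proposal is correct and follows essentially the same route as the paper: derive $\pi(\beta\mid z,y)$ (Gaussian on $\bar\beta$ with $\|z\|_0\times\|z\|_0$ precision, independent $\mathcal N(0,\tau_0^2)$ on the inactive coordinates), derive the Bernoulli conditional for $z_j$ given $\beta,y,z_{-j}$, and contrast with Skinny Gibbs. The one place where the paper goes further is that it actually carries the $z_j$-odds algebra to closed form, arriving at $Q_j=\Pi_j\cdot\exp\!\big(-\beta_j^2(X^\top X)_{jj}/2\sigma^2\big)$, and then observes that Skinny Gibbs' corresponding odds are exactly $\Pi_j$; this extra factor is the concrete witness that the two stationary distributions differ, whereas you assert the distinction by ``a one-line comparison'' without exhibiting it.
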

\begin{proof}
For the posterior with quasi-likelihood, we alternate between
\begin{align}
\pi (\beta\vert z,y) 
&\propto \exp\left(-\frac{1}{2\sigma^2}(\bar{\beta}^\top \bar{X}^{\top}\bar{X}\bar{\beta}-2\bar{\beta}^\top \bar{X}^\top y)-\bar{\beta}^\top D(\frac{1}{2\tau_1^2})\bar{\beta}\right)\prod_{j=1}^p (\mathcal{N}(\beta_j;0,\tau_0^2))^{1-z_j}\nonumber \\
&\sim \mathcal{N}(\beta_z;\Sigma^{-1}\bar{X}^\top y,\sigma^2\Sigma^{-1}) \prod_{j=1}^p (\mathcal{N}(\beta_j;0,\tau_0^2))^{1-z_j} \label{eqn: beta_update}
\end{align}
where $\Sigma(z) = \bar{X}^\top \bar{X}+2\sigma^2 D(\frac{z_j}{2\tau_1^2})$ and for each $j\in[p]$ sequentially
\begin{align}
\pi(z_j\vert \beta,y,z_{-j}) &\propto \prod_{j=1}^p (q\mathcal{N}(\beta_j;0,\tau_1^2))^{z_j}\cdot((1-q)\mathcal{N}(\beta_j;0,\tau_0^2))^{1-z_j}\mathcal{N}(y;X_{z}\beta_{z},\sigma^2) \nonumber\\
&\propto  ((1-q)\mathcal{N}(\beta_j;0,\tau_0^2))^{1-z_j}\cdot q^{z_j}\times \mathcal{N}(\beta_{z};\Sigma^{-1}\bar{X}^\top y,\sigma^2\Sigma^{-1}) \nonumber\\
&\sim \text{Bern}\left(z_j;\frac{q\mathcal{N}(\beta_{z};\Sigma^{-1}\bar{X}^\top y,\sigma^2\Sigma^{-1})}{(1-q)\mathcal{N}(\beta_j;0,\tau_0^2)+q\mathcal{N}(\beta_{z};\Sigma^{-1}\bar{X}^\top y,\sigma^2\Sigma^{-1})}\right) \label{eqn:z_update}
\end{align}
which although is still Bernoulli, is no longer independent across coordinates, and the update for $z_j$ depends on not just $\beta_j$. In \eqref{eqn:z_update}, the normal distribution in the numerator involves setting $z_j=1$ and the rest as the conditioned $z_{\backslash j}$ at the current iteration. Another way to write the update for $z_j$ conditional on the rest is
\begin{align*}
&Q_j:=\frac{\pi(z_j=1\vert \beta,y,z_{-j})}{\pi(z_j=0\vert \beta,y,z_{-j})}=\frac{q}{1-q} \frac{\tau_0}{\tau_1}\times\\
&\frac{\exp[-(\beta_{z}-(\bar{X}^\top \bar{X}+\frac{\sigma^2}{\tau_1^2}I)^{-1}\bar{X}^\top y)^{\top}\frac{1}{2\sigma^2}(\bar{X}^\top \bar{X}+\frac{\sigma^2}{\tau_1^2}I)(\beta_{z}-(\bar{X}^\top \bar{X}+\frac{\sigma^2}{\tau_1^2}I)^{-1}\bar{X}^\top y)]}{\exp(-\beta_j^2/2\tau_0^2)}\\
&\propto \frac{q}{1-q}\frac{\tau_0}{\tau_1}\frac{\exp[-\frac{1}{2\sigma^2}\beta_{z}^{\top}(\bar{X}^\top\bar{X}+\frac{\sigma^2}{\tau_1^2}I)\beta_{z}+\frac{1}{\sigma^2}y^\top \bar{X}\beta_{z}]}{\exp(-\beta_j^2/2\tau_0^2)}\\
&\propto \frac{q}{1-q}\frac{\tau_0}{\tau_1}\frac{\exp(-\beta_j^2(1/2\tau_1^2+(X^{\top} X)_{jj}/2\sigma^2))}{\exp(-\beta_j^2/2\tau_0^2)}\exp(-\frac{1}{\sigma^2}\beta_j X_j^\top \bar{X}_{\backslash j}\beta_{z,\backslash j}+\frac{1}{\sigma^2}\beta_j X_j^\top y)\\
&=: \Pi_j \cdot \exp(-\beta_j^2(X^\top X)_{jj}/2\sigma^2)
\end{align*}
where $\bar{X}_{\backslash j}$ denotes the submatrix corresponding to the components of $z_{\backslash j}$ such that $z_k=1$. Note that $Q_j$ doesn't depend on $z_j$. 

This is slightly different from Skinny Gibbs update, which approximate the covariance matrix (ignoring cross-correlation between active $\bar{X}$ and inactive $\bar{X}_c$ components)
\[\begin{bmatrix}
\bar{X}^\top \bar{X}+\frac{\sigma^2}{\tau_1^2}I  &\bar{X}^\top \bar{X}_c\\
\bar{X}^{\top}_c\bar{X} & \bar{X}^{\top}_c \bar{X}_c+\frac{\sigma^2}{\tau_0^2}I 
\end{bmatrix} \quad \text{with}\quad 
\begin{bmatrix}
\bar{X}^\top \bar{X}+\frac{\sigma^2}{\tau_1^2}I  & 0\\
0 & \text{Diag}(\bar{X}^{\top}_c \bar{X}_c)+\frac{\sigma^2}{\tau_0^2}I 
\end{bmatrix} \]
therefore the update for $\beta_j$ for which $z_j=0$, although independent across coordinates, would involve $\text{Diag}(\bar{X}^{\top}_c \bar{X}_c)$ for the inactive components (but the update for the active components are the same as \eqref{eqn: beta_update}), and the update for $z_j$ in this case can be shown to be $\Pi_j$ (c.f. (3.12) in \cite{handbook2021}). However Skinny Gibbs posterior \cite{narisetty2018skinny} still enjoys strong model selection consistency property $\pi(z=z^*|y )\rightarrow 1$ asymptotically, as $p_n>n$ both grow at a proportional ratio.  
\end{proof}
\begin{remark}
One might also consider a Hogwild asynchronous style update with all $z_j$ drawn in parallel, using the latest $z$ in the shared memory with possible overwriting, although it seems hard to characterize the error introduced by this approximate MCMC scheme. If all the updates use the $z$ from the previous iteration, it amounts to assuming that the $z_j$'s are independent. 
% A version of this is what we will analyze in Section \ref{sec:mixing_gibbs}. 
\end{remark}

We'd like to mention that a Metropolized-Gibbs strategy with an accept/reject implementation for the $\{z_j\}_{j=1}^p$ update on \eqref{eqn:quasi_posterior} was proposed in \cite{atchade2018approach}, but we find the algorithm above somewhat more natural.

\begin{remark}
Such Gibbs update based on sparsified-likelihood can also be generalized to spike/slab distributions that admit representation as a scale-mixture of normals: for example in the case when $G_0/G_1$ is Laplace, one could write for $\lambda>0$
\[\frac{\lambda}{2}e^{-\lambda|\beta|}=\int_0^\infty \frac{1}{\sqrt{2\pi s}}e^{-\frac{\beta^2}{2s}}\frac{\lambda^2}{2}e^{-\lambda^2 s/2}\, ds,\]
which is equivalent to having $\beta|s \sim \mathcal{N}(0,
\sqrt{s}), s\sim \text{Laplace}(\lambda^2)$, and one can alternate between updating $\beta,z,s$; the conditional distribution of $s$ will be an inverse-Gamma in this case.
\end{remark}

\subsubsection{Practical Matters}
\label{sec:practical}
The update given in \cref{prop:sparsify} requires drawing samples from a multivariate Gaussian with covariance matrix that involves inversion of a $\|z\|_0 \times \|z\|_0$ matrix (since the posterior is concentrated on sparse $z$'s as we will show in \Cref{subsec:posterior_contract}, one can expect $\|z\|_0 \ll p$). This is the more expensive step among \eqref{eqn: beta_update}-\eqref{eqn:z_update}. Building on the work of \cite{biswas2022scalable,fast_conditional_gaussian}, data augmentation and pre-computation can be used to improve the \eqref{eqn: beta_update} step as follows, which cost $\mathcal{O}(\max\{n^2\|z\|_0,n^3\})$ since forming the matrix takes $\mathcal{O}(n^2\|z\|_0)$ and inverting takes $\mathcal{O}(n^3)$.
\begin{algorithm}[H]
\caption{Sample from $\mathcal{N}(\bar{\beta};\Sigma^{-1}\bar{X}^\top y,\sigma^2\Sigma^{-1})$ for $\Sigma(z) = \bar{X}^\top \bar{X}+\sigma^2/\tau_1^2 \cdot D_t(z)$ where $D_t$ is $\|z\|_0\times \|z\|_0$ diagonal and $\bar{X}$ is $n\times \|z\|_0$ consisting of active variables}\label{alg:sample_gaussian}
\begin{algorithmic}
\STATE{Sample $r\sim\mathcal{N}(0,D_t^{-1}), \zeta\sim \mathcal{N}(0,I_n)$}
\STATE{Set $v=\bar{X}r+\zeta$}
\STATE{Compute $u=(I_n+\bar{X}D_t^{-1}\bar{X}^\top)^{-1}(\frac{1}{\sigma}y-v)=: M_t^{-1}(\frac{1}{\sigma}y-v)$}
\RETURN $\bar{\beta}=\sigma(r+D_t^{-1}\bar{X}^\top u)$
\end{algorithmic}
\end{algorithm}
If the number of variables switching states between consecutive iterations is small (i.e., $\|z_t-z_{t-1}\|_0$ small, either due to sparse $z$/posterior concentration from \cref{prop:stat} or stable Markov chain), a few more ideas can be used for speeding up \cref{alg:sample_gaussian}:
\begin{enumerate}
\item Use the previous $M_t\in \mathbb{R}^{n\times n}$ as preconditioner and solve the linear system using conjugate gradient, which only involves matrix-vector product
\item Instead of computing $M_t^{-1}$ from scratch at every step, perform Sherman-Morrison on the previous matrix $M_{t-1}$, since only a few columns are added/deleted
\end{enumerate}

% \qijia{If we are interested in $\mathbb{P}(z_j=1|\hat{\beta},y)$, it could be slow due to curse of dimensionality. Blocking may help (correlated $z$'s updated together in a group).}
Per-iteration cost aside, due to the curse of dimensionality, blocked updates can also help with mixing as illustrated by the following example. From \cref{prop:sparsify} we know
\begin{align*}
&\frac{\pi(z_j=1\vert \beta,y,z_{-j})}{\pi(z_j=0\vert \beta,y,z_{-j})}\propto \\
&\frac{q}{1-q}\frac{\tau_0}{\tau_1}\exp(-\frac{1}{2}(1/\tau_1^2-1/\tau_0^2)\beta_j^2)\exp(-\frac{1}{\sigma^2}\beta_j X_j^\top \bar{X}_{\backslash j}\beta_{z,\backslash j}+\frac{1}{\sigma^2}\beta_j X_j^\top y-\beta_j^2(X^{\top} X)_{jj}/2\sigma^2)\, .
\end{align*}
Suppose half of the mass is concentrated on $e_1$ and the rest half evenly distributed among the remaining $2^p-1$ models. We start with $e_1+e_p$ (therefore $1$ false positive and no false negatives), for a choice of $\tau_1 > \tau_0$, let us take the first term $q\tau_0/(1-q)\tau_1 = o(1)$ since it is independent of $\beta$, the update reduces to
\[\frac{\pi(z_1=1\vert \beta,y,z_{-1})}{\pi(z_1=0\vert \beta,y,z_{-1})}\sim \exp(-\frac{1}{2}(\frac{1}{\tau_1^2}-\frac{1}{\tau_0^2})\beta_1^2+\frac{n}{2\sigma^2}\beta_1^2)\]
and for all other $j\neq 1$,
\[\frac{\pi(z_j=1\vert \beta,y,z_{-j})}{\pi(z_j=0\vert \beta,y,z_{-j})}\sim\exp(-\frac{1}{2}(1/\tau_1^2-1/\tau_0^2)\beta_j^2-\frac{1}{\sigma^2}\beta_j X_j^\top \bar{X}_{\backslash j}\beta_{z,\backslash j}+\frac{1}{\sigma^2}\beta_j X_j^\top X_1\beta_1^*-n\beta_j^2/2\sigma^2)\]
using $y=X_1\beta_1^*+\sigma \epsilon$ and assuming $(X^\top X)_{jj}=n$ is normalized. Additionally, we assume $X_1$ is orthogonal to all other columns. Under this assumption we have $\beta_1 \sim \beta_1^*$ and $\beta_{2,\dots,p}\sim 0$ after the first $\beta$ update (recall it amounts to regressing on the active components and setting the inactive ones to $\sim 0$). Therefore even though $z_1$ will stay $1$ and hence active with high probability, the rest of the $z_2,\dots, z_p$ will have almost equal probability of staying $0$ or $1$. The situation will likely repeat since $\beta_{2,\dots,p}\sim 0$ will remain. What we can conclude from this example is that the Gibbs sampler will witness (exponentially) long streaks of updates over the $2^{p}-1$ null models, followed by occupying the true model $e_1$ for equally long period of time and be very slow to move in between these two scenarios, since using \cref{prop:sparsify}
\begin{align*}
\frac{\pi(z_j=1\vert \beta,y,z_{-j})}{\pi(z_j=0\vert \beta,y,z_{-j})}&\sim \frac{q}{1-q}\frac{\tau_0}{\tau_1}\frac{\exp[-\frac{1}{2\sigma^2}\beta_1^{\top}(X_1^\top X_1+\frac{\sigma^2}{\tau_1^2}I)\beta_1+\frac{1}{\sigma^2}\beta_1^{*\top} X_1^\top X_1\beta_1]}{\exp(-\beta_j^2/2\tau_0^2)}\\
&\sim \exp[-\frac{1}{2\sigma^2}\beta_1^{\top}(X_1^\top X_1+\frac{\sigma^2}{\tau_1^2}I)\beta_1+\frac{1}{\sigma^2}\beta_1^{\top} X_1^\top X_1\beta_1]%\\ 
% \frac{q}{1-q}\frac{\tau_0}{\tau_1}\frac{\exp[-\frac{1}{2\sigma^2}\beta_{z}^{\top}(\bar{X}^\top\bar{X}+\frac{\sigma^2}{\tau_1^2}I)\beta_{z}+\frac{1}{\sigma^2}y^\top \bar{X}\beta_{z}]}{\exp(-\beta_j^2/2\tau_0^2)}
\end{align*}
becomes very small for $j\neq 1$ when we have identified the true model $e_1$, which means $z_j$ will stay $0$ (i.e., inactive) with high probability. On the other hand, blocked updates that do not adopt a coordinate-by-coordinate strategy will switch between the two half of the time. 

We also point out that while the updates for Gibbs sampler is simple to implement, its mixing time is not immune to multi-modality. Consider the case when $X_1$ and $X_2$ are strongly correlated and the posterior puts half of the mass on the model consisting of these two variables only; and the other half evenly on the rest $2^{p-2}-1+2^{p-2}=2^{p-1}-1$ models (note that due to the correlation, either both $X_1$ and $X_2$ are included or not included, assuming the remaining $X_{\backslash\{1,2\}}$ are almost orthogonal to them). Such colinearity in the data shows up as coherence of $X$ (defined in \eqref{eqn:coherence} below) in the mixing time analysis of the Gibbs sampler. If one initializes with either $z_1=z_2=1$ or $z_1=z_2 = 0$, similar argument as above shows that the Gibbs update will be very slow moving in between these two cases (even though both make up non-negligible portion of the posterior $3/4$ vs. $1/4$) -- this is essentially because they form two separated peaks in the $z$-space. %\qijia{elaborate}
% Suppose $\beta$ is only for augmentation purpose and we are interested in $\mathbb{P}(z_j=1|\hat{\beta},y)$

\subsubsection{Gibbs Mixing Guarantee for Posterior \eqref{eqn:quasi_posterior}}
\label{sec:mixing_gibbs}
We will loosely follow the approach taken in \cite{atchade2021approximate} which assumes that we can initialize from a model $z$ with no false negatives and at most $t$ false positives. The analysis is based on spectral gaps tailored to (finite) mixture of log-concave measures and allows one to restrict the study of spectral gaps to sets where most of the probability mass resides. Define for some $s\geq 0, \delta>0$
\begin{align} 
%\label{eqn: assume_gibbs}
\mathcal{E}_s &:= \Big\{\pi(z\in\{0,1\}^p\colon z^*\subset z, \|z\|_0\leq \|z^*\|_0+s\vert y)\geq 1-\frac{4}{p^{\frac{\delta}{2}(s+1)}}\cap \pi(z^*\vert y)\geq 1/2 \label{eqn:event_1}\\
&\cap \max_{z^*\subset z, \|z\|_0\leq \|z^*\|_0+s} \max_{j\in[p], j\notin z} \; |\langle (I_n+\tau_1^2/\sigma^2 X_{z}X_{z}^\top)^{-1} X_j, \epsilon\rangle| \leq \sigma \sqrt{2(s+1)n\log(p)}\Big\} \label{eqn:event_2}
\end{align}
which is a high probability event over the randomness of the noise $\epsilon$ only ($X$ and $\beta^*$ are assumed to be fixed that satisfy certain conditions given below). Moreover, the design matrix $X$ has coherence for some integer $k\geq 1$,
\begin{equation}
\label{eqn:coherence}
\mathcal{C}(k) := \max_{\|z\|_0\leq k}\max_{j\neq i, j \notin z} |X_j^\top(I_n+\tau_1^2/\sigma^2 X_{z}X_{z}^\top)^{-1}X_i| \geq 0
\end{equation}
and restricted eigenvalue that entails $X^\top X$ is strongly convex in certain directions
\begin{equation}
\label{eqn:re}
\mathcal{\omega}(k) := \min_{z: \|z\|_0\leq k}\min_{\|v\|_2=1} \left\{v^\top X_{1-z}^\top(I_n+\tau_1^2/\sigma^2 X_{z}X_{z}^\top)^{-1}X_{1-z} v\colon v\in\mathbb{R}^{p-\|z\|_0}, \|v\|_0\leq k\right\} \geq 0\, .
\end{equation}
In general smaller $\mathcal{C}(k)$ and bigger $\omega(k)$ indicate better design, which in some sense capture the correlation between active and inactive components. Result of \cite{yang2016computational} suggests posterior concentration such as \eqref{eqn:event_1} alone isn't enough for efficient sampling if one allows arbitrary initialization, but these are the bare minimum and we will justify the posterior concentration property for the posterior \eqref{eqn:quasi_posterior} (i.e., the first two conditions  in $\mathcal{E}_s$) in \Cref{subsec:posterior_contract}. We additionally assume $\beta$-min condition for the true signal, i.e., 
\begin{equation}
\label{eqn:beta_min}
|\beta^*_{z^*,j}|\gtrsim \sigma\sqrt{\log(p)/n}, \quad \|\beta^*_{1-z^*}\|_2=0
\end{equation}
above the detection threshold for all active coordinates $j$, which is unavoidable if an initialization with no false negatives / contraction towards the true support is desired. 

Initializing from the support of Lasso can be a viable choice for warm-start. Even in the frequentist setup, it is popular to consider model selection with Lasso first, followed by regressing on the selected subset with (appropriately chosen) coordinated-weighted $\ell_1$-penalty ($\propto 1/|\hat{\beta}_{\text{init},j}|$) \`a la Adaptive Lasso \cite{buhlmann_geer_book}. Another possibility is to do a preliminary MCMC run on the posterior $\pi(z|y)$ first and hopefully identify the high-probability models. % But in fact the estimator is better behaved with two-stage lasso in high dimension - lasso with $Y$ on $X$, $d$ on $X$, and output regression of $Y$ on both subsets

\begin{lemma}
\label{lem:warm-start}
The last condition in $\mathcal{E}_s$ holds with high probability and \eqref{eqn:coherence} is satisfied for $\mathcal{C}(k)\lesssim k^2\log(p)$, \eqref{eqn:re} is bounded away from $0$ for $k\sim n/\log(p)$ when e.g., the design matrix $X_{ij}\sim \mathcal{N}(0,1)$ for $n\gtrsim k\log(p)$. Moreover, with the above scaling of $\mathcal{C}(k)$, $\omega(k)$ and \eqref{eqn:beta_min}, Lasso has false positives bounded above by $\mathcal{O}(k)$, i.e., sparsity level of $\beta^*$, and no false negatives with high probability.
\end{lemma}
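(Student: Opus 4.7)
The plan is to handle the four assertions in sequence, separating the randomness in the Gaussian design $X$ from that in the noise $\epsilon$, and conditioning on a single high-probability ``good event'' on $X$ (uniform control of column $\ell_2$-norms, operator norms of sparse submatrices, off-diagonal inner products, and Gaussian RIP) that holds uniformly over $\|z\|_0\leq k$ as soon as $n\gtrsim k\log p$.

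For the noise condition in $\mathcal{E}_s$, I would fix $X$ on this good event and pick any admissible $(z,j)$ with $z^*\subset z$, $\|z\|_0\leq \|z^*\|_0+s$, and $j\notin z$. Writing $M_z := (I_n+(\tau_1^2/\sigma^2)X_zX_z^\top)^{-1}$, the inner product $\langle M_z X_j,\epsilon\rangle$ is centered Gaussian with variance $\sigma^2\|M_z X_j\|_2^2\leq \sigma^2\|X_j\|_2^2\lesssim \sigma^2 n$, using $\|M_z\|_{\mathrm{op}}\leq 1$ and the $\chi^2$ concentration of column norms. A Gaussian tail at level $\sigma\sqrt{C(s+1)n\log p}$ combined with a union bound over the $\lesssim p^{s+1}$ admissible $(z,j)$ pairs then delivers \eqref{eqn:event_2} with probability $\geq 1-p^{-(s+1)}$.

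The coherence and restricted eigenvalue bounds both flow from a Woodbury manipulation on the good event. For $\mathcal{C}(k)$, expand
\[
X_j^\top M_z X_i \;=\; X_j^\top X_i \;-\; \frac{\tau_1^2}{\sigma^2}(X_z^\top X_j)^\top \Bigl(I+\frac{\tau_1^2}{\sigma^2}X_z^\top X_z\Bigr)^{-1} X_z^\top X_i.
\]
Gaussian concentration gives $|X_i^\top X_j|\lesssim \sqrt{n\log p}$ uniformly in $i\neq j$, Davidson-Szarek gives $\|X_z\|_{\mathrm{op}}^2\lesssim n$ and $\lambda_{\min}(X_z^\top X_z)\gtrsim n$ uniformly over $\|z\|_0\leq k$ once $n\gtrsim k\log p$, so the middle matrix has operator norm $\lesssim \sigma^2/(\tau_1^2 n)$; combined with $\|X_z^\top X_j\|_2\leq \sqrt{k}\max_i|X_i^\top X_j|\lesssim \sqrt{kn\log p}$, the second term is $\lesssim k\log p$, yielding the advertised bound. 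For $\omega(k)$, the same eigenvalue control gives $M_z\succeq [\sigma^2/(\tau_1^2 n)]I$, which reduces the problem to lower bounding $\|X_{1-z}v\|_2^2$ over doubly $k$-sparse $(z,v)$; Gaussian RIP (Baraniuk-Davenport-DeVore-Wakin) supplies $\|X_{1-z}v\|_2^2\gtrsim n\|v\|_2^2$ as long as $k\lesssim n/\log(p/k)$, hence $\omega(k)\gtrsim \sigma^2/\tau_1^2$ is bounded away from zero.

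Finally, the Lasso claim is by now classical: Gaussian RIP implies the compatibility/RE condition, so with $\lambda\asymp \sigma\sqrt{n\log p}$ the standard analysis (e.g., Bühlmann-van de Geer) yields both $\|\hat\beta_{\mathrm{Lasso}}-\beta^*\|_\infty\lesssim \sigma\sqrt{\log p/n}$ and, by a KKT/thresholding argument, a support size capped at $\mathcal{O}(k)$; the $\ell_\infty$ bound together with the $\beta$-min condition \eqref{eqn:beta_min} rules out false negatives. The main obstacle throughout is the \emph{uniformity} in the coherence and RE steps over the exponentially many subsets $z$ with $\|z\|_0\leq k$; this is precisely what forces the scaling $n\gtrsim k\log p$, since only at that rate do the Davidson-Szarek and Gaussian RIP concentration inequalities absorb the $\binom{p}{k}$ union-bound factor with room to spare. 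Once this uniform design event is secured, the noise concentration step and the Lasso step are essentially textbook.
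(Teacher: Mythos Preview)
Your treatment of the noise condition in $\mathcal{E}_s$, the coherence bound, and the Lasso warm-start all follow the same route as the paper (condition on a high-probability design event, then Woodbury plus Gaussian concentration) and are fine. One small oversight in the coherence step: the definition of $\mathcal{C}(k)$ only imposes $j\notin z$, so $i\in z$ is allowed, in which case $\|X_z^\top X_i\|_2\asymp n$ rather than $\sqrt{kn\log p}$; this does not spoil the final $k^2\log p$ bound in the regime $n\asymp k\log p$, but your intermediate ``second term $\lesssim k\log p$'' is not correct as stated.

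The genuine gap is in the restricted-eigenvalue step. Your spectral lower bound $M_z\succeq (\sigma^2/(\tau_1^2 n))I_n$ is valid but far too crude: it yields only $\omega(k)\gtrsim \sigma^2/\tau_1^2$, and under the scaling $\tau_1\sim \sigma p/\sqrt{n}$ of Assumption~\ref{assume:scaling} this is $n/p^2\to 0$, so it does \emph{not} establish that $\omega(k)$ is bounded away from zero. The point is that $M_z$ has small eigenvalues only along the column span of $X_z$, while $X_{1-z}v$ is nearly orthogonal to that span on the good design event; your bound ignores this and pays the worst eigenvalue everywhere. The paper instead applies the \emph{same} Woodbury identity you used for coherence,
\[
v^\top X_{1-z}^\top M_z X_{1-z}v \;=\; \|X_{1-z}v\|_2^2 \;-\; v^\top X_{1-z}^\top X_z\Bigl(\tfrac{\sigma^2}{\tau_1^2}I+X_z^\top X_z\Bigr)^{-1}X_z^\top X_{1-z}v,
\]
and controls the subtracted term by $\lesssim n^{-1}\|X_z^\top X_{1-z}v\|_2^2\lesssim (k\log p)\|v\|_2^2$ via the off-diagonal estimate $|X_l^\top X_m|\lesssim\sqrt{n\log p}$, giving $\omega(k)\gtrsim n$ once $n\gtrsim k\log p$. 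This order-$n$ lower bound, not merely a positive constant, is precisely what the downstream mixing-time bound in Proposition~\ref{thm:gibbs} requires, since the exponent there contains $n^2/\omega(k)^2$.
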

\begin{proof}
This is a modification of Lemma 8 and 9 of \cite{atchade2021approximate} so we will be brief. Since $\epsilon\sim \mathcal{N}(0,\sigma^2 I)$, \eqref{eqn:event_2} simply follows by observing that for $X_{ij}\sim \mathcal{N}(0,1)$,
\[\max_{z^*\subset z, \|z\|_0\leq \|z^*\|_0+s} \max_{j\in[p],z_j=0} \;  \|(I_n+\tau_1^2/\sigma^2 X_{z}X_{z}^\top)^{-1} X_j\|\leq \max_j \|X_j\|\lesssim \sqrt{n}\]
and the Gaussian deviation inequality. For the condition \eqref{eqn:coherence} and \eqref{eqn:re}, it is known when $n\gtrsim k\log(p)$, for Gaussian random matrix $\mathbb{P}(X\in \mathcal{H}) \gtrsim 1-1/p$, where 
\begin{align*}
\mathcal{H} := \big\{&X\in \mathbb{R}^{n\times p}\colon \|X_j\|_2\asymp \sqrt{n} \; \forall j\in[p],\, \max_{j\neq i}|\langle X_j,X_i\rangle |\lesssim \sqrt{n\log (p)}, \\
&\min_{\|v\|_0\leq k, \|v\|_2=1} v^\top (X^\top X)v \gtrsim n\big\}
\end{align*}
therefore we condition on the event $\mathcal{H}$ for the rest of the argument. Now Woodbury's identity and Cauchy Schwarz together with $\mathcal{H}$ give for $j\neq i$,
\begin{align*}
&|X_j^\top(I_n+\tau_1^2/\sigma^2 X_{z}X_{z}^\top)^{-1}X_i|=|X_j^\top X_i-X_j^\top X_{z}(\frac{\sigma^2}{\tau_1^2}I+X_{z}^\top X_{z})^{-1}X_{z}^\top X_i|\\
&\leq |X_j^\top X_i|+\sqrt{X_j^\top X_z(\frac{\sigma^2}{\tau_1^2}I+X_{z}^\top X_{z})^{-1}X_z^\top X_j}\sqrt{X_i^\top X_z(\frac{\sigma^2}{\tau_1^2}I+X_{z}^\top X_{z})^{-1}X_z^\top X_i}\\
&\lesssim \sqrt{n\log (p)}+\frac{1}{n} \|X_j^\top X_{z}\| \|X_{z}^\top X_i\|\\
&\lesssim \sqrt{n\log(p)}+\frac{k\sqrt{n\log(p)}(k\sqrt{n\log(p)}+n)}{n}\\
&\lesssim k^2 \log(p)
\end{align*}
for $X_j\notin X_{z}$ and $\|z\|_0=k$ and we used $n\gtrsim k\log(p)$. Similarly, for $\|z\|_0\leq k$ and $\text{supp}(v)\subset 1-z, \|v\|_0 \leq k$, on event $\mathcal{H}$, we have
\begin{align*}
v^\top X_{1-z}^\top(I_n+\tau_1^2/\sigma^2 X_{z}X_{z}^\top)^{-1}X_{1-z} v &= \| X_{1-z} v\|^2 - v^\top X_{1-z}^\top X_{z}(\frac{\sigma^2}{\tau_1^2}I+X_{z}^\top X_{z})^{-1}X_{z}^\top X_{1-z} v\\
&\gtrsim n\|v\|^2-\frac{\|X_{z}^\top X_{1-z} v\|^2}{n}\\
&\gtrsim n\|v\|^2 - \frac{k n\log(p)}{n}\|v\|^2 > 0
\end{align*}
for $n\gtrsim k\log(p)$. The warm start guarantee of Lasso for Gaussian design under $\beta$-min condition follows from classical results on support recovery \cite{buhlmann_geer_book}.  
\end{proof}

We will analyze a blocked variant of Gibbs with lazy updates (it is well-known that lazy version of the Markov chain only slows down the convergence by a constant factor). To implement, at step $k$ we perform the following updates. % Since the lazy variant only slows down the convergence rate by a constant factor, we study lazy Markov chains in this paper to simplify our theoretical analysis. 
\begin{algorithm}[H]
\caption{Blocked Gibbs sampler for posterior \eqref{eqn:quasi_posterior}}
\label{alg:gibbs}
\begin{algorithmic}
\REQUIRE{$\beta_k\in \mathbb{R}^p, z_k\in \{0,1\}^p$}
\STATE{Sample $o\sim\text{Bern}(1/2)$}
\IF{$o=1$} 
\STATE{($\beta_{k+1},z_{k+1}) \leftarrow (\beta_k,z_k)$}
\ELSIF{$o=0$}
\STATE{Draw $\beta_{k+1}$ as in \eqref{eqn: beta_update} using \cref{alg:sample_gaussian} for the active part, the inactive part corresponding to $z_k[j]=0$ can be drawn independently}
\STATE{Sample $z_{k+1}^{1,2,3,\dots,p}\vert \beta_{k+1} \sim \pi(z^1|\beta_{k+1},y)\pi(z^2|z^1,\beta_{k+1},y)\pi(z^3|z^1,z^2,\beta_{k+1},y)\cdots$ where some marginalizations need to be done analytically, which is possible in this case of linear model with Gaussian slab}
\ENDIF
\RETURN $\beta_{k+1},z_{k+1}$
\end{algorithmic}
\end{algorithm}

% \begin{enumerate}
% \item With probability $1/2$, don't move
% \item Otherwise, draw $\beta_{k+1}$ as in \eqref{eqn: beta_update}, followed by \qijia{this is not cheap}
% \[z_{k+1}^{1,2,3,\dots,p}\vert \beta_{k+1} \sim \pi(z^1|\beta_{k+1},y)\pi(z^2|z^1,\beta_{k+1},y)\pi(z^3|z^1,z^2,\beta_{k+1},y)\cdots\]
% where some parts will need to be done analytically, which can be done in this case.
% \end{enumerate}
%
Written mathematically, the Markov transition kernel takes the form
\[K(\beta_k,\beta_{k+1})=\sum_{z_{k+1}\in \{0,1\}^p} \pi(z_{k+1}|\beta_k, y)\left(\frac{1}{2}\delta_{\beta_k}(\beta_{k+1})+\frac{1}{2}\pi(\beta_{k+1}|z_{k+1},y)\right)\, .\]
\begin{remark}
The sampling of the $z_{k+1}|\beta_{k+1}$ step in \cref{alg:gibbs} is not particularly cheap, but our focus is on the mixing property of the Markov chain, and in light of the discussion in \Cref{sec:practical}, blocked updates as studied here only give a stronger guarantee in terms of mixing (there could generally be more bottlenecks in the chain).
\end{remark}
We preface with a lemma before stating our main result for the algorithm above. 
\begin{lemma}
\label{lem:prep}
The relative density for two models $\frac{\pi(z_2\vert y)}{\pi(z_1\vert y)}$ where $z_1\subset z_2$ can be shown to be as \eqref{eqn:relative_density_1}-\eqref{eqn:relative_density_2}, and given tolerance $\zeta_0\in (0,1)$, assuming $q/(1-q)\sim 1/p^{\delta+1}$ for some $\delta>0$, $\|X_j\|_2^2=n\; \forall j\in[p]$, we have
\[\|\pi_0 K^k-\pi(\beta|y)\|_{\text{TV}}\leq 2p^{(\delta+1)t} (1+\frac{\tau_1^2\cdot tn}{\sigma^2})^{t/2}(1-\text{SpecGap}_\zeta(K))^{k/2}+\zeta_0/\sqrt{2}\]
for $\zeta=\frac{\zeta_0^2}{8}p^{-2(\delta+1)t} (1+\frac{\tau_1^2\cdot tn}{\sigma^2})^{-t}$ if we initialize with $t$ false-positives and no false negatives. 
\end{lemma}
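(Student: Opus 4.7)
The plan has three stages that interlock.

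First, I compute $\pi(z|y)$ by analytically integrating $\beta$ out of \eqref{eqn:quasi_posterior}. The inactive Gaussian spike integrates against $(1-q)$ per coordinate, and the active block yields a conjugate Gaussian integral; applying the Woodbury identity
\[\bar X\bigl(\bar X^\top \bar X+\tfrac{\sigma^2}{\tau_1^2}I\bigr)^{-1}\bar X^\top = I_n-\bigl(I_n+\tfrac{\tau_1^2}{\sigma^2}\bar X\bar X^\top\bigr)^{-1}\]
then yields the clean closed form
\[\pi(z|y) \propto q^{\|z\|_0}(1-q)^{p-\|z\|_0}\det\bigl(I_n+\tfrac{\tau_1^2}{\sigma^2}\bar X\bar X^\top\bigr)^{-1/2}\exp\!\bigl(-\tfrac{1}{2\sigma^2}y^\top\bigl(I_n+\tfrac{\tau_1^2}{\sigma^2}\bar X\bar X^\top\bigr)^{-1}y\bigr).\]
For nested $z_1\subset z_2$ with $s:=\|z_2\|_0-\|z_1\|_0$, the relative density splits into three factors: a prior factor $(q/(1-q))^s$; a determinant-ratio factor upper-bounded by $(1+\tau_1^2 sn/\sigma^2)^{s/2}$ via the matrix-determinant lemma plus Hadamard applied to the rank-$s$ update together with $\|X_j\|_2^2=n$; and an exponential quadratic form in $y$ whose magnitude is controlled on $\mathcal E_s$ by \eqref{eqn:coherence} and \eqref{eqn:event_2}. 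These three factors constitute \eqref{eqn:relative_density_1}-\eqref{eqn:relative_density_2}.

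Second, I specialize this ratio to $z_1=z^*$ and $z_2=z_0$, where $z_0\supset z^*$ is the warm-start model with $|z_0\setminus z^*|\leq t$. The kernel $K$ is on $\beta$-space with stationary distribution $\pi(\beta|y)$, and the natural initialization is $\pi_0(\beta)=\pi(\beta|z_0,y)$, so
\[\frac{d\pi_0}{d\pi}(\beta) = \frac{\pi(\beta|z_0,y)}{\sum_{z'}\pi(\beta|z',y)\pi(z'|y)} \leq \frac{1}{\pi(z_0|y)} \leq \frac{2\,\pi(z^*|y)}{\pi(z_0|y)}\]
using $\pi(z^*|y)\geq 1/2$ from $\mathcal E_s$. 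Combined with the nested-ratio estimate from stage one and $q/(1-q)\sim p^{-(\delta+1)}$, this yields
\[\sup_\beta \frac{d\pi_0}{d\pi}(\beta)\leq 2\,p^{(\delta+1)t}\bigl(1+\tau_1^2 tn/\sigma^2\bigr)^{t/2}=:M_t.\]

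Third, a restricted-spectral-gap truncation argument in the style of \cite{atchade2021approximate} converts this density bound into the TV estimate. Decompose $\pi_0 = \pi_0\mathbbm{1}_S + \pi_0\mathbbm{1}_{S^c}$ with $S:=\{d\pi_0/d\pi\leq 1/\zeta\}$. On $S$ the density ratio is bounded by $1/\zeta$, so by definition the restricted piece mixes at rate $\text{SpecGap}_\zeta(K)$; the standard $L^2$-to-TV conversion for reversible (lazy) kernels contributes a TV error $\leq M_t(1-\text{SpecGap}_\zeta(K))^{k/2}$, which is the main term in the statement. For the off-$S$ piece, Markov's inequality gives $\pi(S^c)\leq\zeta$, so by Cauchy--Schwarz
\[\pi_0(S^c)=\!\int_{S^c}\!\tfrac{d\pi_0}{d\pi}\,d\pi \leq \sqrt{\pi(S^c)\cdot\!\!\int\bigl(\tfrac{d\pi_0}{d\pi}\bigr)^2 d\pi}\leq \sqrt{\zeta\,M_t^2}\,.\]
Forcing $\sqrt{\zeta\,M_t^2}\leq \zeta_0/\sqrt 2$ gives $\zeta=\zeta_0^2/(2 M_t^2)=\tfrac{\zeta_0^2}{8}\,p^{-2(\delta+1)t}(1+\tau_1^2 tn/\sigma^2)^{-t}$, exactly the stated value.

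The main obstacle is stage one: one must verify that the data-dependent exponential $\exp(\tfrac{1}{2\sigma^2}y^\top[(I+\tau_1^2/\sigma^2 \bar X_{z_1}\bar X_{z_1}^\top)^{-1}-(I+\tau_1^2/\sigma^2 \bar X_{z_2}\bar X_{z_2}^\top)^{-1}]y)$ in the nested ratio is an absolute constant on $\mathcal E_s$. Writing $y=X\beta^*+\sigma\epsilon$ and expanding with $z^*\subset z_1\subset z_2$ and the $\beta$-min condition \eqref{eqn:beta_min}, the $\beta^*$-dominant pieces telescope, leaving precisely the noise-coherence quantity bounded in \eqref{eqn:event_2}. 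Once this closed form is in hand, the remaining two stages are routine.
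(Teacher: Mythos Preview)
Your three-stage plan matches the paper's proof closely, but you have manufactured a difficulty that is not there. In the ratio $\pi(z_2|y)/\pi(z_1|y)$ for $z_1\subset z_2$, the exponential factor \eqref{eqn:relative_density_2} is
\[
\exp\!\Bigl(\tfrac{1}{2\sigma^2}\,y^\top A^{-1}X_{z_2\setminus z_1}\bigl(\tfrac{\sigma^2}{\tau_1^2}I+X_{z_2\setminus z_1}^\top A^{-1}X_{z_2\setminus z_1}\bigr)^{-1}X_{z_2\setminus z_1}^\top A^{-1}y\Bigr)\ \ge\ 1,
\]
since the inner matrix is positive semidefinite. For the warm-start bound you need the \emph{reciprocal} $\pi(z^*|y)/\pi(z_0|y)$, so this factor becomes $\le 1$ and is simply dropped. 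No appeal to $\mathcal{E}_s$, coherence \eqref{eqn:coherence}, the noise event \eqref{eqn:event_2}, or the $\beta$-min condition is needed at this step; the paper's proof never touches them here. Your proposed ``telescoping of $\beta^*$-dominant pieces'' is unnecessary and would be considerably harder to make rigorous than the one-line sign observation.

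Your Stage~3 sketch is also slightly off. With the uniform bound $d\pi_0/d\pi\le M_t$ already in hand and $1/\zeta=2M_t^2/\zeta_0^2>M_t$, the truncation set $S^c=\{d\pi_0/d\pi>1/\zeta\}$ is empty, so the decomposition you describe is vacuous. The paper instead invokes Lemma~1 of \cite{atchade2021approximate} directly, which gives
\[
\|\pi_0K^k-\pi\|_{\text{TV}}^2\le \max\{\operatorname{Var}_\pi(f_0),\,\zeta\|f_0\|_{\pi,\infty}^2\}(1-\text{SpecGap}_\zeta(K))^k+\zeta\|f_0\|_{\pi,\infty}^2,
\]
and then bounds both occurrences of the max by $\|f_0\|_{\pi,\infty}^2\le M_t^2$; the choice $\zeta=\zeta_0^2/(2M_t^2)$ makes the additive term equal $\zeta_0^2/2$. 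This is where the specific form of $\zeta$ actually comes from, not from a tail-mass estimate on $S^c$.
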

\begin{proof}
The posterior marginal over finite state space $z\in \{0,1\}^p$ after integrating out $\beta(z) =[\bar{\beta} \; \bar{\beta}_c]$ is (this is a special feature of conjugate priors)
\begin{align*}
&\pi(z\vert y) \propto q^{\|z\|_0}(1-q)^{p-\|z\|_0} \times \\
&\frac{\tau_0^{\|z\|_0-p}}{\tau_1^{\|z\|_0}}\int_{\mathbb{R}^p} \exp\left(-\frac{1}{2\sigma^2}(\bar{\beta}^\top \bar{X}^{\top}\bar{X}\bar{\beta}-2\bar{\beta}^\top \bar{X}^\top y)-\bar{\beta}^\top D(\frac{1}{2\tau_1^2})\bar{\beta}-\bar{\beta}_c^\top D(\frac{1}{2\tau_0^2})\bar{\beta}_c\right)d\beta\\
&\propto q^{\|z\|_0}(1-q)^{p-\|z\|_0}(\frac{\tau_0}{\tau_1})^{\|z\|_0} (\tau_0^2)^{(p-\|z\|_0)/2} \frac{\exp(\frac{1}{2\sigma^4}y^\top\bar{X}(\frac{1}{\sigma^2}\bar{X}^\top\bar{X}+1/\tau_1^2 \cdot I)^{-1}\bar{X}^\top y)}{\sqrt{\det(\frac{1}{\sigma^2}\bar{X}^\top\bar{X}+1/\tau_1^2\cdot I)}}\\
% &\propto (\frac{q}{1-q})^{\|z\|_0}\det (\sigma^2 \Sigma(z)^{-1})\det(D_z(\tau_1^2))\\
&\propto q^{\|z\|_0}(1-q)^{p-\|z\|_0}(\frac{\tau_0}{\tau_1})^{\|z\|_0} (\tau_0^2)^{(p-\|z\|_0)/2} \frac{\exp(\frac{1}{2\sigma^4}y^\top\bar{X}(\frac{1}{\sigma^2}\bar{X}^\top\bar{X}+1/\tau_1^2 \cdot I)^{-1}\bar{X}^\top y)}{\sqrt{\det(I_n+\tau_1^2/\sigma^2 \bar{X}\bar{X}^\top)}} (\tau_1^2)^{\|z\|_0/2}\\
&\propto (\frac{q}{1-q})^{\|z\|_0}(\frac{\tau_0}{\tau_1})^{\|z\|_0}(\frac{\tau_1}{\tau_0})^{\|z\|_0}\frac{\exp(\frac{1}{2\sigma^4}y^\top\bar{X}(\tau_1^2\cdot I-\tau_1^4\bar{X}^\top(\sigma^2I+\tau_1^2\bar{X}\bar{X}^\top)^{-1}\bar{X})\bar{X}^\top y)}{\sqrt{\det(I_n+\tau_1^2/\sigma^2 \bar{X}\bar{X}^\top)}} \\
&\propto (\frac{q}{1-q})^{\|z\|_0} \frac{\exp(-\frac{1}{2\sigma^2}y^\top (I_n+\tau_1^2/\sigma^2 X_{z}X_{z}^\top)^{-1}y)}{\sqrt{\det(I_n+\tau_1^2/\sigma^2 X_{z}X_{z}^\top)}}
\end{align*}
where we used (1) Gaussian integral $\int_{\mathbb{R}^k}\exp(-\frac{1}{2}x^\top\Sigma^{-1}x)dx=(2\pi)^{k/2}\det(\Sigma)^{1/2}$ and completion of squares; (2) matrix determinant lemma $\det(A+UV^\top)=\det(A)\det(I+V^\top A^{-1}U)$; (3) Woodbury identity $(A+UCV)^{-1}=A^{-1}-A^{-1}U(C^{-1}+VA^{-1}U)^{-1}VA^{-1}$ and the fact that
\begin{align*}
&y^\top (\bar{X}\bar{X}^\top-\tau_1^2\bar{X}\bar{X}^\top(\sigma^2I+\tau_1^2\bar{X}\bar{X}^\top)^{-1}\bar{X}\bar{X}^\top)y\\
&=\sigma^2 y^\top\bar{X}\bar{X}^\top(\sigma^2I+\tau_1^2\bar{X}\bar{X}^\top)^{-1}y=\sigma^2 y^\top \bar{X} (\tau_1^2 \bar{X}^\top \bar{X}+\sigma^2 I)^{-1} \bar{X}^\top y\\
&=\frac{\sigma^2}{\tau_1^2}y^\top (I-\sigma^2(\sigma^2I+\tau_1^2\bar{X}\bar{X}^\top)^{-1}) y
\end{align*}
for the last step. Now if we want to look at the change in posterior for two models $z_1$ and $z_2$ where $z_1\subset z_2$, since both numerator and denominator involve
\[I_n+\tau_1^2/\sigma^2 X_{z_2}X_{z_2}^\top=I_n+\tau_1^2/\sigma^2 X_{z_1}X_{z_1}^\top+\tau_1^2/\sigma^2\sum_{j\colon z_{1,j}=0,z_{2,j}=1} X_jX_j^\top\, ,\]
matrix determinant lemma and Woodbury identity will again let us compute the ratio
\begin{align}
&\frac{\pi(z_2\vert y)}{\pi(z_1\vert y)} = (\frac{q}{1-q})^{\|z_2\|_0-\|z_1\|_0}\times \frac{1}{\sqrt{\det(I+\frac{\tau_1^2}{\sigma^2}X_{z_2-z_1}^\top A^{-1}X_{z_2-z_1})}} \label{eqn:relative_density_1}\\
&\times \exp\left(\frac{1}{2\sigma^2}y^\top A^{-1}X_{z_2-z_1}(\frac{\sigma^2}{\tau_1^2}I+X_{z_2-z_1}^\top A^{-1}X_{z_2-z_1})^{-1}X_{z_2-z_1}^\top A^{-1}y\right)\, , \label{eqn:relative_density_2}
\end{align}
where $A=I_n+\tau_1^2/\sigma^2 X_{z_1}X_{z_1}^\top\succeq I_n$ and $X_{z_2-z_1}$ denotes columns of $X$ for which $z_{1,j}=0$ and $z_{2,j}=1$. Let us denote the initial model as $z_0$, and define
\[f_0(\beta) := \frac{\pi(\beta|z_0,y)}{\pi(\beta|y)}\leq \frac{1}{\pi(z_0|y)}\leq \frac{2\pi(z^*|y)}{\pi(z_0|y)}\]
since $\pi(z^*|y) \geq 1/2$ on the event $\mathcal{E}_s$. This implies using \eqref{eqn:relative_density_1}-\eqref{eqn:relative_density_2} that since $z^*\subset z_0$, denoting the number of initial false positives as $t$, and using the assumptions
\begin{align*}
\|f_0\|_{\pi,\infty} &:= \text{ess}\sup|f_0(\beta)|\; \text{w.r.t}\; \pi(d\beta) \\
&\leq 2p^{(\delta+1)t} \sqrt{\det(I_t+\frac{\tau_1^2}{\sigma^2}X_{z_0-z^*}^\top A^{-1}X_{z_0-z^*})} \leq 2p^{(\delta+1)t} (1+\frac{\tau_1^2\cdot tn}{\sigma^2})^{t/2}\, .
\end{align*}
Using Lemma 1 from \cite{atchade2021approximate} we have for all iterations $k\geq 1$ and initial $\pi_0(d\beta)=\pi(\beta|z_0,y)$,
\begin{align*}
&\|\pi_0 K^k-\pi(\beta|y)\|_{\text{TV}}^2 \\
&\leq \max\left\{\int |f_0(\beta)-\int f_0(\beta)\pi(d\beta)|^2 \pi(d\beta),\zeta\|f_0\|_{\pi,\infty}^2\right\}(1-\text{SpecGap}_\zeta(K))^k+\zeta \|f_0\|_{\pi,\infty}^2\\
&\leq \|f_0\|_{\pi,\infty}^2(1-\text{SpecGap}_\zeta(K))^k+\zeta_0^2/2
\end{align*}
if setting $\zeta=\frac{\zeta_0^2}{8}p^{-2(\delta+1)t} (1+\frac{\tau_1^2\cdot tn}{\sigma^2})^{-t}$ for some $\zeta_0\in(0,1)$ the desired accuracy. 
\end{proof}

With these preparations, it only remains to bound the approximate spectral gap from \cref{lem:prep} to conclude, for which we leverage the framework developed in \cite{atchade2021approximate}. At a high level it states that if when constrained on a subset of models $\bar{z}$ where the posterior mass concentrates, the marginal densities $\pi(\beta|z_1,y),\pi(\beta|z_2,y)$ overlap sufficiently for $z_1,z_2$ on this set that are somewhat close to each other, $\zeta$-spectral gap can be much larger than the classically defined spectral gap without such a restriction (hence tighter resulting bounds).
\begin{hypothesis}
\label{assume:scaling}
We assume for some $\delta>0$, $q/(1-q)\sim 1/p^{\delta+1},\tau_1\sim \sigma p/\sqrt{n}, \tau_0\sim \sigma/\sqrt{n},\|X_j\|_2^2=n$ for all $j\in[p]$. Throughout the paper we consider $q\in (0,1)$ to be fixed, i.e., non-data-adaptive as opposed to empirical Bayes approaches in the literature.
\end{hypothesis}
\begin{proposition}[Convergence Rate for Gibbs Sampler]
\label{thm:gibbs}
Under the event $\mathcal{E}_s$, condition \eqref{eqn:coherence},\eqref{eqn:re},\eqref{eqn:beta_min}, \cref{assume:scaling} and warm start with number of false positives $t\geq 0$ bounded above as 
\[(\frac{1}{p})^{2(1+\delta)t}(\frac{1}{1+tp^2})^t\geq \frac{20}{p^{\frac{\delta}{2} (s+1)}\zeta_0^2} \,,\]
after 
\[k\gtrsim (s+1) p^{(1+\delta)t} (1+tp^2)^{t/2}\exp\left(\frac{ns^2}{\sigma^2}\eta^2+\frac{2\sqrt{n\log(p)}}{\sigma}\eta+\frac{n}{2\sigma^2}\eta^2\right)\log\left(\frac{1}{\zeta_0}\right)\]
steps of \cref{alg:gibbs}, we have $\|\pi_0 K^k-\pi(\beta|y)\|_{\text{TV}} \leq \zeta_0$. In particular, if $s=0, \delta=1$, the iteration complexity is 
\[k\gtrsim p^{2t} (1+tp^2)^{t/2}\exp\left(\left(\frac{\sqrt{n}}{\sqrt{\omega(k)}} \vee \frac{n^2}{\omega^2(k)}\right) \log(p)+\left(\frac{k\mathcal{C}(k)}{\omega(k)} \vee \frac{k^2\mathcal{C}^2(k)}{\omega^2(k)}\right) \log(p)\right)\log\left(\frac{1}{\zeta_0}\right)\,.\]
Each iteration implemented with \cref{alg:sample_gaussian} costs at least $\mathcal{O}(\max\{n^2k,n^3\})$.  % event $\mathcal{E}_0$ happens with probability at least $1-o_n(1)$ over the data
\end{proposition}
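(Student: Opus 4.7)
The plan is to lower-bound $\text{SpecGap}_\zeta(K)$ for the choice of $\zeta$ fixed in \cref{lem:prep} and then plug the resulting bound into the TV estimate from that lemma. Working on the event $\mathcal{E}_s$, the set $\bar{\mathcal{Z}} := \{z\in\{0,1\}^p \colon z^*\subset z,\ \|z\|_0 \leq \|z^*\|_0+s\}$ carries posterior mass at least $1-4p^{-\delta(s+1)/2}$; since $|\bar{\mathcal{Z}}| \leq \binom{p}{s} \leq p^s$ and $\pi(z^*|y)\geq 1/2$, this is the natural typical set on which to measure an approximate spectral gap. Because \cref{alg:gibbs} is a blocked $(\beta,z)$-update, the stationary kernel is a finite mixture of Gaussians indexed by $z\in\bar{\mathcal{Z}}$, so the framework of \cite{atchade2021approximate} applies directly.

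Next I invoke the mixture-decomposition bound from that paper, which lower-bounds $\text{SpecGap}_\zeta(K)$ by a product of (i) the spectral gap of the induced marginal chain on $\bar{\mathcal{Z}}$, whose transition weights are obtained from \eqref{eqn:relative_density_1}-\eqref{eqn:relative_density_2}, and (ii) a minimum pairwise overlap $\inf_{z_1,z_2\in\bar{\mathcal{Z}}} \int \sqrt{\pi(\beta|z_1,y)\pi(\beta|z_2,y)}\, d\beta$ between conditional Gaussians of neighboring models. This splits the argument into a combinatorial piece on $\bar{\mathcal{Z}}$ and an analytic piece comparing Gaussian conditionals.

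For the combinatorial piece, the relative density \eqref{eqn:relative_density_1}-\eqref{eqn:relative_density_2} together with \cref{assume:scaling} gives $\pi(z_2|y)/\pi(z_1|y) \lesssim p^{-(1+\delta)} (1+\tau_1^2 n/\sigma^2)^{-1/2}\exp(\text{quadratic in }y^\top A^{-1} X_{z_2-z_1})$, and the quadratic form is controlled by \eqref{eqn:event_2} paired with \eqref{eqn:coherence}-\eqref{eqn:re}. A canonical-paths/Cheeger argument over $\bar{\mathcal{Z}}$ then produces a conductance lower bound of order $p^{-(1+\delta)s}\exp(-O(k\mathcal{C}(k)/\omega(k)\cdot\log p))$, which in the $s=0,\delta=1$ regime already gives the $\exp((k\mathcal{C}(k)/\omega(k)\vee k^2\mathcal{C}^2(k)/\omega^2(k))\log p)$ term in the stated iteration complexity.

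For the analytic piece, if $z_1,z_2\in\bar{\mathcal{Z}}$ differ in few coordinates I use the closed-form Hellinger identity between the two Gaussians $\mathcal{N}(\Sigma_{z_i}^{-1}\bar{X}_{z_i}^\top y,\sigma^2 \Sigma_{z_i}^{-1})$, expanding via the Woodbury identity around the common submatrix $X_{z^*}$; the $\beta$-min condition \eqref{eqn:beta_min} and the noise control \eqref{eqn:event_2} bound the mean shift by $O(\eta\sqrt{s})$ and the log-determinant mismatch by $O(ns\eta^2/\sigma^2)$, producing the $\exp(ns^2\eta^2/\sigma^2+2\sqrt{n\log p}\,\eta/\sigma+n\eta^2/(2\sigma^2))$ factor. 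The main obstacle is keeping this overlap bound \emph{uniform} over the $O(p^s)$ pairs without losing additional $p$ factors, which is why the events \eqref{eqn:event_1}-\eqref{eqn:event_2} have to be invoked simultaneously at all $z\in\bar{\mathcal{Z}}$. Substituting the product of the two bounds into \cref{lem:prep}, using the warm-start hypothesis on $t$ to ensure $\zeta\|f_0\|_{\pi,\infty}^2\leq \zeta_0^2/2$, and solving $(1-\text{SpecGap}_\zeta(K))^{k/2}\leq \zeta_0/(4\|f_0\|_{\pi,\infty})$ for $k$ yields the claimed iteration complexity; the per-step cost follows from \cref{alg:sample_gaussian}.
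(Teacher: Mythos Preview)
Your high-level plan matches the paper: restrict to $\bar{\mathcal{Z}}$, invoke the mixture spectral-gap machinery of \cite{atchade2021approximate}, and plug into \cref{lem:prep}. The execution of the overlap bound, however, differs. The paper does not use the Hellinger affinity $\int\sqrt{\pi(\beta|z_1)\pi(\beta|z_2)}\,d\beta$; the framework (Theorem~3 in \cite{atchade2021approximate} as invoked here) requires the TV overlap $\int\min\{\pi(\beta|z_1,y),\pi(\beta|z_2,y)\}\,d\beta=:\kappa$ for $z_1\subset z_2$ differing in a single coordinate $j$. Rather than a closed-form Gaussian identity, the paper computes the pointwise ratio $\pi(\beta|z_1,y)/\pi(\beta|z_2,y)$ via \eqref{eqn:relative_density_1}--\eqref{eqn:relative_density_2} together with the raw joint density, obtains the lower bound $\exp(-\tfrac{n}{2\sigma^2}\beta_j^2-\tfrac{1}{\sigma^2}|\beta_j X_j^\top(X_{z_1}\beta_{z_1}-y)|)$, and then integrates against $\pi(\beta|z_2,y)$. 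The decisive step is bounding the marginal mean and variance of the single extra coordinate $\beta_j$ under $\pi(\cdot|z_2,y)$ using block inversion and Woodbury, and \emph{this} is where $\mathcal{C}(k+s)$ and $\omega(k+s)$ enter, through the quantity $\eta$. Your Hellinger route is not wrong in principle, but note that the two conditionals differ structurally in coordinate $j$ (independent spike $\mathcal{N}(0,\tau_0^2)$ versus a slab correlated with the active block), so the Gaussian Hellinger formula has to be applied to the full $p$-dimensional product and the resulting $\tau_0$-dependent determinant ratio handled separately.

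The concrete gap is your attribution of the $\exp\bigl((k\mathcal{C}(k)/\omega(k)\vee k^2\mathcal{C}^2(k)/\omega^2(k))\log p\bigr)$ factor to the combinatorial piece. In the paper the graph contribution to $\text{SpecGap}_\zeta(K)$ is only the diameter $\leq 2s$ and the minimum mass $\min_{z\in\bar{\mathcal{Z}}}\pi(z|y)$; the latter is lower-bounded using just the prior and determinant factors in \eqref{eqn:relative_density_1} (the exponential in \eqref{eqn:relative_density_2} is $\geq 1$ for $z^*\subset z$ and is simply dropped), giving $\pi(z|y)\gtrsim p^{-s(1+\delta)}(1+sp^2)^{-s/2}$ with no appearance of $\mathcal{C}$ or $\omega$. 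All coherence/restricted-eigenvalue dependence in the final rate comes from the analytic overlap $\kappa$ via $\eta$ and the estimate $\|\beta^*\|_1\gtrsim k\sigma\sqrt{\log(p)/n}$ from \eqref{eqn:beta_min}. A canonical-paths or Cheeger argument on $\bar{\mathcal{Z}}$ controlling the quadratic form $y^\top A^{-1}X_{z_2-z_1}(\cdots)^{-1}X_{z_2-z_1}^\top A^{-1}y$ would not recover these quantities in the form stated, so your decomposition would not assemble into the claimed bound.
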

\begin{proof}
On the event $\mathcal{E}_s$, we have that the posterior puts at least $1-\frac{\zeta}{10}=1-\frac{\zeta_0^2}{80}p^{-2(\delta+1)t} (1+\frac{\tau_1^2\cdot tn}{\sigma^2})^{-t}$ fraction of the mass on the set 
\begin{equation}
\label{eqn:posterior}
\pi(z\in\{0,1\}^p\colon z^*\subset z, \|z\|_0\leq \|z^*\|_0+s\vert y)\geq 1-\frac{4}{p^{\frac{\delta}{2}(s+1)}}
\end{equation}
if picking the initial false positives $t$ small enough such that given $s\geq 0,\zeta_0\in (0,1),\delta>0$
\begin{equation}
\label{eqn:warm-start}
(\frac{1}{p})^{2(1+\delta)t}(\frac{1}{1+t\tau_1^2 n/\sigma^2})^t\geq \frac{20}{p^{\frac{\delta}{2} (s+1)}\zeta_0^2} ,
\end{equation}
so the statement of Theorem 3 from \cite{atchade2021approximate} applies (picking $m=\infty, B_i=\mathbb{R}^p$) and we need to find $\kappa>0$ such that $\forall z_1,z_2$ belonging to the set \eqref{eqn:posterior} $=I_0$ that differs in 1 element (so both $z_1, z_2$ have at most $s$ false positives),
\begin{equation}
\label{eqn:kappa}
\int_{\mathbb{R}^p} \min\{\pi(\beta|z_1,y),\pi(\beta|z_2,y)\} d\beta\geq \kappa \,.
\end{equation}
Suppose w.l.o.g $z_1\subset z_2$ where $z_{1,j}=0$ and $z_{2,j}=1$, using \cref{lem:prep}, \eqref{eqn:quasi_posterior} we have for $A=I_n+\tau_1^2/\sigma^2 X_{z_1}X_{z_1}^\top\succeq I_n$ and under \cref{assume:scaling},
\begin{align*}
&\frac{\pi(\beta|z_1,y)}{\pi(\beta|z_2,y)} = \frac{\pi(\beta,z_1|y)}{\pi(z_1|y)} \frac{\pi(z_2|y)}{\pi(\beta,z_2|y)}\\
&= \frac{q}{1-q}\frac{1}{\sqrt{1+\frac{\tau_1^2}{\sigma^2}X_{j}^\top A^{-1}X_{j}}}\exp\left(\frac{1}{2\sigma^2}\frac{(y^\top A^{-1}X_{j})^2}{\frac{\sigma^2}{\tau_1^2}+X_{j}^\top A^{-1}X_{j}}\right)\frac{1-q}{q}\frac{\tau_1}{\tau_0}\exp(\frac{\beta_j^2}{2}(1/\tau_1^2-1/\tau_0^2))\\
&\times \exp\left(-\frac{1}{\sigma^2}y^\top X_j \beta_j+\frac{n}{2\sigma^2}\beta_j^2+\frac{\beta_jX_j^\top X_{z_1}\beta_{z_1}}{\sigma^2} \right) \\
&\geq \frac{p}{\sqrt{1+\frac{\tau_1^2}{\sigma^2}n}}\exp\left(\frac{1}{2\sigma^2}\frac{(y^\top A^{-1}X_{j})^2}{\frac{\sigma^2}{\tau_1^2}+X_{j}^\top A^{-1}X_{j}}-\frac{\beta_j^2}{2}\frac{1}{\tau_0^2}+\frac{\beta_jX_j^\top (X_{z_1}\beta_{z_1}-y)}{\sigma^2}\right)\\
&\geq \exp\left(-\frac{n}{2\sigma^2}\beta_j^2-\frac{|\beta_jX_j^\top (X_{z_1}\beta_{z_1}-y)|}{\sigma^2} \right)\, .
% &\geq \exp\left(-\frac{1}{2\sigma^2}y^\top A^{-1}X_{j}(\frac{\sigma^2}{\tau_0^2}I+X_{j}^\top A^{-1}X_{j})^{-1}X_{j}^\top A^{-1}y\right)\\
% &\geq \exp(-\frac{1}{2\sigma^2}|y^\top A^{-1}X_j|^2\frac{\tau_0^2}{\sigma^2}) 
\end{align*}
Since both $z_1,z_2$ contain $z^*$, it must be the case $j\notin z^*=\text{supp}(\beta^*)$ with $|\text{supp}(\beta^*)|=k$, and as $X_j$ is not part of $z_1$, under event $\mathcal{E}_s$ and \cref{assume:scaling},
\begin{align*}
&\frac{1}{\sigma^2}|\beta_j X_j^\top (X\beta^*+\epsilon- X_{z_1}\beta_{z_1})| \\
&\leq \frac{1}{\sigma^2} (|\beta_jX_j^\top X_s\beta_s|+|\beta_jX_j^\top\epsilon|) \\
&\leq \frac{ns}{\sigma^2} |\beta_j|\|\beta_s\|_1 +\frac{2}{\sigma}|\beta_j|\sqrt{n\log(p)}
\end{align*}
where $X_s$ is the $n$-by-at-most-$s$ matrix composed of columns of $X$ that are in the $z_1$ model (and therefore $z_2$) but not in $z^*$ (these are false positives). 
Now take any $j$ that is not in $z^*$ but is in $z_2$, we know from \cref{prop:sparsify} the marginal distribution $\pi(\beta_j|z_2,y)$ is Gaussian with absolute value of the mean bounded as (using the definition of \eqref{eqn:coherence},\eqref{eqn:re})
\begin{align*}
&\left| e_1^\top \begin{bmatrix}
X_j^\top X_j+\sigma^2/\tau_1^2 & X_j^\top X_{z_2\backslash j}\\
X_{z_2\backslash j}^\top X_j & X_{z_2\backslash j}^\top X_{z_2\backslash j}+\sigma^2/\tau_1^2 \cdot I
\end{bmatrix}^{-1}\begin{bmatrix}
X_j^\top y\\
X_{z_2\backslash j}^\top y
\end{bmatrix}\right|\\
&= \left|\frac{X_j^\top y-X_j^\top X_{z_2\backslash j}(X_{z_2\backslash j}^\top X_{z_2\backslash j}+\sigma^2/\tau_1^2\cdot I)^{-1}X_{z_2\backslash j}^\top y}{\frac{\sigma^2}{\tau_1^2}+X_j^\top(I+\tau_1^2/\sigma^2\cdot X_{z_2\backslash j}X_{z_2\backslash j}^\top)^{-1}X_j}\right|\\
&=\left| \frac{X_j^\top (I+\tau_1^2/\sigma^2\cdot X_{z_2\backslash j}X_{z_2\backslash j}^\top)^{-1} (X\beta^*+\epsilon)}{\frac{\sigma^2}{\tau_1^2}+X_j^\top(I+\tau_1^2/\sigma^2\cdot X_{z_2\backslash j}X_{z_2\backslash j}^\top)^{-1}X_j}\right|\leq \frac{\sigma \sqrt{2(s+1)n\log(p)}+\|\beta^*\|_1\mathcal{C}(k+s)}{\omega(k+s)}
\end{align*}
% \[e_j^\top(X_{1-z_2}^\top X_{1-z_2}+\sigma^2/\tau_0^2 I)^{-1}X_{1-z_2}^\top y\]
by H\"older and triangle inequality and variance 
\begin{align*} 
&\sigma^2[(X_{z_2}^\top X_{z_2}+\sigma^2/\tau_1^2 I)^{-1}]_{jj}\\
&=\frac{\sigma^2}{X_j^\top X_j+\frac{\sigma^2}{\tau_1^2}-X_j^\top X_{z_2\backslash j}(X_{z_2\backslash j}^\top X_{z_2\backslash j}+\sigma^2/\tau_1^2 \cdot I)^{-1}X_{z_2\backslash j}^\top X_j}\\
&=\frac{\sigma^2}{\frac{\sigma^2}{\tau_1^2}+X_j^\top(I+\tau_1^2/\sigma^2\cdot X_{z_2\backslash j}X_{z_2\backslash j}^\top)^{-1}X_j} \leq \frac{\sigma^2}{\omega(k+s)}
\end{align*}
where we used matrix block inversion and Woodbury identity. Noting that since these two expressions are independent of the choice of $j$, which in particular means that the upper bound holds for any such $j$, we can write $\beta_j=\mu_j+\sigma_j z$ for $z\sim\mathcal{N}(0,1)$, and
\begin{align*}
&\int_{\mathbb{R}^p} \min\{\pi(\beta|z_1,y),\pi(\beta|z_2,y)\} d\beta=\mathbb{E}_{\pi(\beta|z_2,y)}\left[\min\{\frac{\pi(\beta|z_1,y)}{\pi(\beta|z_2,y)},1\}\right] \\
&\geq \mathbb{E}_{\beta_s}\left[\mathbb{E}_{\beta_j}\left[\exp\left(-\frac{n}{2\sigma^2}\beta_j^2-\frac{ns}{\sigma^2} |\beta_j|\|\beta_s\|_1 -\frac{2}{\sigma}|\beta_j|\sqrt{n\log(p)} \right)\vert \beta_s \right]\right]\\
&\geq \frac{1}{2}\mathbb{E}_{\beta_s}\left[\exp\left(-\frac{ns}{\sigma^2}(|u_j|+\sigma_j)\|\beta_s\|_1-\frac{2\sqrt{n\log(p)}}{\sigma}(|u_j|+\sigma_j)-\frac{n}{2\sigma^2}(|u_j|+\sigma_j)^2\right)\right]\\
&\geq \frac{1}{2}\exp\left(-\frac{ns}{\sigma^2}(|u_j|+\sigma_j)\mathbb{E}[\|\beta_s\|_1]-\frac{2\sqrt{n\log(p)}}{\sigma}(|u_j|+\sigma_j)-\frac{n}{2\sigma^2}(|u_j|+\sigma_j)^2\right)\\
&\geq \frac{1}{2}\exp\left(-\frac{ns^2}{\sigma^2}\eta^2-\frac{2\sqrt{n\log(p)}}{\sigma}\eta-\frac{n}{2\sigma^2}\eta^2\right)
\end{align*}
where we used that (1) for any non-negative function $g$, $\mathbb{E}[g(z)]\geq \mathbb{P}(|z|\leq 1)\min_{z:|z|\leq 1} g(z)$; (2) Jensen's inequality; (3) for any coordinate $j$ of $\beta_s$, 
\begin{align*}
\mathbb{E}[|\beta_s[j]|]\leq \sqrt{\mathbb{E}[\beta_s[j]^2]} &\leq \sqrt{\frac{\sigma^2}{\omega(k+s)}+\left(\frac{\sigma \sqrt{2(s+1)n\log(p)}+\|\beta^*\|_1\mathcal{C}(k+s)}{\omega(k+s)}\right)^2}\\
&\leq \frac{\sigma}{\sqrt{\omega(k+s)}}+\frac{\sigma \sqrt{2(s+1)n\log(p)}+\|\beta^*\|_1\mathcal{C}(k+s)}{\omega(k+s)} =: \eta\, ,
\end{align*}
(4) it holds that $|\mu_j|+\sigma_j\leq \eta$.
% \[\exp(\frac{\beta_j^2}{2}(\frac{1}{\tau_1^2}-\frac{1}{\tau_0^2}-\frac{n}{\sigma^2}))\sim \exp(-\frac{n\beta_j^2}{2\sigma^2 p^2})
% \leq \|\beta^*\|_1\mathcal{C}(k+s)+2\sigma \sqrt{(s+1)n\log(p)}\]
Therefore one can invoke Theorem 3 with 
\[\kappa= \frac{1}{2}\exp\left(-\frac{ns^2}{\sigma^2}\eta^2-\frac{2\sqrt{n\log(p)}}{\sigma}\eta-\frac{n}{2\sigma^2}\eta^2\right)\]
for \eqref{eqn:kappa}. Using that the diameter of the graph constructed on $I_0$ (where $z_1,z_2\in I_0$ differing in 1 element) is bounded above by $2s$, we reach
\begin{align*}
\text{SpecGap}_\zeta(K) &\geq \frac{\kappa}{4s}\min_{z: z^*\subset z, \|z\|_0\leq \|z^*\|_0+s} \pi(z|y)\\
&\gtrsim \frac{1}{s\zeta_0 p^{\delta (s+1)/4}}\exp\left(-\frac{ns^2}{\sigma^2}\eta^2-\frac{2\sqrt{n\log(p)}}{\sigma}\eta-\frac{n}{2\sigma^2}\eta^2\right)
\end{align*}
where we used the relative ratio from \cref{lem:prep}, for any $z$ with at most $s$ false positives,
\[\pi(z|y)\geq \frac{1}{2}\frac{\pi(z|y)}{\pi(z^*|y)}\geq \frac{1}{2p^{s(\delta+1)}}(1+\frac{\tau_1^2 ns}{\sigma^2})^{-s/2}\geq \frac{1}{2p^{s(\delta+1)}}(1+p^2 s)^{-s/2}\gtrsim p^{-\frac{\delta(s+1)}{4}}\zeta_0^{-1}\, .\]
Putting together with \cref{lem:prep,eqn:warm-start} now yields $\|\pi_0 K^k-\pi(\beta|y)\|_{\text{TV}} \leq \zeta_0$ when 
\begin{align*}
k&\gtrsim (s+1)\zeta_0 p^{\frac{\delta(s+1)}{4}}\exp\left(\frac{ns^2}{\sigma^2}\eta^2+\frac{2\sqrt{n\log(p)}}{\sigma}\eta+\frac{n}{2\sigma^2}\eta^2\right) \log\left(\frac{p^{(\delta+1)t}(1+p^2 t)^{t/2}}{\zeta_0}\right)\\
&\gtrsim  (s+1) p^{(1+\delta)t} (1+tp^2)^{t/2}\exp\left(\frac{ns^2}{\sigma^2}\eta^2+\frac{2\sqrt{n\log(p)}}{\sigma}\eta+\frac{n}{2\sigma^2}\eta^2\right)\log(\frac{1}{\zeta_0})\,,
\end{align*}
where we hide a poly-logarithmic factor in $p$. In the case of $s=0, \delta=1$, the posterior puts most of the mass on $z^*$, and we have
\[k\gtrsim p^{2t} (1+tp^2)^{t/2}\exp\left(\left(\frac{\sqrt{n}}{\sqrt{\omega(k)}} \vee \frac{n^2}{\omega^2(k)}\right) \log(p)+\left(\frac{k\mathcal{C}(k)}{\omega(k)} \vee \frac{k^2\mathcal{C}^2(k)}{\omega^2(k)}\right) \log(p)\right)\log(\frac{1}{\zeta_0})\,,\]
where we used the separation condition on the signal \eqref{eqn:beta_min} to estimate $\|\beta^*\|_1 \geq k\sigma \sqrt{\log(p)/n}$.
\end{proof} 

\cref{thm:gibbs} therefore implies that warm-start (made possible by frequentist estimators) is one way of getting around the hardness result of \cite{yang2016computational}. Other than the less-than-ideal scaling with the number of false positives $t$ (which capture the bottleneck moving in between lower and higher density regions), we'd like to note the exponential dependence of the mixing time on the coherence $\mathcal{C}(k)$ and restricted eigenvalue parameter $\omega(k)$ of the design matrix $X$ -- these won't be present if not due to spectral gap considerations, and it shows up even with warm start.  

\subsection{Spike-and-Slab for Random Design}
We consider a slightly different task in this section where the goal is to sample from a posterior $\pi(\beta\vert y)$ of the following form: given $y$ and assume $X_{i,j}\sim\mathcal{N}(0,1)$ independently, 
\[
\pi(\beta|y) \propto \sum_{z\in \{0,1\}^p}  \int_{X\in \mathbb{R}^{n\times p}} \exp(-\frac{1}{2\sigma^2}\|y-X\beta\|_2^2)\, \mu_{\text{G}}(dX)\cdot \prod_{j=1}^p ((1-q)G_0(\beta_j))^{1-z_j}(qG_1(\beta_j))^{z_j}%\nu_s^{\otimes p}(d\beta)
\]
with spike-and-slab prior on the parameter $\beta\in\mathbb{R}^p$. This is closer to random design setup where $y=X\beta+\epsilon$ for both $X,y$ a random sample as opposed to just $y$, and one could be interested in the performance of $\hat{\beta}\sim \pi(\beta\vert y)$ on future pairs of $(X,y)$ from the same model. The Gaussian i.i.d entry assumption of course hardly holds in practice but it may serve as a good proxy for some class of design matrix. The posterior, shown below in \cref{lem:random_design}, is only a function of $y$ (therefore no expensive matrix inversion involved in the algorithm), and if $q=0$, the density only depends on the magnitude $\|\beta\|$ which means that it's rotationally invariant (i.e., equal probability over sphere of fixed radius). For $q\neq 0$, due to the combinatorial nature of the mixture it introduces challenge for high-dimensional sampling -- na\"ively it could be exponential in $p$. 
\begin{lemma} 
\label{lem:random_design}
The posterior with continuous Gaussian Spike-and-Slab prior under random design takes the form (with $\tau_1 \gg \tau_0$)
% \[\mu(dx)=\frac{1}{Z(y)} \frac{1}{(\|x\|^2+1)^{n/2}}\exp(\|y\|^2\frac{\|x\|^2}{1+\|x\|^2})\prod_{i=1}^n \left((1-s)\frac{1}{\sqrt{2\pi}}\exp(-x_i^2/2)+s\delta_0(x_i)\right)\]
\begin{align*} 
&\pi(\beta,z | y)\propto\\
&\frac{\sigma^n}{(\|\beta\|^2+\sigma^2)^{n/2}}\exp(\|y\|^2\frac{ \|\beta\|^2}{2\sigma^4+2\sigma^2\|\beta\|^2})\prod_{j=1}^p [\frac{1-q}{\tau_0}\exp(-\beta_j^2/2\tau_0^2)]^{1-z_j}\cdot[\frac{q}{\tau_1}\exp(-\beta_j^2/2\tau_1^2)]^{z_j}
\end{align*}
which is non-log-concave, but it is amenable to Gibbs updates (that is known to be reversible).
\end{lemma}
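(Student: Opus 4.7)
The task reduces to integrating out the Gaussian design matrix $X$ from the joint density, so the plan is a direct marginalization followed by a cosmetic rewriting of the exponent. Because the rows $X_i\in\mathbb{R}^p$ are i.i.d.\ $\mathcal{N}(0,I_p)$ and the likelihood factorizes as $\exp(-\|y-X\beta\|^2/(2\sigma^2))=\prod_{i=1}^n \exp(-(y_i-X_i^\top\beta)^2/(2\sigma^2))$, the $X$-integral factors over rows and I can handle one row at a time. For fixed $\beta$, the $X_i$-dependent part of the exponent is quadratic, namely
\[
-\tfrac{1}{2}X_i^\top\!\bigl(I_p+\beta\beta^\top/\sigma^2\bigr)X_i+(y_i/\sigma^2)\beta^\top X_i-y_i^2/(2\sigma^2),
\]
so a single Gaussian integral, combined with the Sherman--Morrison formula $(I_p+\beta\beta^\top/\sigma^2)^{-1}=I_p-\beta\beta^\top/(\sigma^2+\|\beta\|^2)$ and the matrix determinant lemma $\det(I_p+\beta\beta^\top/\sigma^2)=1+\|\beta\|^2/\sigma^2$, gives a closed form. (Equivalently, one can skip the calculation by noting that marginally $X_i^\top\beta\sim\mathcal{N}(0,\|\beta\|^2)$ so that $y_i\mid\beta\sim\mathcal{N}(0,\sigma^2+\|\beta\|^2)$ independently across $i$.)

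Carrying this out for each row and taking the product yields
\[
\int \exp\!\bigl(-\tfrac{1}{2\sigma^2}\|y-X\beta\|^2\bigr)\,\mu_{\mathrm{G}}(dX)\;=\;\frac{\sigma^n}{(\sigma^2+\|\beta\|^2)^{n/2}}\exp\!\left(-\frac{\|y\|^2}{2(\sigma^2+\|\beta\|^2)}\right),
\]
where I have kept the $\sigma^n$ so as to match the stated form. To reconcile with the exponent $\|y\|^2\|\beta\|^2/(2\sigma^4+2\sigma^2\|\beta\|^2)$ in the lemma, I multiply and divide by the $\beta$-independent factor $\exp(\|y\|^2/(2\sigma^2))$ (which is absorbed into the proportionality constant since $y$ is fixed) and use the identity
\[
\frac{1}{\sigma^2}-\frac{1}{\sigma^2+\|\beta\|^2}=\frac{\|\beta\|^2}{\sigma^2(\sigma^2+\|\beta\|^2)}.
\]
Multiplying the resulting marginal by the spike-and-slab prior $\prod_{j=1}^p[(1-q)G_0(\beta_j)]^{1-z_j}[qG_1(\beta_j)]^{z_j}$ with $G_0=\mathcal{N}(0,\tau_0^2),\,G_1=\mathcal{N}(0,\tau_1^2)$ reproduces the displayed density exactly. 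Reversibility of a Gibbs chain built from this joint target holds by the standard detailed-balance argument for coordinate-wise conditional updates and needs no separate verification.

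There is no real obstacle in the argument: the entire content is a Gaussian marginalization across the $n$ independent rows of $X$, and the only subtlety is the bookkeeping of $\beta$-independent factors when converting between the equivalent forms $-\|y\|^2/(2(\sigma^2+\|\beta\|^2))$ and $+\|y\|^2\|\beta\|^2/(2\sigma^2(\sigma^2+\|\beta\|^2))$ of the exponent. Non-log-concavity of the resulting density is then immediate by inspection, since the prefactor $(\sigma^2+\|\beta\|^2)^{-n/2}$ together with the $\|y\|^2\|\beta\|^2/(2\sigma^2(\sigma^2+\|\beta\|^2))$ term is not concave in $\beta$, even before the mixture structure over $z$ is introduced.
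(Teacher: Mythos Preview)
Your proposal is correct and takes essentially the same approach as the paper: factor the $X$-integral over the independent rows, evaluate each row via a Gaussian integral with Sherman--Morrison and the matrix determinant lemma, and multiply by the spike-and-slab prior. Your parenthetical shortcut (observing directly that $y_i\mid\beta\sim\mathcal{N}(0,\sigma^2+\|\beta\|^2)$) is a nice alternative the paper does not mention; conversely, the paper goes further than you do on the two auxiliary claims, writing out the Gibbs conditionals $\pi(\beta\mid z,y)$ and $\pi(z\mid\beta,y)$ explicitly and establishing non-log-concavity by computing the Hessian of $-\log\pi(\beta\mid z,y)$ and exhibiting a parameter regime where it fails to be positive semidefinite, rather than appealing to inspection.
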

\begin{proof}
% The Gibbs sampler performance depends on correlation between coordinates (lipschitzness in $u$?). 
We calculate, since the entries of $X$ are assumed to be independent,
\begin{align*}
&\int_X \exp\left(-\frac{1}{2\sigma^2}\|y-X\beta\|_2^2\right)\, \mu_{\text{G}}(dX)\\
&\propto\prod_{i=1}^n \left[\int_{\mathbb{R}^p} \exp(\frac{1}{\sigma^2}y_ix_i^\top \beta-\frac{1}{2\sigma^2}\beta^\top x_ix_i^\top\beta-\frac{1}{2}\|x_i\|_2^2) \, dx_i\right]\\
&=\prod_{i=1}^n\left[\int_{\mathbb{R}^p} \exp(\frac{1}{\sigma^2}y_i\beta^\top x_i-\frac{1}{2\sigma^2}x_i^\top(\beta\beta^\top+\sigma^2 I)x_i) \, dx_i\right]\\
&= \prod_{i=1}^n \exp(\frac{1}{2\sigma^2}y_i^2 \beta^\top(\beta\beta^\top+\sigma^2 I)^{-1}\beta) \times\\
&\quad\quad\quad \int_{\mathbb{R}^p} \exp(-\frac{1}{2\sigma^2}[x_i-y_i(\beta\beta^\top+\sigma^2I)^{-1}\beta]^\top(\beta\beta^\top+\sigma^2I)[x_i-y_i(\beta\beta^\top+\sigma^2I)^{-1}\beta]) dx_i\\
&\propto \prod_{i=1}^n \exp\left(\frac{y_i^2}{2\sigma^2}\cdot \frac{1/\sigma^2\|\beta\|_2^2}{1+1/\sigma^2\|\beta\|_2^2}\right)\sqrt{\det(\sigma^2(\beta\beta^\top+\sigma^2 I)^{-1})}\\
&\propto \prod_{i=1}^n  \exp\left(\frac{y_i^2}{2\sigma^2}\cdot \frac{1/\sigma^2\cdot \|\beta\|_2^2}{1+1/\sigma^2\cdot \|\beta\|_2^2}\right)\frac{\sigma^p}{\sqrt{\|\beta\|^2+\sigma^2}\sigma^{(p-1)}}
\end{align*}
where we used Gaussian integral and the Sherman–Morrison formula, as claimed. 
% With probability $s$, set $x_i=0$, and with probability $1-s$, draw from
% \[\rho_{x_i}\propto\frac{1}{(\|u_{\backslash i}+x_i\|^2+1)^{n/2}}\exp(\|y\|^2\frac{\|u_{\backslash i}+x_i\|^2}{1+\|u_{\backslash i}+x_i\|^2}-x_i^2/2)\]
Gibbs update alternate between
\begin{align}
\pi (\beta\vert z,y) &\propto \frac{\sigma^n}{(\|\beta\|^2+\sigma^2)^{n/2}}\exp(\|y\|^2\frac{ \|\beta\|^2}{2\sigma^4+2\sigma^2\|\beta\|^2})\mathcal{N}(\beta;0,D^{-1}) \nonumber\\
%\propto \frac{1}{(\|\beta\|^2+1)^{p/2}}\exp(\|y\|^2\frac{\|\beta\|^2}{1+\|\beta\|^2})\mathcal{N}(\beta;0,D^{-1})\\
&\propto \frac{\sigma^n}{(\|\beta\|^2+\sigma^2)^{n/2}}\exp(\|y\|^2\frac{ \|\beta\|^2}{2\sigma^4+2\sigma^2\|\beta\|^2}-\frac{1}{2}\beta^\top D(z)\beta) \label{eqn:gibbs_1}
\end{align}
where $D(z):=\text{Diag}(z\tau_1^{-2}+(1_p-z)\tau_0^{-2})$ is a positive-definite diagonal matrix, and
\begin{align}
\pi(z\vert \beta,y) &\propto \prod_{j=1}^p (q\mathcal{N}(\beta_j;0,\tau_1^2))^{z_j}\cdot((1-q)\mathcal{N}(\beta_j;0,\tau_0^2))^{1-z_j} \nonumber\\
&\sim \prod_{j=1}^p \text{Bern}\left(z_j;\frac{q\mathcal{N}(\beta_j;0,\tau_1^2)}{q\mathcal{N}(\beta_j;0,\tau_1^2)+(1-q)\mathcal{N}(\beta_j;0,\tau_0^2)}\right) \nonumber\\
&= \prod_{j=1}^p Q_j^{z_j}(1-Q_j)^{1-z_j}\quad\text{for}\quad Q_j = \frac{1}{1+\frac{1-q}{q}\frac{\tau_1}{\tau_0}\exp(\frac{1}{2}(1/\tau_1^2-1/\tau_0^2)\beta_j^2)}
\end{align}
is product of independent Bernoulli's that can be sampled in parallel.

% One would recognize the term in front of exp as multivariate Student distribution, which is not convex. 

The marginal over $\beta$ is not log-concave therefore standard off-the-shelf sampler (e.g., Langevin, HMC etc.) doesn't come with efficiency guarantee, even in continuous time. To see this, we simply calculate the Hessian for the negative log density in \eqref{eqn:gibbs_1}, 
\[-\nabla \log \pi(\beta|z,y)=\frac{n}{\|\beta\|^2+\sigma^2}\beta-\|y\|^2(\frac{1}{\sigma^4+\sigma^2\|\beta\|^2}-\frac{\|\beta\|^2}{(\sigma^3+\sigma\|\beta\|^2)^2})\beta+D(z)\beta\]
for some $\frac{1}{\tau_1^2}I \preceq D(z) \preceq \frac{1}{\tau_0^2} I$. And
\begin{align*}
-\nabla^2\log \pi(\beta|z,y) &= \left(\frac{n}{\|\beta\|^2+\sigma^2}-\frac{\|y\|^2}{\sigma^4+\sigma^2\|\beta\|^2}+\frac{\|y\|^2\|\beta\|^2}{(\sigma^3+\sigma\|\beta\|^2)^2}\right)I+D(z)\\
&-\left(\frac{2\sigma\|\beta\|^2\|y\|^2-2\sigma^3\|y\|^2}{(\sigma^3+\sigma\|\beta\|^2)^3}-\frac{2\|y\|^2}{(\sigma^3+\sigma\|\beta\|^2)^2}+\frac{2n}{(\|\beta\|^2+\sigma^2)^2}\right)\beta\beta^\top
\end{align*}
which we can see is not always positive semi-definite on the entire domain of $\beta$, e.g., for a counter-example one could consider $\sigma^2 \ll \|\beta\|^2, \|y\|^2/\sigma^2 \ll n, \tau_0^2 \gg \|\beta\|^2/n$. Therefore the posterior $\pi(\beta|y)$ in this case is in fact a mixture of \emph{non}-log-concave measures, unlike the fixed design case in \cref{prop:sparsify}.
\end{proof}

\subsubsection{Inner Step Implementation of \eqref{eqn:gibbs_1} for Gibbs}
As it turns out target that has a density with respect to the Gaussian measure is somewhat easy to sample from. Consider the problem of sampling from the \emph{un-normalized} density $\pi(x)\propto f(x)\mathcal{N}(0,\gamma I)$ for $f>0$, where one can think of the prior as being Gaussian, and is performing optimal transport from $\mathcal{N}(0,\gamma I)$ to $\pi$ in the space of probability measures. Schr\"odinger bridge admits closed-form expression as an SDE if starting at the origin at $t=0$. It is known from \cite{schrodinger, Raginsky} that
%(they worked with the setup with canonical Gaussian measure but it's not hard to check that the argument generalizes)
\[Q^{\pi}:=\arg\min_{Q\in \mathcal{M}^\pi} \text{KL}(Q\vert\vert P)\]
where $\mathcal{M}^{\pi}=\{Q:Q_0=\delta_0, Q_1=\pi\}$ the set of distributions with the two time marginals pinned at $t=0$ and $t=1$ end points and $P$ the reference Wiener measure associated with the process 
\[dX_t = \sqrt{\gamma} dW_t, \; X_0 \sim \delta_0\,,\]
is governed by an SDE with time-varying F\"ollmer drift (i.e., depends on both $X_t$ and $t$, unlike Langevin):
\begin{equation}
\label{eqn:sde_schrodinger}
dX_t = \nabla_X \log \mathbb{E}_Z [f(X_t+\sqrt{1-t}Z)]dt+ \sqrt{\gamma}  dW_t, \quad X_0=0,\, t\in [0,1]
\end{equation}
for $Z\sim\mathcal{N}(0,\gamma I)$. Using Stein's lemma (i.e., Gaussian integration by parts) this is the same as
\[dX_t =\frac{\mathbb{E}_Z [Z\cdot f(X_t+\sqrt{1-t}Z)]}{ \sqrt{1-t}\cdot\mathbb{E}_Z [f(X_t+\sqrt{1-t}Z)]}dt+ \sqrt{\gamma} dW_t, \quad X_0=0\, .\]
At $t=1$ the backward heat semigroup/convolution kernel of \eqref{eqn:sde_schrodinger} localizes, but the crucial difference from (overdamped) Langevin dynamics is that it reaches target $\pi$ in finite time, compared to Langevin that reaches target as $t\rightarrow \infty$ in infinite time horizon (but has arbitrary initialization under ergodicity). And without the drift (i.e., the control), one gets Brownian motion which indeed becomes $\mathcal{N}(0,\gamma I)$ at time $t=1$. % To check the $\rho_1 = \pi$, one could also see that the Fokker-Planck equation for the SDE \eqref{eqn:sde_schrodinger} is 
% \[\dot{\rho}_t=\nabla \cdot\left(\rho_t D^{-1}\nabla \log \frac{\rho_t}{e^{-x^\top D x}\mathbb{E}_Z[f(x+\sqrt{1-t}Z)]}\right)=D^{-1} \Delta \rho_t-\nabla\cdot(\rho_t\nabla\log(\mathbb{E}_Z[f(x+\sqrt{1-t}Z)]))\]

In continuous time, no convexity assumption on $f$ is needed for convergence, thanks to the optimal stochastic control interpretation \cite{Raginsky}. The general problem of arbitrary endpoints with general reference measure will involve forward-backward iterative scheme for reaching a solution, but the particular case under consideration has a convenient analytical form \eqref{eqn:sde_schrodinger}. In the case of Wiener measure as the reference measure, the solution to the Schr\"odinger bridge problem is also intimately connected to the entropy-regularized optimal transport (with quadratic cost) between the two time marginals. 

The following is a sanity check that discretization of the SDE is stable for the particular choice of $f$ as demanded by \cref{lem:random_design}, therefore one could hope to simply implement the inner step \eqref{eqn:gibbs_1} of the Gibbs sampler via e.g., Euler-Maruyama discretization:
\begin{equation}
\label{eqn:em_schrodinger}
X_{k+1} = X_k+h\frac{\frac{1}{S}\sum_{i=1}^S v_i\cdot f(X_k+\sqrt{1-kh}v_i)}{ \sqrt{1-kh}\cdot \frac{1}{S}\sum_{i=1}^S f(X_k+\sqrt{1-kh}v_i)} + \sqrt{\gamma h}  Z_k, \quad X_0=0
\end{equation}
for $v_i\sim\mathcal{N}(0,\gamma I)$ and $Z_k\sim\mathcal{N}(0,I)$ independent. Putting things together gives the following algorithm at iteration $k$.
\begin{algorithm}[H]
\caption{Gibbs Sampler for Random Design Spike-and-Slab}
\label{alg:gibbs_random}
\begin{algorithmic}
\REQUIRE{$\beta_k\in \mathbb{R}^p, z_k\in \{0,1\}^p$}
\FOR{$j=1$ to $p$}
\STATE{Sample $z_{k+1}^j\sim\text{Bern}(Q_j)$ for $Q_j=(1+\frac{1-q}{q}\frac{\tau_1}{\tau_0}\exp(\frac{1}{2}(1/\tau_1^2-1/\tau_0^2)\beta_k[j]^2))^{-1}$ in parallel}
\ENDFOR
\STATE{Draw $\beta_{k+1}$ by running $1/h$ steps of \eqref{eqn:em_schrodinger} for $f(\cdot)$ defined in \cref{lem:schrodinger} where $D(z_{k+1})=\text{Diag}(z_{k+1}\tau_1^{-2}+(1_p-z_{k+1})\tau_0^{-2})$}
\RETURN $\beta_{k+1},z_{k+1}$
\end{algorithmic}
\end{algorithm}
\begin{lemma}
\label{lem:schrodinger}
Between $t\in(0,1)$, for any $n>2$ and $\sigma>0$
\[f(\beta_t)=\frac{\sigma^n}{(\|\beta_t\|^2+\sigma^2)^{n/2}}\exp(\|y\|^2\frac{ \|\beta_t\|^2}{2\sigma^4+2\sigma^2\|\beta_t\|^2}-\frac{1}{2}\beta_t^\top  D(z) \beta_t+\frac{1}{2\gamma}\|\beta_t\|^2)\]
and the drift $b(\beta_t,t):=\nabla_\beta \log \mathbb{E}_{Z\sim\mathcal{N}(0,\gamma I)} [f(\beta_t+\sqrt{1-t}Z)]$ is Lipschitz in $\beta_t$, assuming $\frac{1}{\tau_1^2}I \preceq D(z) \preceq \frac{1}{\tau_0^2} I$ and $\gamma > \tau_0^2$.
% \qijia{need truncation so $\|\beta_t\|$ doesn't go to infinity but can take it sufficiently large such that it doesn't change the distribution} 
\end{lemma}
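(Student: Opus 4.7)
My plan is to show the Jacobian $\nabla_\beta b(\beta,t)$ has bounded operator norm for each $t\in(0,1)$, which delivers the Lipschitzness. First I will rewrite $g(\beta):=\mathbb{E}_Z[f(\beta+\sqrt{1-t}Z)]$ as a Gaussian convolution: setting $s:=\gamma(1-t)$ and substituting $Y=\beta+\sqrt{1-t}Z$ gives $g(\beta)=\int f(Y)\,\mathcal{N}(Y;\beta,sI)\,dY$, so Tweedie's identity / Gaussian integration by parts yields $b(\beta,t)=s^{-1}(\mathbb{E}_{\nu_\beta}[Y]-\beta)$ with the tilted measure $\nu_\beta(Y)\propto f(Y)\mathcal{N}(Y;\beta,sI)$. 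A second differentiation in $\beta$ produces
\[
\nabla_\beta b(\beta,t) \;=\; s^{-2}\,\mathrm{Cov}_{\nu_\beta}(Y) \;-\; s^{-1}I,
\]
reducing the problem to bounding $\|\mathrm{Cov}_{\nu_\beta}(Y)\|_{\mathrm{op}}$ uniformly in $\beta\in\mathbb{R}^p$.

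Next I will absorb the quadratic part $\exp(\tfrac{1}{2}Y^\top(\gamma^{-1}I-D(z))Y)$ of $\log f$ into the Gaussian tilt by completing the square, which recasts $\nu_\beta$ as
\[
\nu_\beta(Y) \;\propto\; h(Y)\,\mathcal{N}\!\big(Y;\mu(\beta),K^{-1}\big),\quad K:=\tfrac{t}{\gamma(1-t)}\,I + D(z),\quad \mu(\beta):=s^{-1}K^{-1}\beta,
\]
where $h(Y)=\sigma^n(\|Y\|^2+\sigma^2)^{-n/2}\exp(\|y\|^2\|Y\|^2/(2\sigma^2(\sigma^2+\|Y\|^2)))$ collects the bounded, radial, non-quadratic remainder of $f$. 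The assumptions $\tau_1^{-2}I\preceq D(z)\preceq\tau_0^{-2}I$ together with $t\in(0,1)$ guarantee $K\succ 0$ with eigenvalues sandwiched between $\tau_1^{-2}+t/(\gamma(1-t))$ and $\tau_0^{-2}+t/(\gamma(1-t))$, so the Gaussian component is strongly log-concave with $\|K^{-1}\|_{\mathrm{op}}\leq\tau_1^2$.

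To close the estimate, I will produce uniform Hessian bounds on $\log h$. Direct differentiation of its two summands reduces the task to elementary one-variable maximizations over $u=\|Y\|^2/\sigma^2\geq 0$ of expressions like $(u-1)/(1+u)^2$ and $(3u-1)/(1+u)^3$ (the same flavor of calculation already carried out in the proof of \cref{lem:random_design}), yielding an explicit constant $C_0=C_0(n,\sigma,\|y\|)$ with $-C_0 I\preceq\nabla^2\log h(Y)\preceq C_0 I$ uniformly in $Y$. Consequently $-\nabla^2\log\nu_\beta(Y)\succeq (\lambda_{\min}(K)-C_0)I$, and whenever $\lambda_{\min}(K)>C_0$ the Brascamp--Lieb inequality gives $\mathrm{Cov}_{\nu_\beta}(Y)\preceq(\lambda_{\min}(K)-C_0)^{-1}I$ uniformly in $\beta$, which plugged back into the expression for $\nabla_\beta b$ yields the Lipschitz constant.

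The main obstacle is the regime $\lambda_{\min}(K)\leq C_0$ (small $t$ combined with large $\tau_1$), where $\nu_\beta$ need not be globally log-concave and Brascamp--Lieb is not directly applicable. In this case I plan to fall back on the bounded-ness and polynomial decay of $h$: since $0<h(Y)\leq 1$ and $h(Y)\sim\|Y\|^{-n}$ at infinity with $n>2$, while the Gaussian factor $\mathcal{N}(Y;\mu(\beta),K^{-1})$ supplies sub-Gaussian tails of variance at most $\tau_1^2$, a direct second-moment estimate via the ratio representation $\mathbb{E}_{\nu_\beta}[\cdot]=\int(\cdot)h\mathcal{N}/\int h\mathcal{N}$ (lower-bounding the denominator by restricting the Gaussian integral to a ball around $\mu(\beta)$ where $h\gtrsim(1+\|\mu(\beta)\|^2)^{-n/2}$, and using $\gamma>\tau_0^2$ to keep the spike-direction components of $K^{-1}$ under $\tau_0^2$) still yields a $\beta$-independent bound on $\mathrm{Cov}_{\nu_\beta}(Y)$, which completes the Lipschitzness argument.
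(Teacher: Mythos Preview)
Your approach differs substantially from the paper's. The paper argues directly that $f$ is globally Lipschitz and then uses elementary convolution identities to bound $\|\nabla g\|$, $\|\nabla^2 g\|_{op}$ from above and $g$ from below, concluding via $\nabla_\beta b=\nabla^2 g/g-(\nabla g)(\nabla g)^\top/g^2$. Your route---Tweedie's identity reducing to $\nabla_\beta b=s^{-2}\mathrm{Cov}_{\nu_\beta}(Y)-s^{-1}I$, completion of the square to isolate a strongly log-concave Gaussian core with precision $K=D(z)+\tfrac{t}{\gamma(1-t)}I$, then Brascamp--Lieb---is more structural and never needs $f$ itself to be Lipschitz. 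This is a genuine advantage: under the stated hypothesis $\gamma>\tau_0^2$ alone, nothing prevents $\gamma<\tau_1^2$, in which case $f$ grows like $\exp\bigl(\tfrac12(\gamma^{-1}-\tau_1^{-2})\beta_j^2\bigr)$ along slab coordinates and is \emph{not} globally Lipschitz, so the paper's direct argument is more delicate than it appears. Your approach also makes explicit which quantity ($\lambda_{\min}(K)$ versus the Hessian bound $C_0$ on $\log h$) controls the Lipschitz constant.

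Your fallback for the regime $\lambda_{\min}(K)\le C_0$ has a gap, however. The ratio bound you sketch---denominator $\gtrsim(1+\|\mu(\beta)\|^2)^{-n/2}$ from restricting the Gaussian to a ball around $\mu(\beta)$, numerator $\lesssim\|h\|_\infty\,v^\top K^{-1}v$ from $0<h\le\|h\|_\infty$---produces a ratio of order $(1+\|\mu(\beta)\|^2)^{n/2}$, which blows up with $\beta$ rather than being $\beta$-independent. To repair it you must also localize the numerator: for $\|\mu(\beta)\|$ large, the relative oscillation of $h$ over the Gaussian bulk $\{(Y-\mu)^\top K(Y-\mu)\lesssim p\}$ is $O(\|\mu\|^{-1})$ (since $\|\nabla\log h(Y)\|=O(\|Y\|^{-1})$), so $\nu_\beta$ is close to $\mathcal{N}(\mu(\beta),K^{-1})$ and inherits $\mathrm{Cov}_{\nu_\beta}(Y)\preceq CK^{-1}\preceq C\tau_1^2 I$; the remaining bounded-$\mu$ range can then be handled by continuity and compactness. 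Without this refinement the ``direct second-moment estimate'' does not deliver the claimed uniform covariance bound.
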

\begin{proof}
The goal is to show that $\|b(\beta_t^1,t)-b(\beta_t^2,t)\|\leq C \|\beta_t^1-\beta_t^2\|\, \forall \beta_t^1,\beta_t^2$, or equivalently, $\|\nabla_\beta b(\beta_t,t)\|_{op}\leq C$, for any $t\in(0,1)$. We will need the following fact: if $f(\beta_t) > 0$ is $L$-Lipschitz, the convolved quantity $g(\beta_t):=\mathbb{E}_{Z} [f(\beta_t+\sqrt{1-t}Z)]>0$ will be Lipschitz and smooth. To see this, denote the Gaussian density with covariance $\gamma\cdot I$ as $u_\gamma$, since
\[g(\beta_t)=\int f(\beta_t+\sqrt{1-t}y)u_{\gamma}(y) dy=\int f(\beta_t-y)u_{(1-t)\gamma}(y) dy = f*u_{(1-t)\gamma}\]
is a positively-weighted linear combination of shifted $f$, it is clear that it will also be $L$-Lipschitz. Now for the smoothness claim $\|\nabla^2 g(\beta_t)\|_{op}\leq L/\sqrt{(1-t)\gamma}$, we compute since $\|\nabla f\|\leq L$, for any $\|v\|=1$,
\begin{align*}
|v^\top \nabla^2 g(\beta_t)v| &= |v^\top (\nabla f * \nabla u_{(1-t)\gamma}) v|\\
&= |\int (v^\top \nabla f(y))\cdot (\frac{y-\beta}{(1-t)\gamma})^\top v \cdot u_{(1-t)\gamma}(\beta-y)\, dy|\\
&\leq \frac{L}{\sqrt{(1-t)\gamma}} |\int (\frac{y-\beta}{\sqrt{(1-t)\gamma}})^\top v \cdot u_{(1-t)\gamma}(\beta-y)\, dy |\\
&=\frac{L}{\sqrt{(1-t)\gamma}} \mathbb{E}_{z\sim \mathcal{N}(0,1)}[|z|] = \frac{L}{\sqrt{(1-t)\gamma}}\sqrt{2/\pi}\, .
\end{align*}
This in turn implies that $\|\nabla_\beta b(\beta_t,t)\|_{op} \leq \frac{\|\nabla^2 g(\beta_t)\|_{op}}{g(\beta_t)}+\frac{\|\nabla g(\beta_t)\|^2}{g(\beta_t)^2}\leq C$ since $\|\nabla^2 g(\beta)\|_{op}$ and $\|\nabla g(\beta)\|$ is bounded from above and $g(\beta)$ bounded from below. It remains to check that $f$ is Lipschitz to conclude. Write $f(\beta_t)=\frac{\sigma^n}{(\|\beta_t\|^2+\sigma^2)^{n/2}}\alpha$ where $\alpha$ is shorthand for the $\exp(\cdot)$ term, we have %\qijia{gain higher regularity from gaussian smoothing}
\begin{align*}
\|\nabla f(\beta_t)\| \leq  \frac{\alpha \sigma^n \|\beta_t\|}{(\|\beta_t\|^2+\sigma^2)^{n/2}} \left(\frac{\|y\|^2}{\sigma^4+\sigma^2 \|\beta_t\|^2}+\frac{\|\beta_t\|^2}{(\sigma^3+\sigma \|\beta_t\|^2)^2}-\frac{n}{\|\beta_t\|^2+\sigma^2}\right)+(\frac{1}{\gamma}-D(z)) \|\beta_t\|
% \|\nabla^2\log f(\beta)\|_{op} &\leq \left(\frac{n}{\|\beta\|^2+\sigma^2}-\frac{\|y\|^2}{\sigma^4+\sigma^2\|\beta\|^2}+\frac{\|y\|^2\|\beta\|^2}{(\sigma^3+\sigma\|\beta\|^2)^2}\right)+\|D(z)\|_{op}-\frac{1}{\gamma}\\
% &+\left(\frac{2\sigma\|\beta\|^2\|y\|^2-2\sigma^3\|y\|^2}{(\sigma^3+\sigma\|\beta\|^2)^3}-\frac{2\|y\|^2}{(\sigma^3+\sigma\|\beta\|^2)^2}+\frac{2n}{(\|\beta\|^2+\sigma^2)^2}\right)\|\beta\|^2
\end{align*}
and it is easy to see that it is always bounded from above on the domain of $\beta$.
% \begin{align*}
% &\left|\frac{\nabla_\beta \mathbb{E}_{Z} [f(\beta_t^1+\sqrt{1-t}Z)]}{\mathbb{E}_{Z} [f(\beta_t^1+\sqrt{1-t}Z)]}-\frac{\nabla_\beta \mathbb{E}_{Z} [f(\beta_t^2+\sqrt{1-t}Z)]}{\mathbb{E}_{Z} [f(\beta_t^2+\sqrt{1-t}Z)]}\right|\\
% &\leq \max_t \left|\frac{1}{\mathbb{E}_{Z} [f(\beta_t+\sqrt{1-t}Z)]}\right|^2\big( \left|\nabla_\beta \mathbb{E}_{Z} [f(\beta_t^1+\sqrt{1-t}Z)]-\nabla_\beta \mathbb{E}_{Z} [f(\beta_t^2+\sqrt{1-t}Z)]\right|\times\\
% &|\mathbb{E}_{Z} [f(\beta_t^1+\sqrt{1-t}Z)]| +\left|\nabla_\beta \mathbb{E}_{Z} [f(\beta_t^1+\sqrt{1-t}Z)]\right|\cdot \left|  \mathbb{E}_{Z} [f(\beta_t^1+\sqrt{1-t}Z)]- \mathbb{E}_{Z} [f(\beta_t^2+\sqrt{1-t}Z)]\right| \big)\\
% &\leq \max_t \left|\frac{1}{\mathbb{E}_{Z} [f(\beta_t+\sqrt{1-t}Z)]}\right| \cdot M_g \|\beta_t^1-\beta_t^2\|+\max_t \left|\frac{1}{\mathbb{E}_{Z} [f(\beta_t+\sqrt{1-t}Z)]}\right|^2\cdot L_g^2 \|\beta_t^2-\beta_t^2\|\\
% &\leq C(\sigma,n,\|y\|)\|\beta_t^1-\beta_t^2\|\, .
% \end{align*}
%
% The first follows since $D(z)\preceq 1/\gamma\cdot I$ and exponential factor grows faster than any polynomial in $\|\beta_t\|$; the second part can be checked by computing $\|\nabla f(\beta_t)\|$ and using that $\|\beta_t\|\leq R$ for a sufficiently large $R$.
\end{proof}

The drift in \eqref{eqn:sde_schrodinger} can also be written as a conditional expectation: $\nabla_x \log \mathbb{E}[f(X_1) \vert X_t=x]$ for $(X_t)_{t\in[0,1]}$ distributed as the prior Wiener process $P$ \cite{Raginsky}. In fact the dynamics can be viewed as $X_t = t\beta+B_t$ for $\beta\sim \pi$ and one reaches target at $t=1$ where $B_t$ is the Brownian bridge on $[0,1]$ (therefore $B_0=B_1=0$) -- this is somewhat related to the stochastic localization dynamics \eqref{eqn:snr}, which we turn to in \Cref{sec:SL-meta}.

\begin{remark}
The SDE \eqref{eqn:sde_schrodinger} also shows up in proximal sampler \cite{prox} as part of the backward heat flow interpretation of the RGO oracle (c.f. Lemma 15/ equation (21) therein), albeit with different initialization (we are initializing from the origin, while \cite{prox} initialize from a Gaussian-convolved version of the target). 
\end{remark}

\subsection{Extension: Spike-and-Slab Logistic Regression}
While the preceding results pertain only to linear regression, we sketch its possible applicability to some GLMs via data augmentation technique. In the case of logistic regression for example, $\forall i\in[n]$, $y_i\in \{0,1\}$ with sparse $\beta^*$,
\[y_i | x_i,\beta \sim \text{Bern}\left(\frac{\exp(x_i^\top \beta)}{1+\exp(x_i^\top \beta)}\right)\]
where $x_i$ is the $i$-th row of the matrix $X$. Through the introduction of the auxiliary variable $\omega_i$, one can write the quasi-likelihood as (note the resemblance to linear model after transformation)
\[\mathbb{P}(y_i=1|\omega_i,z,\beta)=\frac{1}{\sqrt{2}}\exp\left((y_i-\frac{1}{2})(x_{i,z}^\top\beta_{z}) -\frac{\omega_i}{2} (x_{i,z}^\top\beta_{z})^2\right) \]
for $\omega_i\sim \text{PG}(1,0)$ the P\'olya-Gamma distribution, which admits efficient sampling algorithm \cite{polya-gamma}. This step relies on the essential integral identity that holds for all $a\in \mathbb{R}$:
\[\frac{(e^\phi)^a}{1+e^\phi}=\frac{1}{2}e^{(a-1/2)\phi}\int_0^\infty \exp(-\omega \phi^2/2) p(\omega) d\omega\, , \]
where $p(\omega)$ is the pdf for PG$(1,0)$.
% one can approximate the model as for $\tilde{y}_i\sim \text{Logistic}(x_i^\top \beta,1)$ and $y_i=\mathbbm{1}\{\tilde{y}_i > 0\}$,
% \[\tilde{y}_i | x_i,\beta,\omega_i \sim \mathcal{N}(x_i^\top \beta,\omega_i^2),\quad \omega_i \sim \text{InvGamma}(\frac{7.3}{2},\frac{5.3\pi^2}{6})\]
Assuming a continuous Gaussian spike-and-slab prior on the parameter $\beta$, the Bayesian logistic regression with spike and slab prior has posterior that can be sampled with Gibbs by alternating between
\begin{align}
\pi (\beta\vert z,y,\omega) 
&\propto \exp\left(-\frac{1}{2}(\bar{\beta}^\top \bar{X}^{\top}D(\omega)\bar{X}\bar{\beta}-\bar{\beta}^\top \bar{X}^\top (y-\frac{1}{2})-\bar{\beta}^\top D(\frac{1}{2\tau_1^2})\bar{\beta}\right)\prod_{j=1}^p (\mathcal{N}(\beta_j;0,\tau_0^2))^{1-z_j}\nonumber \\
&\sim \mathcal{N}(\bar{\beta};\Sigma^{-1}\bar{X}^\top (y-\frac{1}{2}),\Sigma^{-1}) \prod_{j=1}^p (\mathcal{N}(\beta_j;0,\tau_0^2))^{1-z_j} \label{eqn: beta_update_logit}
\end{align}
where $\Sigma(z) = \bar{X}^\top D(\omega) \bar{X}+2 D(\frac{z_j}{2\tau_1^2})$ and for each $i\in[n]$ in parallel
\begin{equation}
\pi(\omega_i|\beta,z,y)\sim \text{PG}(1,x_{i,z}^\top \beta_{z})
\end{equation}
and for each $j\in[p]$ sequentially
\begin{align}
\pi(z_j\vert \beta,y,z_{-j},\omega) &\propto \prod_{j=1}^p (q\mathcal{N}(\beta_j;0,\tau_1^2))^{z_j}\cdot((1-q)\mathcal{N}(\beta_j;0,\tau_0^2))^{1-z_j}\mathcal{N}\left(\frac{y-\frac{1}{2}}{\omega};X_{z}\beta_{z}, D(\frac{1}{\omega_i})\right) \nonumber\\
&\propto  (q\mathcal{N}(\beta_j;0,\tau_1^2))^{z_j}\cdot((1-q))^{1-z_j}\times \mathcal{N}(\beta_{z};\Sigma^{-1}\bar{X}^\top (y-1/2),\Sigma^{-1}) \nonumber\\
&\sim \text{Bern}\left(z_j;\frac{q\mathcal{N}(\beta_{z};\Sigma^{-1}\bar{X}^\top (y-1/2),\Sigma^{-1})}{(1-q)\mathcal{N}(\beta_j;0,\tau_0^2)+q\mathcal{N}(\beta_{z};\Sigma^{-1}\bar{X}^\top (y-1/2),\Sigma^{-1})}\right) 
\end{align}
%
%
% then has posterior as (let $u_i:=(y_i-1/2)/\omega_i$)
% \begin{align*}
% \pi(\beta\vert y) &\propto \sum_{z\in \{0,1\}^p} \int  \frac{e^{-\frac{1}{2}\beta^\top D_z^{-1}\beta}}{\sqrt{\det(2\pi D_z)}} \times \prod_{i=1}^n \exp\left((y_i-\frac{1}{2})(x_{i,1-z}^\top\beta_{1-z}) -\frac{\omega_i}{2} (x_{i,1-z}^\top\beta_{1-z})^2\right)\\
% & \times q^{\|z\|_0}(1-q)^{p-\|z\|_0} \times \prod_{i=1}^n p(\omega_i) \; d\omega\\
% &\propto \sum_{z\in \{0,1\}^p} \int_{\mathbb{R}^n} \exp\left(-\frac{1}{2}\beta^\top D_z^{-1}\beta -\frac{1}{2}\beta_{1-z}^\top X_{1-z}^\top D(\omega) X_{1-z}\beta_{1-z}\right)\times p(\omega) d\omega  \\
% &  \times \exp((y-1/2)^\top X_{1-z}\beta_{1-z}) \times \frac{q^{\|z\|_0}(1-q)^{p-\|z\|_0}}{\sqrt{\det(2\pi D_z)}}  %\\
% \end{align*}
% and the Gaussian . 
%
where we used completion of squares at various places. The most expensive step of the update is \eqref{eqn: beta_update_logit}, for which one can re-use similar tricks from \Cref{sec:practical} for speed-up. We leave investigation of the mixing along with statistical property of the posterior for future work.

% \qijia{Possible to extend to Laplace slab by writing as a scale mixture of normals as well.}

% Therefore the posterior for $\beta$ is also a mixture of log-concave densities for which one can leverage the Stochastic Localization idea for sampling.

%%%%%%%%%%%%%%%%%%%%%%%%%%%%%%%%%%%%%%%%%%%%%%%%%%%%%%%%%%%

%%%%%%%%%%%%%%%%%%%%%%%%%%%%%%%%%%%%%%%%%%%%%%%%%%%%%%%%%
\section{Stochastic Localization Sampler}
\label{sec:SL-meta}
In this section, we study Stochastic Localization Sampler for \eqref{eqn:quasi_posterior} under similar posterior contraction assumptions with warm start as in \Cref{sec:mixing_gibbs}. This class of samplers essentially takes a denoising perspective -- as we already saw, computationally sampling from the posterior is harder than statistical estimation in some sense (even for identifying the support $z$ as illustrated in \cite{yang2016computational}), but the approach below is not based on MCMC -- therefore \emph{not sensitive} to spectral gap, isoperemetric constant etc. -- and put the two tasks on equal footing under favorable statistical conditions, at least for some spike-and-slab models.

\subsection{Preliminaries: From Denoising to Sampling} 
The idea of stochastic localization came out of the analysis of functional inequalities (i.e., key ingredient behind the solution to the KLS conjecture \cite{yuansi2021KLS}) as a proof technique. The work of \cite{alaoui2022sampling} initiated its algorithmic use for sampling from the Sherrington-Kirkpatrick Gibbs measure with discrete hypercube support $\{\pm 1\}^n$, where approximate message passing (AMP) is used for implementing the mean estimation step, which we explain below (their guarantee holds with probability $1-o_n(1)$ over input $A\sim \text{GOE}(n)$). The crucial insight of this method is that the following two processes have the same law \cite{klartag2021bayesian, alaoui2022sampling} (this is sequential revelation of information)
\begin{equation}
\label{eqn:snr}
\theta_t = t\beta + W_t, \; \beta\sim \pi\; \text{(unknown signal where we know the prior \& have Gaussian observation)}
\end{equation}
which is ideal and un-implementable since we don't know $\beta$, and
\begin{equation}
\label{eqn:sl_sde}
d\theta_t = \left[\int_{\mathbb{R}^p} \beta\cdot p_{t,\theta_t}(\beta) d\beta\right]\, dt + dW_t=\mathbb{E}[\beta|\theta_t=\theta]dt+dW_t,\quad \theta_0 = 0
\end{equation}
for which (notice it only depends on the last time point)
\begin{equation}
\label{eqn:tilted-measure}
p_{t,\theta_t}(\beta) := \frac{1}{Z(t,\theta_t)}\exp(\theta_t^{\top}\beta-\frac{t}{2}\|\beta\|^2) \pi(\beta)
\end{equation}
precisely describes the posterior $\mathbb{P}(\beta \vert (\theta_s)_{0\leq s\leq t} )=\mathbb{P}(\beta \vert \theta_t = \theta)$ for $\beta$ under \eqref{eqn:snr}. Above $Z(t,\theta_t)$ is a normalizing constant. The measure $p_{t,\theta_t}$ localizes to a Dirac measure $\delta_\beta$ for a random $\beta\sim \pi$ as $t\rightarrow \infty$ (this can also be seen from \eqref{eqn:snr} since the signal part scales as $\mathcal{O}(t)$ and the noise part $\mathcal{O}(\sqrt{t})$). We abbreviate $p_{t,\theta_t}$ as $p_t$ below, and let $a_t:= \int \beta\cdot p_t(\beta)d\beta$ that one can think of as a Bayes optimal estimator.

As \cref{lem:alternative} below will reveal, Stochastic Localization is evolving a measure $p_t(\beta)$ driven by $W_t$ that has the martingale property of $p_0=\pi$ and $p_\infty=\delta_\beta$ for $\beta\sim \pi$. The process can be simulated via a SDE \eqref{eqn:sl_sde} which reduces the task of sampling from $\pi$ to estimating the denoising drift $\mathbb{E}[\beta|\theta_t=\theta]$ -- an approximation of this is what we will output at the end after running it for sufficiently long, and we track the (random) evolving measure for its barycenter $a_t$. In some sense at every fixed $t$, the process decomposes $\pi$ into a mixture of random measures, i.e., $\pi = \mathbb{E}_{\theta_t}[\pi(\cdot|\theta_t)]$, and the variance of the component $\pi(\cdot|\theta_t)$ decreases as $t\rightarrow \infty$. A more general version of \eqref{eqn:tilted-measure} can take the form $p_{t,\theta_t}(\beta) = \frac{1}{Z(t,\theta_t)}\exp(\theta_t^{\top}\beta-\frac{\|\beta\|_{G_t}^2}{2}) \pi(\beta)$ for $G_t \succ 0$ but we will not pursue such extension here. 
\begin{remark}
If $\pi(\beta)$ has bounded second moment, $a(\theta_t,t)$ is Lipschitz in $\theta_t$, since
\begin{align*}
\|\nabla_{\theta_t} a(\theta_t,t)\|_{op}=\|\mathbb{E}[\beta \beta^\top]-\mathbb{E}[\beta]\mathbb{E}[\beta]^\top \|_{op} \leq \|\mathbb{E}[\beta \beta^\top]\|_{op} \leq \mathbb{E}[\|\beta\|^2]
\end{align*}
will be bounded, where above the expectation is taken with respect to $p_{t,\theta_t}(\beta)$, which means the SDE \eqref{eqn:sl_sde} has a unique strong solution. %\qijia{Have $\mathbb{E}_\theta[\text{Cov}(\mu_\theta)\text{Cov}(\mu_\theta)]\preceq \text{Cov}(\mu)$, i.e., $t\mapsto \mathbb{E}[A_t]$ decreases in the positive-definite sense.}
\end{remark}

The lemma below gives quantitative convergence rate for \eqref{eqn:sl_sde} in continuous time.
\begin{lemma}[Continuous time SDE convergence]
\label{lem:continuous_time}
We have after $t=1/\epsilon^2$, 
\[W_2(\pi,\text{Law}(a_t))=W_2(\mathbb{E}[p_t],\text{Law}(a_t))\leq \sqrt{p}\epsilon\,.\]
\end{lemma}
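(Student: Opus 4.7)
The plan is to exploit the equivalence in law of the two processes \eqref{eqn:snr} and \eqref{eqn:sl_sde}, which gives a natural coupling between a draw $\beta \sim \pi$ and the drift $a_t = \mathbb{E}[\beta \mid \theta_t]$: take $\beta \sim \pi$, let $W_t$ be an independent standard Brownian motion, set $\theta_t = t\beta + W_t$, and let $a_t$ be the conditional expectation induced by the tilted measure \eqref{eqn:tilted-measure}. Because this coupling realises the marginal $\beta \sim \pi$ on one side and $a_t \sim \text{Law}(a_t)$ on the other,
\[
W_2^2(\pi,\text{Law}(a_t)) \;\leq\; \mathbb{E}\bigl[\|\beta - a_t\|^2\bigr].
\]

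The next step is to observe that $a_t$ is the MMSE estimator of $\beta$ from the Gaussian channel observation $\theta_t/t = \beta + W_t/t$, whose additive noise $W_t/t$ has covariance $(1/t)I_p$. Any other estimator gives an upper bound on the MMSE, and the naive choice $\theta_t/t$ itself yields
\[
\mathbb{E}\bigl[\|\beta - a_t\|^2\bigr] \;=\; \mathbb{E}\bigl[\|\beta - \mathbb{E}[\beta \mid \theta_t]\|^2\bigr] \;\leq\; \mathbb{E}\bigl[\|\beta - \theta_t/t\|^2\bigr] \;=\; \mathbb{E}\bigl[\|W_t/t\|^2\bigr] \;=\; \frac{p}{t}.
\]
Setting $t = 1/\epsilon^2$ gives $\mathbb{E}[\|\beta - a_t\|^2] \leq p\epsilon^2$, hence $W_2(\pi,\text{Law}(a_t)) \leq \sqrt{p}\,\epsilon$. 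The identification $\mathbb{E}[p_t] = \pi$ in the statement follows from the martingale property of the tilted measure under the SDE \eqref{eqn:sl_sde} (equivalently, marginalising over $\theta_t$ in the posterior interpretation of $p_{t,\theta_t}$).

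There is no real obstacle here: the argument is essentially a one-line MMSE $\leq$ linear-estimator MSE bound, once one recognises that the SDE \eqref{eqn:sl_sde} and the observation model \eqref{eqn:snr} are equal in law so that the coupling in the first display is available. The only mild subtlety is reminding the reader that the $W_2$ inequality uses this specific coupling (not an arbitrary one) and that the tower/orthogonality identity collapses $\mathbb{E}[\|\beta - a_t\|^2]$ to $\mathbb{E}[\operatorname{tr}(\operatorname{Cov}(\beta \mid \theta_t))]$, which is what one would bound more carefully if one wanted a dimension-free rate under additional structural assumptions on $\pi$.
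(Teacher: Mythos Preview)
Your proof is correct and morally the same as the paper's, but your packaging is a touch more direct. The paper first invokes the covariance decay $\mathbb{E}[\mathrm{cov}(p_t)]\preceq t^{-1}I$ (cited from \cite{alaoui2022sampling}) to get $\mathbb{E}[W_2^2(p_t,\delta_{a_t})]\leq p/t$, and then pushes the expectation inside via Jensen using joint convexity of $W_2^2$ to conclude $W_2^2(\mathbb{E}[p_t],\mathrm{Law}(a_t))\leq p/t$. You instead build the explicit coupling $(\beta,a_t)$ from the observation model \eqref{eqn:snr} and bound $\mathbb{E}[\|\beta-a_t\|^2]$ by the linear-estimator MSE, which is exactly the same $p/t$ bound (and indeed is the content of the covariance decay the paper quotes). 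Your route sidesteps the convexity-of-$W_2^2$ argument entirely; the paper's route makes the intermediate random-measure picture $W_2(p_t,\delta_{a_t})$ explicit, which is perhaps more in keeping with the stochastic-localization narrative. Either way the substance is identical.
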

\begin{proof}
Based on covariance decay we have $\mathbb{E}[\text{cov}(\mu_t)]\preceq \frac{1}{t}I$ for all $t>0$ \cite{alaoui2022sampling}, which reflects the fact that the measure localizes, therefore
\[\mathbb{E}[W_2^2(p_t,\delta_{a_t})]\leq \mathbb{E}[\mathbb{E}_{p_t}[\|x-a_t\|^2]]\leq \frac{p}{t}\]
by the coupling definition of $W_2$ distance and taking trace on both sides. Now since $W_2^2$ is convex (this can be seen from the dual formulation which is $\sup$ over a set of linear functions), we can push expectation inside using Jensen's inequality and conclude $W_2^2(\mathbb{E}[p_t],\text{Law}(a_t))\leq \frac{p}{t}$. Recall $\mathbb{E}[\mathbb{E}_{x\sim p_t}[x]] = \mathbb{E}_{x\sim p_0}[x]=\mathbb{E}_{x\sim \pi}[x]$ from the martingale property, hence $\text{Law}(a_t)\rightarrow p_0 = \pi$ as $t\rightarrow \infty$.
\end{proof} 
This rate is slower than other SDE-based algorithms, which have exponential convergence in continuous time under strong convexity, but is nevertheless quite minimal in terms of the assumptions made. 
%
% This rate is somewhat expected (and perhaps in some sense the best one can hope for) - suppose in 1D, we are interested in sampling from a dirac delta, without convexity, the best strategy is to cast an $\epsilon$-net in which case $\sim 1/\epsilon$ 0-th order queries are needed. 
%
In fact there's a dynamics one can write for the barycenter as well, if one can compute the covariance of $p_{t}(\beta)$. This is a drift-free, diffusion-only SDE with multiplicative noise, which could make discretization challenging. In this regard \eqref{eqn:sl_sde} relies on the mean and \eqref{eqn:barycenter} relies on the covariance of the tilted measure for implementation. 
\begin{lemma}[Alternative Descriptions]
\label{lem:alternative}
We have the following barycenter representation:
\begin{equation}
\label{eqn:barycenter}
da_t = A_t \, dW_t= \left[\int_{\mathbb{R}^p} (\beta-a_t)(\beta-a_t)^{\top}p_{t,\theta_t}(\beta) \, d\beta\right] \, dW_t
\end{equation}
where $a_\infty\sim \pi$. And density-valued SDE representation:
\begin{equation}
\label{eqn:density_martingale}
dp_t(\beta) = p_t(\beta)\langle \beta-a_t,dW_t\rangle\, . 
\end{equation}
\end{lemma}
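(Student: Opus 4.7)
The plan is to derive \eqref{eqn:density_martingale} first by applying It\^o's formula to the closed-form tilted density \eqref{eqn:tilted-measure}, then obtain \eqref{eqn:barycenter} by integrating $\beta$ against the density SDE, and finally identify the law of $a_\infty$ using the martingale property together with the covariance-decay bound invoked in \cref{lem:continuous_time}.

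First I would fix $\beta$ and treat $\log p_t(\beta) = \theta_t^\top\beta - \tfrac{t}{2}\|\beta\|^2 - \log Z(t,\theta_t) + \log\pi(\beta)$ as a function of $(t,\theta_t)$. The derivatives of the log-normalizer are moment integrals of the tilted measure: $\partial_t Z = -\tfrac{1}{2}Z\,\mathbb{E}_{p_t}[\|\beta\|^2]$, $\nabla_\theta Z = Z a_t$, and $\nabla^2_\theta Z = Z\,\mathbb{E}_{p_t}[\beta\beta^\top]$. Substituting $d\theta_t = a_t\,dt + dW_t$ and $d\langle\theta\rangle_t = I\,dt$ into It\^o's formula for $Z$, the $\partial_t Z$ term cancels exactly against $\tfrac{1}{2}\mathrm{tr}(\nabla^2_\theta Z)\,dt$, leaving $d\log Z = \tfrac{1}{2}\|a_t\|^2\,dt + a_t^\top dW_t$. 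Combining with $d(\theta_t^\top\beta - \tfrac{t}{2}\|\beta\|^2)$ gives
\[
d\log p_t(\beta) = (\beta-a_t)^\top dW_t - \tfrac{1}{2}\|\beta-a_t\|^2\,dt,
\]
so a second application of It\^o to $p_t = \exp(\log p_t)$ makes the drift cancel against half the quadratic variation and yields \eqref{eqn:density_martingale}.

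Next I would integrate $\beta$ against \eqref{eqn:density_martingale}, justifying the interchange of stochastic and Lebesgue integration by a stochastic Fubini argument, to obtain $da_t = \bigl[\int \beta(\beta-a_t)^\top p_t(\beta)\,d\beta\bigr]\,dW_t$; subtracting the identically zero quantity $a_t\!\int (\beta-a_t)^\top p_t\,d\beta$ recenters the integrand to the covariance $A_t$, which gives \eqref{eqn:barycenter}. For the identification $a_\infty\sim\pi$, the pure-martingale form \eqref{eqn:density_martingale} implies $\mathbb{E}[p_t(\beta)] = p_0(\beta) = \pi(\beta)$ for every $t$, while the decay $\mathbb{E}[\mathrm{cov}(p_t)]\preceq I/t \to 0$ used in \cref{lem:continuous_time} forces $p_t$ to localize to a random Dirac mass $\delta_X$; testing $\mathbb{E}\int f\,dp_t$ against bounded $f$ then gives $\mathrm{Law}(X)=\pi$, and since $a_t = \int \beta\,p_t(d\beta) \to X$ almost surely, we conclude $a_\infty\sim\pi$.

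The one technical point that needs care is the stochastic Fubini exchange used to pass from \eqref{eqn:density_martingale} to \eqref{eqn:barycenter}: one must verify that $\beta\,p_t(\beta)(\beta-a_t)$ is integrable over $\beta$ with square-integrable norm in $(t,\omega)$ over finite horizons, which reduces to uniform-in-$t$ control of moments of the tilted measure $p_t$. For the Gaussian-tailed spike-and-slab prior considered here this is routine since the exponential tilt in \eqref{eqn:tilted-measure} only sharpens the tails for $t>0$, but it is the one place where some bookkeeping is required.
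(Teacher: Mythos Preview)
Your proposal is correct and in fact considerably more detailed than the paper's own treatment: the paper does not actually prove \cref{lem:alternative} but simply defers to \cite{eldan_note}, remarking only that the martingale property is evident from \eqref{eqn:density_martingale} and drawing the moment-preservation consequence. Your It\^o-calculus derivation of \eqref{eqn:density_martingale} from the explicit tilt \eqref{eqn:tilted-measure}, followed by integration against $\beta$ to obtain \eqref{eqn:barycenter}, is exactly the standard route one finds in the stochastic-localization literature the paper cites, and your identification of $a_\infty\sim\pi$ via the martingale property plus covariance decay matches how the paper uses the lemma downstream in \cref{lem:continuous_time}.
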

\begin{proof}
These are known in the context of stochastic localization so we simply refer the reader to \cite{eldan_note} for the proof. As an immediate consequence of the martingale property, which is evident from \eqref{eqn:density_martingale}, we have $\mathbb{E}[\int f(\beta) p_{t,\theta_t}(\beta)d \beta]=\int f(\beta) \pi(\beta) d\beta$ remains constant for all $t\geq 0$ for any continuous function $f$. Therefore if $\pi$ has bounded mean/second moment, $(\beta_t)_t$ will have similarly bounded mean/second moment in expectation throughout the localization process.
% \qijia{bounded in probability suffice?}
\end{proof}

The density-valued SDE could potentially be used for an ensemble / interacting particle system implementation on a fixed grid with $\delta_{\beta_1},\delta_{\beta_2}, \dots$, but it will likely require a fine grid for the localization on the continuous domain that we consider (i.e., exponential in dimension). We will not explore it here but nevertheless establish its validity: if we start with a probability distribution $p_0=\sum_i p_0(\beta_i)=1$, the process will remain a probability measure over the discrete set since $a_t = \sum_i \beta_i p_t(\beta_i)$, for $\beta_i\in \mathbb{R}^p\, \forall i$,
\[\frac{d \sum_i p_t(\beta_i)}{dt} = \sum_i p_t(\beta_i)\langle \beta_i-a_t,dW_t\rangle =0 \Rightarrow \sum_i p_t(\beta_i)=1 \quad \forall t>0\, .\]
The time-discretized algorithm for sampling from $\pi$ using \eqref{eqn:sl_sde} is given below. We note that the algorithm is in some sense gradient-free.
\begin{algorithm}[H]
\caption{Stochastic Localization Sampler for $\pi$}
\label{alg:SL}
\begin{algorithmic}
\REQUIRE{Blackbox $\mathcal{T}(\theta_t)$ that can (approximately) compute $\int_{\mathbb{R}^p} \beta\cdot p_{t,\theta_t}(\beta) d\beta$ from \eqref{eqn:tilted-measure}}
\STATE{Initialize $\beta_1 = 0$}
\FOR{$k=1$ to $K$}
\STATE{$\beta_{k+1}=\beta_k+h \mathcal{T}(\beta_k)+\sqrt{h} z_k$ for $z_k\sim\mathcal{N}(0,I)$ independent}
\ENDFOR
\RETURN{$\mathcal{T}(\beta_{K+1})$}
\end{algorithmic}
\end{algorithm}

\subsection{Warm-up: Orthogonal Design} 
% Is there a form where the Bayes optimal estimator can be computed explicitly so no error induced here (vs. $t^{-1}\theta_t$ is the maximum likelihood estimator)? 
In the case of $X=I$ (sequence model), since we start with a product measure, we end up with another product measure that decouples across coordinates, which reduces the complexity significantly. With the point-mass spike and slab prior, the marginal posterior distribution of each coordinate is a mixture of (data-dependent, weighted) Dirac measure at zero and a continuous convolved density, with the weights signaling if the parameter has a higher chance coming from the spike or the slab part given the data and a fixed $q$: 
\begin{align}
\pi(\beta_j|y_j,q)&=\mathbb{P}(z_j=1|y_j,q)\pi(\beta_j|y_j,z_j=1)+\mathbb{P}(z_j=0|y_j,q)\pi(\beta_j|y_j,z_j=0) \nonumber\\
&= \frac{(1-q)\phi_{\sigma}(y_j)}{(1-q)\phi_{\sigma}(y_j)+q h(y_j)}\delta_0(\beta_j)+\frac{q h(y_j)}{(1-q)\phi_{\sigma}(y_j)+q h(y_j)} \frac{\phi_{\sigma}(y_j-\beta_j) g_{\tau_1}(\beta_j)}{\int \phi_{\sigma}(y_j-\beta_j) g_{\tau_1}(\beta_j) d\beta_j} \label{eqn:marginal}
\end{align}
where $\phi_{\sigma}(y_j-\beta_j)\propto e^{-\frac{1}{2\sigma^2}(y_j-\beta_j)^2}$ is the likelihood, $h(y_j):=\int \phi_{\sigma}(y_j-\beta_j) g_{\tau_1}(\beta_j) d\beta_j$ the convolution and $g_{\tau_1}(\cdot)$ the slab prior. For the choice of $q\geq 1/p$, a known fact is that the posterior median behaves similarly as a coordinate-wise hard thresholding estimator with threshold $\sigma\sqrt{2\log(p)}$, i.e., the max of $p$ independent Gaussians with variance $\sigma^2$, which capture the level below which there is no expected signal. It has been recognized since the 90s that shrinkage estimator can be tuned to attain minimax rates over a wide range of sparsity classes \cite{donoho_threshold}. The empirical Bayes choice of $q$ can be performed by maximizing the log-marginal $q \vert y$ as $\arg\max_q \sum_{j=1}^n \log((1-q)\phi_{\sigma}(y_j)+q h(y_j))$ but we will not pursue such an extension here.
 
 We remark that sequence model is known to be polynomial-time computable -- even with a hyper-prior on $q$ that renders the coordinates dependent, existing exact method scales as $\mathcal{O}(n^3)$ using polynomial multiplication \cite{castillo_sequence} for calculating various posterior point estimators. In what follows in this section we assume the data matrix satisfies $X^{\top}X = I_p, n=p$, i.e., orthogonal, since some salient features of the dynamics can be more easily seen in this simpler case.
%  so the following presentation is mostly for pedagogical purpose
%, which can be generated from i.i.d Gaussian matrix by e.g., polar factorization (but the result stated below holds for any orthogonal matrix).   
%\qijia{tie with statistics}   % rate of $\mathcal{O}(\sqrt{k\log(p)/n})$ 
%  (readers familiar with high-dimensional statistics will recognize the same mutual coherence / compatibility / restricted eigenvalue conditions showing up here)
Under point-mass spike, by definition, given $t,\theta_t,y$ the mean of the tilted measure is given by  
\begin{align*}
a_{t}(\theta_t,t) &= \int_{\mathbb{R}^p} \beta \cdot p_{t,\theta_t}(\beta) \, d\beta\\
&= \frac{1}{Z} \int_{\mathbb{R}^p} \beta \cdot \sum_{z\in \{0,1\}^p} e^{\theta_t^{\top}\beta-\frac{t\|\beta\|^2}{2}-\frac{1}{2\sigma^2}\|y-X_z\beta_z\|_2^2} \prod_{j=1}^p (q\mathcal{N}(\beta_j;0,\tau_1^2))^{z_j}\cdot((1-q)\delta_0(\beta_j))^{1-z_j} \, d\beta\, .  % Z(t,\theta_t,\sigma,X, y)
\end{align*}
 Without loss of generality we look at the first coordinate. 
 %
 % Rewriting the Bernoulli random variables (with fixed $q$) as 
 % \begin{align*}
 % &\sum_{z\in \{0,1\}^p}\prod_{j=2}^p (q\mathcal{N}(\beta_j;0,\tau_1^2))^{z_j}\cdot((1-q)\delta_0(\beta_j))^{1-z_j}\times  (q\mathcal{N}(\beta_1;0,\tau_1^2))^{z_1}\cdot((1-q)\delta_0(\beta_1))^{1-z_1}\\
 % &= \left[\sum_{z\in \{0,1\}^{p-1}}\prod_{j=2}^p (q\mathcal{N}(\beta_j;0,\tau_1^2))^{z_j}\cdot((1-q)\delta_0(\beta_j))^{1-z_j}\right] \times ((1-q)\delta_0(\beta_1)+q\mathcal{N}(\beta_1;0,\tau_1^2))
 % \end{align*}
 % where the first term involves sum over $2^{p-1}$ terms that does not depend on $\beta_1$. 
% 
 Let $x_i$ denote the $i$-th column of the matrix $X$, for point-mass spike whether we assume quasi-likelihood or exact likelihood doesn't affect the calculation in this case. Recall $a_{t,1}$ can be viewed as a denoiser for $\beta_1^*$ and $a_{t,1}\rightarrow \delta_{\beta_1^*}$ for some $\beta_1^*\sim \pi$ as $t\rightarrow \infty$, which we output.
\begin{align*}
a_{t,1}(\theta_t,t) &= \frac{\int_{\mathbb{R}}\beta_1\cdot \exp\left((\theta_{t,1}+\frac{1}{\sigma^2} y^{\top}x_1)\beta_1-(\frac{t}{2}+\frac{1}{2\sigma^2})\beta_1^2 \right)[q\frac{1}{\sqrt{2\pi}\tau_1}e^{-\frac{\beta_1^2}{2\tau_1^2}}+(1-q)\delta_0(\beta_1)]\, d\beta_1}{\int_{\mathbb{R}} \exp\left((\theta_{t,1}+\frac{1}{\sigma^2} y^{\top}x_1)\beta_1-(\frac{t}{2}+\frac{1}{2\sigma^2})\beta_1^2 \right)[q\frac{1}{\sqrt{2\pi}\tau_1}e^{-\frac{\beta_1^2}{2\tau_1^2}}+(1-q)\delta_0(\beta_1)]\, d\beta_1 }\\
&= \frac{q\frac{1}{\sqrt{2\pi}\tau_1}\int_{\mathbb{R}}\beta_1\cdot \exp\left((\theta_{t,1}+\frac{1}{\sigma^2} y^{\top}x_1)\beta_1-(\frac{t}{2}+\frac{1}{2\sigma^2}+\frac{1}{2\tau_1^2})\beta_1^2 \right)\, d\beta_1}{q\frac{1}{\sqrt{2\pi}\tau_1}\int_{\mathbb{R}} \exp\left((\theta_{t,1}+\frac{1}{\sigma^2} y^{\top}x_1)\beta_1-(\frac{t}{2}+\frac{1}{2\sigma^2}+\frac{1}{2\tau_1^2})\beta_1^2 \right)\, d\beta_1 + (1-q)}\\
&= \frac{\theta_{t,1}+\frac{1}{\sigma^2}y^{\top}x_1}{(t+\frac{1}{\sigma^2}+\frac{1}{\tau_1^2})+\frac{1-q}{q}(t+\frac{1}{\sigma^2}+\frac{1}{\tau_1^2})^{3/2}\tau_1\exp(-\frac{(\theta_{t,1}+\frac{1}{\sigma^2}y^{\top}x_1)^2}{2(t+\frac{1}{\sigma^2}+\frac{1}{\tau_1^2})})}
\end{align*}
where we used $\int_{-\infty}^\infty x\exp(-ax^2+bx)\, dx = \frac{\sqrt{\pi}b}{2a^{3/2}}\exp(b^2/4a)$ and $\int_{-\infty}^\infty \exp(-ax^2+bx)\, dx = \sqrt{\frac{\pi}{a}}\exp(b^2/4a)$ for $a>0$. The effect of spike is to introduce shrinkage -- in particular if we look at the denominator, it only becomes prominent when (for $q\geq 1/p$)
\[|\theta_{t,1}+\frac{1}{\sigma^2}y^\top x_1|\leq \sqrt{(t+1/\sigma^2+1/\tau_1^2)\log(\tau_1^2 p^2(t+1/\sigma^2+1/\tau_1^2))}\, ,\]
and in the case $X=I$, $y^\top x_1 = y_1$. For small $t$, this gives the threshold for $|y_1| \lesssim \sigma \sqrt{2\log(p\tau_1/\sigma)}$; and for large $t$, this becomes $|\theta_{t,1}|\lesssim \sqrt{2t\log(\tau_1 p \sqrt{t})}$. For the sampling dynamics
%Otherwise, it behaves almost like a linear SDE with time-dependent drift 
\[d\beta_{t,1} = a_{t,1}(\beta_t,t)dt+dW_t \, ,\]  % \approx \frac{\beta_{t,1}+\frac{1}{\sigma^2}y^{\top}x_1}{t+\frac{1}{\sigma^2}+\frac{1}{\tau_1^2}} dt+dW_t 
%that can be integrated exactly.
we see that initially if $|y_1|$ is above the threshold, it behaves almost like a linear SDE with time-dependent drift $\frac{\beta_{t,1}+\frac{1}{\sigma^2}y^{\top}x_1}{t+\frac{1}{\sigma^2}+\frac{1}{\tau_1^2}}$ that can be integrated exactly and $\beta_{t,1}$ scales as $\sim t$; otherwise the Brownian motion part will take over and $\beta_{t,1}$ roughly scales as $\sim \sqrt{t}$. As $t\rightarrow \infty$, with all else holding constant (i.e., for any finite sample size $n$), the drift 
\[a_{t,1}(\beta_t,t)\approx \frac{\beta_{t,1}+\frac{1}{\sigma^2}y^\top x_1}{t+\frac{1}{\sigma^2}+\frac{1}{\tau_1^2}}\approx \frac{t\beta_1^*+W_t}{t}\rightarrow \beta_1^*\sim \pi_1 \]
if $\beta_{t,1}\gtrsim \sqrt{t}$, signaling it will converge to the slab part of the posterior \eqref{eqn:marginal}; otherwise if $\beta_{t,1}\lesssim \sqrt{t}$,
\[a_{t,1}(\beta_t,t)\approx \frac{\beta_{t,1}+\frac{1}{\sigma^2}y^\top x_1}{\frac{1-q}{q}(t+\frac{1}{\sigma^2}+\frac{1}{\tau_1^2})^{3/2}\tau_1}\approx \frac{\beta_{t,1}}{t^{3/2}}\rightarrow 0\,.\]
%
%
% If $q=1$ one can integrate exactly since it's a linear SDE with time-dependent drift taking the form $d\theta_{t,1} = \frac{\beta_{t,1}+y^\top x_1}{t+2}dt +dW_t, \; \beta_{0,1}=0$ and the solution is explicitly given as for $\eta_t = \exp(\int_0^t \frac{1}{s+2} ds)=\frac{t+2}{2}$, 
% \[\beta_{t,1} = \eta_t\left(\int_0^t \eta_s^{-1}\frac{y^\top a_1}{s+2}ds+\int_0^t \eta_s^{-1}dW_s\right)\]
% which means $\beta_{t,1}$ is Gaussian given any $\beta_{0,1}$, therefore $\beta_{t,1}|\beta_{0,1}=0 \sim \mathcal{N}(\frac{t}{2}y^{\top}a_1,\frac{(t+2)t}{2})$, and since the mean $m_{t,1} = \frac{\theta_{t,1}+y^{\top}a_1}{t+2}$, this means $m_{t,1}\sim \mathcal{N}(\frac{y^{\top}a_1}{2},\frac{t}{2(t+2)})\rightarrow \mathcal{N}(\frac{y^{\top}a_1}{2},1/2)$ as $t\rightarrow \infty$.  
%
On the other hand, with Gaussian spike and sparsified likelihood \eqref{eqn:quasi_posterior}, for $\tau_1 \gg \tau_0$,
\begin{align*}
a_{t,1}(&\theta_t,t)=\frac{\int\beta_1\cdot e^{(\theta_{t,1}+\frac{1}{\sigma^2} y^{\top}x_1)\beta_1-(\frac{t}{2}+\frac{1}{2\sigma^2})\beta_1^2 }q\frac{1}{\sqrt{2\pi}\tau_1}e^{-\frac{\beta_1^2}{2\tau_1^2}} d\beta_1+\int\beta_1\cdot e^{\theta_{t,1}\beta_1-\frac{t}{2}\beta_1^2 }  (1-q)\frac{1}{\sqrt{2\pi}\tau_0}e^{-\frac{\beta_1^2}{2\tau_0^2}} d\beta_1}{\int e^{(\theta_{t,1}+\frac{1}{\sigma^2} y^{\top}x_1)\beta_1-(\frac{t}{2}+\frac{1}{2\sigma^2})\beta_1^2 }q\frac{1}{\sqrt{2\pi}\tau_1}e^{-\frac{\beta_1^2}{2\tau_1^2}}  + e^{\theta_{t,1}\beta_1-\frac{t}{2}\beta_1^2 }(1-q)\frac{1}{\sqrt{2\pi}\tau_0}e^{-\frac{\beta_1^2}{2\tau_0^2}} d\beta_1}\\
&= \frac{\frac{q}{\tau_1}\frac{\theta_{t,1}+1/\sigma^2 y^\top x_1}{(t+1/\sigma^2+1/\tau_1^2)^{3/2}}\exp(\frac{(\theta_{t,1}+1/\sigma^2y^\top x_1)^2}{2t+2/\sigma^2+2/\tau_1^2})+\frac{1-q}{\tau_0}\frac{\theta_{t,1}}{(t+1/\tau_0^2)^{3/2}}\exp(\frac{\theta_{t,1}^2}{2t+2/\tau_0^2})}{\frac{q}{\tau_1}\frac{1}{(t+1/\sigma^2+1/\tau_1^2)^{1/2}}\exp(\frac{(\theta_{t,1}+1/\sigma^2y^\top x_1)^2}{2t+2/\sigma^2+2/\tau_1^2})+\frac{1-q}{\tau_0}\frac{1}{(t+1/\tau_0^2)^{1/2}}\exp(\frac{\theta_{t,1}^2}{2t+2/\tau_0^2})}\\
&= \frac{\theta_{t,1}+1/\sigma^2 y^\top x_1}{(t+1/\sigma^2+1/\tau_1^2)+\frac{1-q}{q}\frac{\tau_1}{\tau_0}\frac{(t+1/\sigma^2+1/\tau_1^2)^{3/2}}{\sqrt{t+1/\tau_0^2}}\exp(\frac{\theta_{t,1}^2}{2t+2/\tau_0^2}-\frac{(\theta_{t,1}+1/\sigma^2 y^\top x_1)^2}{2t+2/\sigma^2+2/\tau_1^2})}\\
&\quad +\frac{\theta_{t,1}}{(t+1/\tau_0^2)+\frac{q}{1-q}\frac{\tau_0}{\tau_1}\frac{(t+1/\tau_0^2)^{3/2}}{\sqrt{t+1/\sigma^2+1/\tau_1^2}}\exp(\frac{(\theta_{t,1}+1/\sigma^2 y^\top x_1)^2}{2t+2/\sigma^2+2/\tau_1^2}-\frac{\theta_{t,1}^2}{2t+2/\tau_0^2})}\, .
\end{align*}
Therefore as $t\rightarrow \infty$, with all else holding constant (i.e., for any finite sample size $n$), one of the above two terms will go to $\frac{\theta_{t,1}}{\exp(t)}=\frac{t\beta_1^*+W_t}{\exp(t)}\rightarrow 0$ and the other go to $\frac{\theta_{t,1}}{t}=\frac{t\beta_1^*+W_t}{t}\rightarrow \beta_1^*\sim \pi_1$, depending on whether
\[\frac{(\theta_{t,1}+1/\sigma^2 y^\top x_1)^2}{t+1/\sigma^2+1/\tau_1^2}\lessgtr \frac{\theta_{t,1}^2}{t+1/\tau_0^2}\, ,\]
if $\theta_{t,1} \gtrsim \sqrt{t}$, which is the only possibility since the posterior $\pi$ puts zero mass at $0$ exactly (with the first $\delta_0(\beta_j)$ term from \eqref{eqn:marginal} replaced by another convolved density $\phi_{\sigma}(y_j-\beta_j) g_{\tau_0}(\beta_j)$). 
% Otherwise
% \[a_{t,1}\approx \frac{\theta_{t,1}+1/\sigma^2 y^\top x_1}{t+1/\sigma^2+1/\tau_0^2}+\frac{\theta_{t,1}}{t+1/\tau_1^2}\rightarrow 0\]
%
Consequently, continuous spike-and-slab priors yield non-sparse posterior point estimators that require thresholding for variable selection, and the alternative of selection based on $\mathbb{P}(z|y)$ can be expensive generally.
 
% For the covariance, since only the diagonal is nonzero, without loss of generality we calculate the second moment for the first coordinate. Using $\int_{-\infty}^\infty x^2\exp(-ax^2+bx)\, dx = \frac{\sqrt{\pi}(2a+b^2)}{4a^{5/2}}\exp(b^2/4a)$ for $a>0$, we have
% \begin{align*}
% A_t^{(1,1)} &= \frac{(1-s)\frac{1}{\sqrt{2\pi}}\int_{\mathbb{R}}x_1^2 \cdot \exp\left((\theta_{t,1}+\beta y^{\top}a_1)x_1-(\frac{t}{2}+\frac{\beta}{2}+\frac{1}{2})x_1^2 \right)\, dx_1}{(1-s)\frac{1}{\sqrt{2\pi}}\int_{\mathbb{R}} \exp\left((\theta_{t,1}+\beta y^{\top}a_1)x_1-(\frac{t}{2}+\frac{\beta}{2}+\frac{1}{2})x_1^2 \right)\, dx_1 + s} -m_{t,1}^2\\
% &=\frac{t+2+(\theta_{t,1}+y^{\top}a_1)^2}{(t+2)^{5/2}\exp(-\frac{(\theta_{t,1}+y^{\top}a_1)^2}{2t+4})\frac{s}{1-s}+(t+2)^2}-m_{t,1}^2
% \end{align*}
% \end{proof} 

% It bounds discretization error, but require not running for too long. Require bound on Lipschitzness of drift therefore bounded moments of iterates

% If µ has finite variance, then y → m(y, t) is Lipschitz and so this SDE is well posed with unique strong solution. 
For some intuition on the time discretization of the SDE, take the point-mass spike-and-slab for example, since $\pi$ is sub-Gaussian (therefore Novikov's condition holds with a very similar argument as below), using Girsanov's theorem, and consider the two SDEs:
\begin{align*}
d\beta_t &= a(\beta_t,t)dt+dW_t, \quad \text{same as \eqref{eqn:sl_sde}}\\
d\hat{\beta}_t &= a(\hat{\beta}_{kh},kh) dt+dW_t, \quad \text{for } t\in[kh,(k+1)h] \text{ an interpolation of discrete update \eqref{eqn:discrete_update}}
\end{align*}
where $(\beta_t)_t \sim Q, (\hat{\beta}_t)_t \sim P$ are two path measures, and one can obtain with the data processing inequality, %\qijia{non-asymptotic guarantee?}
\begin{align*}
\text{KL}(\pi_{Kh} || \mu_{Kh})&\leq \text{KL}(Q_{Kh} || P_{Kh})\\
&\lesssim \sum_{k=1}^{K} \int_{kh}^{(k+1)h} \mathbb{E}_Q[\|a(\beta_t,t)-a(\beta_{kh},kh)\|^2] dt\\
&\lesssim L(\sigma,h,\tau_0,\tau_1,y) \sum_{k=1}^{K} \int_{kh}^{(k+1)h} \mathbb{E}_Q[\|\beta_t-\beta_{kh}\|^2] dt\\
&\lesssim L(\sigma,h,\tau_0,\tau_1,y) \sum_{k=1}^{K} \int_{kh}^{(k+1)h} [(t-kh)^2\mathbb{E}_Q[\|a(\beta_t,t)\|^2] + 2d (t-kh)] dt
\end{align*}
which means if $h$ and $K$ are sufficiently small, since using Jensen's inequality, the drift  
\[\mathbb{E}_Q[\|a(\beta_t,t)\|^2]= \mathbb{E}_Q[\|\int \beta 
 p_{t,\theta_t}(\beta) d\beta\|^2]\leq \mathbb{E}_Q[\int \|\beta\|^2 p_{t,\theta_t}(\beta) d\beta] = \int \|\beta\|^2 \pi(\beta) d\beta < \infty\]
along the dynamics as shown in \cref{lem:alternative}, the two processes will be close to each other in law. Above $L$ is a constant depending on $\sigma, h, \tau_0,\tau_1,y$ since each coordinate $a_j(\beta_t,t)$ can be written as for some $c(0),c(1) > 0$, 
\[\min\{v(0),v(1)\}\leq \frac{v(0)c(0)+v(1)c(1)}{c(0)+c(1)}\leq \max\{v(0),v(1)\}\]
where $v(1) = (1/\sigma^2+1/\tau_1^2+t)^{-1}(1/\sigma^2 y^\top x_i+\beta_{t,i})$ and $v(0)=(1/\tau_0^2+t)^{-1}\beta_{t,i}$, and similarly for $a_j(\beta_{kh},kh)$ therefore $\|a(\beta_t,t)-a(\beta_{kh},kh)\|$ can be bounded by the claimed quantities. Notice that above we didn't use any approximations for $a(\cdot)$ -- since the computation scales linearly with $p$ instead of exponentially in this case, we didn't rely on probabilistic arguments / large-scale behavior on the model for showing convergence of the time-discretized SDE \eqref{eqn:sl_sde} for sampling from $\pi$ (of course, for $\pi$ to behave well statistically however, $\tau_1,\tau_0, q$ will have to be chosen carefully as we will see in \Cref{sec:stat_normal_mean}).

\subsection{Spike-and-Slab Linear Regression: Mean Computation}
Recall from \cref{prop:sparsify} the posterior marginal over $\beta$ in this case is a discrete mixture of log-concave densities: 
\begin{equation}
\label{eqn:posterior1}
\pi(\beta\vert y)\propto \sum_{z\in \{0,1\}^p} q^{\|z\|_0}(1-q)^{p-\|z\|_0}\times \frac{e^{-\frac{1}{2}\beta^\top D_z^{-1}\beta}}{\sqrt{\det(2\pi D_z)}} \times \frac{e^{-\frac{1}{2\sigma^2}\|y-X_{z}\beta_{z}\|^2}}{(2\pi\sigma^2)^{n/2}}
\end{equation} 
where $D_z$ is diagonal with $\tau_1^2$ if $z=1$ and $\tau_0^2$ otherwise ($\tau_1 \gg \tau_0$), and we will adopt the same assumption as in \Cref{sec:mixing_gibbs} that the data/posterior belong to $\mathcal{E}_s$ implying posterior concentration with the initial number of false positives $t$ bounded (the design matrix $X$ is again assumed deterministic satisfying the same ``restricted isometry" conditions). We note that the posterior \eqref{eqn:posterior1} is non-convex / non-smooth so $\arg\max$ (MAP) estimator is also hard to obtain from optimization, but integration/sampling can be somewhat easier under favorable statistical assumptions.

\begin{lemma}
\label{lem:mean}
For the sparsified likelihood with continuous priors \eqref{eqn:quasi_posterior}, we have given fixed $t,\theta_t,y, q\in (0,1)$ the approximate drift 
\begin{align}
\label{eqn:approximate_a}
\hat{a}(\theta_t,t) = \frac{\sum_{z: z \in \mathcal{S}} v(z)\cdot c(z)}{\sum_{z: z \in \mathcal{S}} c(z)} \, ,
% a(\theta_t,t) &= \sum_{z\in\{0,1\}^p}\int_{\mathbb{R}^p} \beta \cdot p_{t,\theta_t}(\beta,z) \, d\beta=
% \int_{\mathbb{R}^p} \beta\cdot \exp(\theta_t^{\top}\beta-\frac{t\|\beta\|^2}{2}) \pi(\beta)\, d\beta\, .
\end{align} 
where $v(z),c(z)$ are defined in \eqref{eqn:linear-system}-\eqref{eqn:c_z}, and $\mathcal{S}$ is the warm start set with $z^*\subset z, \|z\|_0\leq k+t$. Recall from \cref{lem:warm-start} that a warm-start with number of false positives $t\asymp k$ can generally be expected under \cref{assume:scaling}, therefore $\sum_{i=0}^k {t+k\choose i}\leq (e(t+k)/k)^k\asymp ((t+k)/k)^k\asymp c^t$ number of sub-models are evaluated at each time step. Additionally, under the statistical assumptions for \Cref{prop:stat}, $\frac{1}{\sqrt{p}}\|\hat{a}(\theta_t,t)- a(\theta_t,t)\|$ converges to zero in probability as $n\rightarrow \infty$, as the rest of $z^* \not\subset z$ contributes vanishingly small to the posterior.  % for approximately sampling from
\end{lemma}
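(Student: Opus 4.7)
The plan is to exploit the Gaussian conjugate structure of the posterior \eqref{eqn:quasi_posterior} together with the quadratic tilt in \eqref{eqn:tilted-measure}: conditional on a model $z\in\{0,1\}^p$, the integrand is a product of Gaussians, so integrating out $\beta$ analytically yields the Gaussian mean $v(z)$ and the Gaussian-integral normalizing factor $c(z)$ (up to the common combinatorial factor $q^{\|z\|_0}(1-q)^{p-\|z\|_0}$). The true drift is then $a(\theta_t,t) = (\sum_{z\in\{0,1\}^p} v(z)c(z))/\sum_{z\in\{0,1\}^p} c(z)$, and $\hat{a}(\theta_t,t)$ is the same ratio restricted to the warm-start set $\mathcal{S}$.

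Given this representation, a short algebraic manipulation yields
\begin{equation*}
a(\theta_t,t) - \hat{a}(\theta_t,t) = \frac{\sum_{z\notin\mathcal{S}}(v(z)-\hat{a}(\theta_t,t))\,c(z)}{\sum_{z\in\{0,1\}^p} c(z)},
\end{equation*}
so that
\begin{equation*}
\|a(\theta_t,t)-\hat{a}(\theta_t,t)\| \leq 2\max_{z}\|v(z)\|\cdot \pi_t(\mathcal{S}^c\mid y,\theta_t),
\end{equation*}
where $\pi_t(\mathcal{S}^c\mid y,\theta_t) := \sum_{z\notin\mathcal{S}} c(z)/\sum_{z} c(z)$ is the tilted marginal probability of $\mathcal{S}^c$. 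The proof thus splits into two independent tasks: a polynomial-in-$p$ uniform bound on $\|v(z)\|$, and a vanishing-in-probability bound on $\pi_t(\mathcal{S}^c\mid y,\theta_t)$.

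For the norm bound, I would use that on active coordinates $v(z)$ has the form $(X_z^\top X_z/\sigma^2 + I/\tau_1^2 + tI)^{-1}(X_z^\top y/\sigma^2 + \theta_{t,z})$, whose $\ell_2$ norm is controlled via the restricted eigenvalue \eqref{eqn:re} and coherence \eqref{eqn:coherence} together with $|X_j^\top \epsilon|\lesssim \sigma\sqrt{n\log p}$ on $\mathcal{E}_s$; on inactive coordinates the small prior scale $\tau_0 = \sigma/\sqrt{n}$ combined with the $tI$ shrinkage gives a negligible contribution. This yields $\max_{z}\|v(z)\|\lesssim \sqrt{p}$ uniformly, using the large-support restricted-eigenvalue property from the event $\mathcal{H}$ of \cref{lem:warm-start}. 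For the mass ratio, the martingale identity from \cref{lem:alternative}, $\mathbb{E}_{\theta_t}[\pi_t(\mathcal{S}^c\mid y,\theta_t)] = \pi(\mathcal{S}^c\mid y)$, combined with the posterior contraction $\pi(\mathcal{S}^c\mid y)\leq 4\,p^{-\delta(t+1)/2}$ from \eqref{eqn:event_1} on the event $\mathcal{E}_s$ (taking $s=t$), gives via Markov's inequality $\pi_t(\mathcal{S}^c\mid y,\theta_t) = o_P(1)$ as $n\to\infty$, and hence $\|a(\theta_t,t)-\hat{a}(\theta_t,t)\|/\sqrt{p} = o_P(1)$.

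The main obstacle is that the posterior concentration in \eqref{eqn:event_1} is stated for the untilted posterior $\pi(\cdot\mid y)$, while the quantity governing the drift approximation is the tilted, data-and-path-dependent random measure $p_{t,\theta_t}$. A direct change-of-measure bound would cost ratios $\exp(\theta_t^\top\beta - t\|\beta\|^2/2)$ that can be exponentially large in $p$; the martingale property of stochastic localization is exactly what allows smallness to transfer cleanly from $\pi$ to $p_{t,\theta_t}$ in expectation over the path, and hence in probability via Markov. A secondary technical point is that the uniform bound on $\|v(z)\|$ must hold for \emph{all} $z\in\{0,1\}^p$ (including those with false negatives or large supports outside $\mathcal{S}$), which is why large-support design regularity plays an essential role even though the output $\hat a$ only touches $\mathcal{S}$.
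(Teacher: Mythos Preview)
Your derivation of the Gaussian-integral representation $a(\theta_t,t)=\sum_z v(z)c(z)/\sum_z c(z)$ and the restriction to $\mathcal{S}$ match the paper's computation. Where you diverge is the convergence argument. The paper gives a two-line conditioning step, bounding $\mathbb{P}(\tfrac{1}{\sqrt{p}}\|\hat a-a\|\geq\epsilon)$ by $\mathbb{P}(\,\cdot\mid \pi(z^*\mid y)\geq 1-1/p)+\mathbb{P}(\pi(z^*\mid y)<1-1/p)$ and simply asserting the first conditional probability is zero in the limit --- without spelling out how untilted concentration on $z^*$ transfers to the tilted weights $c(z)/\sum_{z'} c(z')$ that actually govern $a$ and $\hat a$. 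Your route --- the explicit decomposition $a-\hat a=\sum_{z\notin\mathcal S}(v(z)-\hat a)\,c(z)/\sum_z c(z)$, a uniform norm bound on $v(z)$, and the martingale identity $\mathbb{E}_{\theta_t}[\pi_t(\mathcal{S}^c\mid y,\theta_t)]=\pi(\mathcal{S}^c\mid y)$ plus Markov --- fills in precisely what the paper leaves implicit. The martingale transfer is the right mechanism (and is what \cref{lem:alternative} supplies); your argument is the more complete one.

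One technical imprecision: you justify $\max_z\|v(z)\|\lesssim\sqrt{p}$ via ``the large-support restricted-eigenvalue property from the event $\mathcal{H}$,'' but $\mathcal{H}$ in \cref{lem:warm-start} only controls $v^\top X^\top X v$ for $\|v\|_0\leq k$ and says nothing once $\|z\|_0$ is large (indeed $X_z^\top X_z$ is rank-deficient for $\|z\|_0>n$). What actually controls the active block uniformly in $z$ is the regularization: the inverse has operator norm at most $(t+\tau_1^{-2})^{-1}$, giving $\|v(z)_{\text{active}}\|\leq (t+\tau_1^{-2})^{-1}\bigl(\sigma^{-2}\|X_z^\top y\|+\|\theta_{t,z}\|\bigr)$, which is polynomial in $n$ and $\sqrt{p}$ on the high-probability event. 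This extra polynomial-in-$n$ factor is absorbed by the contraction rate $\pi(\mathcal S^c\mid y)\lesssim p^{-2}$ from \cref{prop:stat}(3) together with $n=o(p)$, so your conclusion stands.
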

\begin{proof}
By definition the tilted mean (as a function of the random measure $\pi$) takes the form
\begin{align*}
&a(\theta_t,t)= \sum_{z\in\{0,1\}^p}\int_{\mathbb{R}^p} \beta \cdot p_{t,\theta_t}(\beta,z) \, d\beta=
\int_{\mathbb{R}^p} \beta\cdot \exp(\theta_t^{\top}\beta-\frac{t\|\beta\|^2}{2}) \pi(\beta)\, d\beta\,\\
& = \frac{\sum_{z \in \{0,1\}^p} q^{\|z\|_0}(1-q)^{p-\|z\|_0}\left[\int_{\mathbb{R}^p} \beta \cdot e^{\theta_t^{\top}\beta-\frac{t\|\beta\|^2}{2}-\frac{1}{2\sigma^2}\|y-X_{z}\beta_{z}\|_2^2-\frac{1}{2}\beta^\top D_z^{-1}\beta} d\beta\right]\frac{1}{\sqrt{\det(2\pi D_z)}}}{\sum_{z\in\{0,1\}^p}  q^{\|z\|_0}(1-q)^{p-\|z\|_0}\times \int_{\mathbb{R}^p} e^{\theta_t^{\top}\beta-\frac{t\|\beta\|^2}{2}}\times \frac{e^{-\frac{1}{2}\beta^\top D_z^{-1}\beta}}{\sqrt{\det(2\pi D_z)}} \times e^{-\frac{1}{2\sigma^2}\|y-X_{z}\beta_{z}\|^2} d\beta}\\
&= \frac{\sum_{z \in \{0,1\}^p} v(z)\cdot c(z)}{\sum_{z\in\{0,1\}^p}c(z)}
\end{align*}
for vector $v(z) \in \mathbb{R}^p$,
\begin{equation}
\label{eqn:linear-system}
v(z)_j = \begin{cases}
[(\frac{1}{\sigma^2}X_{z}^\top X_{z}+\frac{1}{\tau_1^2}I+tI)^{-1}(\frac{1}{\sigma^2}X_{z}^\top y+\theta_{t,z})]_j \quad \text{if $j$ is active}\\
[(\frac{1}{\tau_0^2}I+tI)^{-1}\theta_{t,1-z}]_j \quad \text{otherwise}\, ,
\end{cases}
\end{equation}
furthermore the scalar 
\begin{align}
c(z) &= \frac{\exp\left(\frac{1}{2}(\frac{1}{\sigma^2} y^\top X_{z}+\theta_{t,z}^\top) (\frac{1}{\sigma^2}X_{z}^\top X_{z}+\frac{1}{\tau_1^2}I+tI)^{-1}(\frac{1}{\sigma^2}X_{z}^\top y+\theta_{t,z})\right)}{\sqrt{\det(\frac{1}{\sigma^2}X_{z}^\top X_{z}+\frac{1}{\tau_1^2}I+tI)}} \label{eqn:c_z}\\
&\times \frac{\exp(\frac{1}{2}\theta_{t,1-z}^\top(\frac{1}{\tau_0^2}I+tI)^{-1}\theta_{t,1-z})}{\sqrt{\det(\frac{1}{\tau_0^2}I+tI)}}\times (\frac{q\tau_0}{(1-q)\tau_1})^{\|z\|_0} \nonumber
\end{align}
where we used Gaussian integral and completion of squares. % \[p_{t,\theta_t}(\beta,z)\sim \mathcal{N}(\beta;\Sigma^{-1}(X^\top y+\sigma^2 \theta_t),\sigma^2\Sigma^{-1}) \times q^{\|z\|_0}(1-q)^{p-\|z\|_0}\]
% for $\Sigma(z) = X^\top X+2\sigma^2 D(\frac{1-z_j}{2\tau^2})+\sigma^2 t \cdot I$. 
The approximate posterior mean which acts as the drift of the SDE \eqref{eqn:sl_sde} is given by for the warm start set $\mathcal{S}:=\{z: z^*\subset z, \|z\|_0\leq \|z^*\|_0+t\}$ with at most $k+t \ll p$ active coordinates,
\begin{equation*}
\hat{a}(\theta_t,t) = \frac{\sum_{z: z \in \mathcal{S}} v(z)\cdot c(z)}{\sum_{z: z \in \mathcal{S}} c(z)}
\end{equation*}
% \[\hat{a}(\theta_t,t) = \sum_{\hat{z}}\left(X^\top X+2\sigma^2 D(\frac{1-z_j}{2\tau_0^2}+\frac{z_j}{2\tau_1^2})+\sigma^2 t \cdot I_p \right)^{-1}(X^\top y+\sigma^2 \theta_t) \times q^{\|\hat{z}\|_0}(1-q)^{p-\|\hat{z}\|_0}\]
where computing \eqref{eqn:linear-system} involves solving linear systems of size $\|z\|_0\times \|z\|_0$ with both changing left and right hand sides $t$ and $\theta_t$. Asymptotically as $t\rightarrow \infty$, the drift becomes $z$-independent and approaches $\theta_t/t=\beta+W_t/t \rightarrow \beta$ for some random $\beta\sim \pi$, which we output. From the denoising perspective, the task gets easier as $t\rightarrow \infty$ since the signal-to-noise ratio grows like $t/\sqrt{t}=\sqrt{t}$.

Using \eqref{eqn:high_prob_selection} as a consequence of \Cref{prop:stat}, we have $\forall \epsilon > 0$, recall $p_n\rightarrow \infty$ as $n\rightarrow \infty$ such that $p_n=e^{o(n)}$, since $z^*\in \mathcal{S}$ by definition, 
\begin{align*}
&\lim_{n\rightarrow \infty}\mathbb{P}(\frac{1}{\sqrt{p}}\|\hat{a}(\theta_t,t)-a(\theta_t,t)\| \geq \epsilon) \\
&\leq \lim_{n\rightarrow \infty}\mathbb{P}(\frac{1}{\sqrt{p}}\|\hat{a}(\theta_t,t)-a(\theta_t,t)\| \geq \epsilon | \pi(z^*|y)\geq 1-1/p)+\lim_{n\rightarrow \infty}\mathbb{P}(\pi(z^*|y) \leq 1-1/p)\\
% &\leq \lim_{n\rightarrow \infty} p\cdot \mathbb{P}( |\hat{a}(\theta_t,t)_j-a(\theta_t,t)_j| \geq \epsilon\, | \,\pi(z^*|y)\geq 1-1/p)+\lim_{n\rightarrow \infty}\mathbb{P}(\pi(z^*|y) \leq 1-1/p) \\
&\leq 0+\lim_{n\rightarrow \infty}  \frac{1}{p} = 0
\end{align*}
therefore $\text{p-}\lim_{n\rightarrow \infty} \hat{a}(\theta_t,t)_j = \text{p-}\lim_{n\rightarrow \infty} a(\theta_t,t)_j$ yields the convergence in probability claim.
\end{proof}

We can use pre-computation scheme and cache a factorization of $X_{z}^\top X_{z}$ (generally expected to be full rank since $k\lesssim n$) to speed up the subsequent calculation. Since the sub-models under consideration share common features, one should also use Sherman-Morrison for low-rank updates whenever possible. %\qijia{Nystrom, Randomized methods?}

\begin{remark}
If the integral is hard to compute analytically, one might hope to use Laplace approximation. It may also be possible to use mode instead of mean if the posterior consists of mixture of log-concave distributions (they can be shown to be not far apart due to measure concentration for log-concave densities), in the case of more general slab distributions. 
%We remark that point-mass is also fine with this approach 
% Let us note that in the case of point-mass spike-and-slab, the coordinates corresponding to $z_j=1$ will have $m_j=0$ independent of $t,\theta_t$, i.e., stays $0$ forever. Therefore if one can identify the true support, only the active coordinates will be updated.
\end{remark}

\subsection{Spike-and-Slab Linear Regression: SDE Implementation}
Recall we discretize as  
\begin{equation}
\label{eqn:discrete_update}
\beta_{k+1} = \beta_k+h\cdot \hat{a}(\beta_k,kh)+\sqrt{h}\cdot z_k, \quad z_k\sim\mathcal{N}(0,I) \; \text{independent}
\end{equation}
and output $\hat{a}(\beta_k,kh)$ for sufficiently large $k$. In line with \Cref{sec:statistics} we consider a sequence of problems with growing $n,p_n,k_n \rightarrow \infty$, so the posterior is implicitly indexed by $n$, and the probabilities are conditional on $X$. Here $p_n/n\sim e^{o(n)}/n, k_n/n\sim \log(p_n)/n\sim o(n)/n$ serve as proxies for statistical difficulty of the problem, which cannot grow too fast. This is a more meaningful limit than the classical fixed $p$, large $n$ setup. We are interested in the regime where one has variable-selection consistency in the sense $\mathbb{E}[\pi(z^*|y)] \geq 1-\frac{1}{p^2}$, which is established in \cref{prop:stat} under appropriate parameter choices (the allowed scaling of $p,k$ will depend on $X$ for such a guarantee to hold). We study the convergence rate of the Stochastic Localization sampler in this setting -- in fact a guarantee of both computational \& statistical nature along the lines of $\mathbb{E}_{\beta^*}(\mathbb{P}_n(\|\hat{a}(\beta_K)-\beta^*\| \lesssim M \vert y^n)) \geq 1-o_n(1)$ should also be within-reach for the output of the algorithm under such posterior contraction.  
%with the statistical properties of the posterior from \Cref{subsec:posterior_contract}, for recovering the $k$-sparse planted $\beta^*$  

% we initialize with $t$ false positives so search over 2^t to make sure it includes the true $z^*$, but work under the setting $n\rightarrow \infty$ so all posterior mass concentrated on $z^*$, or the probability goes to $1$ as $n\rightarrow \infty$.
The helper lemma below on the exact drift is crucial for the stable discretization of the SDE, where we borrow parts from \cite[Lemma 4.9]{alaoui2022sampling}.  
\begin{lemma}[Lipschitz-type property of $a(\cdot)$]
\label{lem:regular}
For some constant $C$ depending on $t$, the following regularity condition on $\beta(t) \mapsto  a(\beta(t),t)$ holds: for any $h \leq t\leq T$ and $\beta_k,\beta_t\in \mathbb{R}^p$, with probability $1-o_n(1)$ over the data $y^n$,
\[\|a(\beta_k,t)-a(\beta_t,t)\| \leq C(t)  \|\beta_k-\beta_t\|+o_n(1)\, .\]
Moreover with $(k+1)h\leq T$, for the continuous process \eqref{eqn:continuous_couple} on $\bar{\beta}(t)$, and sufficiently small $h$ such that $h < \lambda_{\min} (X_{z^*}^\top X_{z^*})/\sigma^2$,
\[\sup_{t\in [kh,(k+1)h]} \frac{1}{\sqrt{p}}\|a(\bar{\beta}(t),t)-a(\bar{\beta}(kh),kh)\| = O_p(\sqrt{h}) \, .\]  % C'\sqrt{hp}  \text{ as } n\rightarrow \infty
Above both are stated under the assumptions for \cref{prop:stat}.
\end{lemma}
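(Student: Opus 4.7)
The plan is to leverage the identity $\nabla_{\theta_t} a(\theta_t,t) = \mathrm{Cov}_{p_{t,\theta_t}}(\beta)$ noted earlier, so that both parts reduce to bounds on posterior moments of the tilted measure, which in turn reduce to Gaussian conditional moments via the mixture decomposition \eqref{eqn:posterior1}. The conditional law of $\beta$ given $z$ under $p_{t,\theta_t}$ is Gaussian; on active coordinates it has covariance $(X_z^\top X_z/\sigma^2 + I/\tau_1^2 + tI)^{-1}$, and on inactive coordinates $(1/\tau_0^2 + t)^{-1} I$, both of operator norm at most $1/t \leq 1/h$. Under the event $\pi(z^*|y) \geq 1 - 1/p$ supplied by \cref{prop:stat}, I expect to transfer concentration of the un-tilted posterior to the \emph{tilted} weights $w(z) \propto c(z)$ from \cref{lem:mean}, since the tilting factor $\exp(\theta_t^\top\beta - t\|\beta\|^2/2)$ only contributes a Gaussian-mass factor that stays controlled along typical SDE trajectories.

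For the first claim I would apply the fundamental theorem of calculus along the segment $\beta_s = (1-s)\beta_t + s\beta_k$ to write
\[a(\beta_k,t) - a(\beta_t,t) = \int_0^1 \mathrm{Cov}_{p_{t,\beta_s}}(\beta)\,(\beta_k - \beta_t)\,ds,\]
then decompose via the law of total covariance
\[\mathrm{Cov}_{p_{t,\theta_t}}(\beta) = \mathbb{E}_{z \sim w}[\mathrm{Cov}(\beta|z)] + \mathrm{Cov}_{z \sim w}(\mathbb{E}[\beta|z]).\]
The within-component term is bounded by $1/t$ uniformly. For the between-component term, on the concentration event $w(z^*) = 1 - o_n(1)$, so the contribution from $z \neq z^*$ is an $o_n(1)$ perturbation in operator norm, since the means $\mathbb{E}[\beta|z]$ are $O_p(1)$-bounded componentwise on $\mathcal{S}$ by the second-moment control built into \cref{prop:stat}. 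This yields the Lipschitz constant $C(t) \lesssim 1/t$ together with the additive $o_n(1)$ slack.

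For the second claim I would split
\[a(\bar{\beta}(t),t) - a(\bar{\beta}(kh),kh) = \bigl[a(\bar{\beta}(t),t) - a(\bar{\beta}(kh),t)\bigr] + \bigl[a(\bar{\beta}(kh),t) - a(\bar{\beta}(kh),kh)\bigr].\]
The state-variation piece is handled by the Lipschitz bound of Part 1 applied to the SDE displacement
\[\bar{\beta}(t) - \bar{\beta}(kh) = \int_{kh}^t a(\bar{\beta}(s),s)\,ds + (W_t - W_{kh}),\]
whose drift integral has $\ell_2$ norm $O_p(\sqrt{p}\,h)$ (using $\|a(\bar{\beta}(s),s)\|^2 \leq \mathbb{E}_{p_{s,\bar{\beta}(s)}}\|\beta\|^2 = \mathbb{E}_\pi\|\beta\|^2 = O(p)$ by the martingale property from \cref{lem:alternative} and the moment bound on $\pi$) and whose Brownian increment has $\ell_2$ size $O_p(\sqrt{ph})$; multiplying by $C(t)$ gives state variation of order $O_p(\sqrt{ph})$. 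The time-variation piece follows from $\partial_t a(\theta,t) = -\tfrac{1}{2}\mathrm{Cov}_{p_{t,\theta}}(\beta,\|\beta\|^2)$, bounded by Cauchy-Schwarz via $\sqrt{\mathrm{tr}\,\mathrm{Cov}(\beta)\cdot\mathrm{Var}(\|\beta\|^2)} = O_p(\sqrt{p})$ under the same moment control, producing $O_p(\sqrt{p}\,h)$ over an interval of length $h$. Summing and dividing by $\sqrt{p}$ yields the advertised $O_p(\sqrt{h})$ uniform rate.

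The main obstacle will be the between-component term in the covariance decomposition: while each Gaussian component has covariance bounded by $1/h$, the conditional means $\mathbb{E}[\beta|z]$ differ substantially across models on their symmetric difference, so controlling their spread requires that the \emph{tilted} weights $w(z) \propto c(z)$ inherit concentration on $\{z:\ z^*\subset z,\ \|z\|_0 \leq k+t\}$ from the un-tilted posterior. The technical route is to show the ratio $c(z)/c(z^*)$ remains comparable to $\pi(z|y)/\pi(z^*|y)$ up to an $e^{o(1)}$ factor along typical trajectories, using the condition $h < \lambda_{\min}(X_{z^*}^\top X_{z^*})/\sigma^2$ to keep the Gaussian determinants and quadratic forms entering $c(z)$ well-conditioned; the posterior concentration from \cref{prop:stat} then transfers to $w$, and the contribution of $z \not\supset z^*$ is absorbed into the $o_n(1)$ slack appearing in both parts of the statement.
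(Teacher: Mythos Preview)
Your approach is valid but differs from the paper's in both parts, and the second part in particular takes a genuinely different route.

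For Part~1, you integrate the covariance identity $\nabla_\theta a = \mathrm{Cov}_{p_{t,\theta}}(\beta)$ along the segment and split via the law of total covariance. The paper instead works directly with the explicit mixture representation $a(\theta_t,t) = \sum_z v(z)\,c(z)\big/\sum_z c(z)$ from \cref{lem:mean}: it notes that each coordinate is sandwiched between $\min_z v(z)_j$ and $\max_z v(z)_j$, then invokes $\pi(z^*\mid y)\to 1$ (together with $\tau_1\to\infty$, $\tau_0\to 0$) to reduce $a(\cdot,t)$ to the single linear map $v(z^*;\cdot,t)$, whose Lipschitz constant is $\bigl\|(\tfrac{1}{\sigma^2}X_{z^*}^\top X_{z^*}+tI)^{-1}\bigr\|_{op}\le 1/t$. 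Both routes face exactly the obstacle you name---transferring un-tilted posterior concentration to the tilted weights $c(z)$ uniformly in the location argument---and the paper treats it at essentially the same heuristic level you do.

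For Part~2, the paper avoids your state/time split entirely. It observes from \cref{lem:alternative} that $t\mapsto a(\bar\beta(t),t)$ is itself a bounded martingale (since $da_t = A_t\,dW_t$), hence $\tfrac{1}{\sqrt{p}}\|a(\bar\beta(t),t)-a(\bar\beta(kh),kh)\|$ is a nonnegative submartingale, and Doob's maximal inequality reduces the supremum over $[kh,(k+1)h]$ to the single endpoint $t=(k+1)h$. At that endpoint a perturbation bound for linear systems applied to $v(z^*;\cdot,\cdot)$ yields the $O_p(\sqrt{h})$ rate. This is more economical than your route: your time-variation term requires $\partial_t a(\theta,t) = -\tfrac12\mathrm{Cov}_{p_{t,\theta}}(\beta,\|\beta\|^2)$, which for a single Gaussian component $\mathcal{N}(\mu,\Sigma)$ equals $-\Sigma\mu$, so $\|\partial_t a\|\lesssim \|\mu\|/t$. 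Along the SDE $\|\mu\|$ scales with $\|\theta_t\|$ and keeping this $O(\sqrt{p})$ uniformly on $[h,T]$ ultimately forces you back to the martingale second-moment argument anyway---so the paper's direct appeal to Doob buys a real simplification.
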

\begin{proof}
Since $\tau_1\rightarrow \infty, \tau_0\rightarrow 0$ as $n\rightarrow \infty$, which ensures $\pi(z=z^*|y)\rightarrow 1$ as $n\rightarrow \infty$ from \cref{prop:stat}, recall we have for any given $\beta_t,t$,
\begin{align*}
&v(z)_j =\\
&\begin{cases}
 [(\frac{1}{\sigma^2}X_{z}^\top X_{z}+\frac{1}{\tau_1^2}I+tI)^{-1}(\frac{1}{\sigma^2}X_{z}^\top y+\beta_{t,z})]_j \rightarrow [(\frac{1}{\sigma^2}X_{z}^\top X_{z}+tI)^{-1}(\frac{1}{\sigma^2}X_{z}^\top y+\beta_{t,z})]_j\\
[(\frac{1}{\tau_0^2}I+tI)^{-1}\beta_{t,{1-z}}]_j \rightarrow 0 
\end{cases}
\end{align*}
% the posterior probability of the true model goes to 1 as the sample size increases to infinity
as $n\rightarrow \infty$ and for some $c(z)\geq 0$,
\[\min_z v(z)_j \leq a(\beta_t,t)_j=\frac{\sum_{z \in \{0,1\}^p} v(z)_j\cdot c(z)}{\sum_{z\in\{0,1\}^p}c(z)} \leq \max_z v(z)_j\, .\]
Now for any $t \geq h$, with probability $1-o_n(1)$, since $X_{z^*}^\top X_{z^*}\succ 0$,
% since for any two probability distributions $\mathbb{P}_z,\mathbb{P}_{z'}$, it holds that $\|\mathbb{E}_z[v(z)]-\tilde{\mathbb{E}}_{z}[v'(z)]\| \leq \max_{z\in \{0,1\}^p} \|v(z)-v'(z)\|$, %\qijia{possibly tighter bound here}
\[\|a(\beta_k,t)-a(\beta_t,t)\| \leq \|(\frac{1}{\sigma^2}X_{z^*}^\top X_{z^*}+tI)^{-1}\|_{op}\|\beta_{t}-\beta_k\|+o_n(1) \leq \frac{1}{t}\|\beta_k-\beta_t\|+o_n(1)\, .\]
For the second part, using that for two linear systems $Lu=r$ and $\hat{L}\hat{u}=\hat{r}$ where $\|L^{-1}(\hat{L}-L)\|<1$, the perturbed solution obeys
\[\|\hat{u}-u\|\leq \frac{\|L^{-1}\|}{1-\|L^{-1}(\hat{L}-L)\|}(\|(\hat{L}-L)u+\hat{r}-r\|)\,;\]
with probability taken over both the stochastic process $(\bar{\beta}(t))_t$ and the data $y^n$/posterior $\pi_n$, the sequence of random variables 
\begin{equation}
\label{eqn:itermediate}
\sup_{t\in [kh,(k+1)h]} \frac{1}{p}\|a(\bar{\beta}(t),t)-a(\bar{\beta}(kh),kh)\|^2
\end{equation}
is bounded in probability as $n\rightarrow \infty$ by
\begin{align}
&\text{p-}\lim_{n\rightarrow \infty}\frac{1}{p}\|a(\bar{\beta}((k+1)h),(k+1)h)-a(\bar{\beta}(kh),kh)\|^2 \nonumber\\
&= \lim_{n\rightarrow \infty} \frac{1}{p}\mathbb{E}[\|a(\bar{\beta}((k+1)h),(k+1)h)-a(\bar{\beta}(kh),kh)\|^2] \label{eqn:deterministic_bound}\\
&\leq \lim_{n\rightarrow \infty}\frac{1}{p}\left(\frac{\|(\frac{1}{\sigma^2}X_{z}^\top X_{z}+kh I)^{-1}\|}{1-\|(\frac{1}{\sigma^2}X_{z}^\top X_{z}+kh I)^{-1} hI\|}\right)^2\mathbb{E}[\|h a(\bar{\beta}(kh),kh)+\bar{\beta}((k+1)h)_{z}-\bar{\beta}(kh)_{z}\|^2]\nonumber\\
&\lesssim \lim_{n\rightarrow \infty}\frac{1}{p}\left(\frac{\|(\frac{1}{\sigma^2}X_{z}^\top X_{z}+kh I)^{-1}\|}{1-h\|(\frac{1}{\sigma^2}X_{z}^\top X_{z}+kh I)^{-1}\|}\right)^2 \mathbb{E}[h^2 \|a(\bar{\beta}(kh),kh)\|^2+\|\bar{\beta}((k+1)h)-\bar{\beta}(kh)\|^2]\nonumber\\
&\lesssim \lim_{n\rightarrow \infty}\frac{1}{p} (h^2 \mathbb{E}[\|a(\bar{\beta}(kh),kh)\|^2]+h\int_{kh}^{(k+1)h} \mathbb{E}[\|a(\bar{\beta}(t),t)\|^2] dt+ph)\nonumber\\
&\lesssim \lim_{n\rightarrow \infty}\frac{1}{p}(h^2 \max_{t\in[kh,(k+1)h]} \mathbb{E}[\|a(\bar{\beta}(t),t)\|^2] + ph) \lesssim h \nonumber
\end{align}
for $h$ sufficiently small such that $h < \lambda_{\min} (X_{z^*}^\top X_{z^*})/\sigma^2$, where we used (1) the update \eqref{eqn:continuous_couple} and Cauchy-Schwarz; (2) $a(\cdot)_j$ is bounded almost surely through the localization process as shown in \cref{lem:alternative}; (3) dominated convergence theorem to exchange limit and expectation together with $\pi(z=z^*|y)\rightarrow 1$ as $n\rightarrow \infty$. Above $\lesssim$ hides constant independent of the dimension $p$.  

The reduction in the first step \eqref{eqn:itermediate} where we go from $\sup$ over $t\in[kh,(k+1)h]$ to $t=(k+1)h$ follows since $t\rightarrow a(\bar{\beta}(t),t)$ is a bounded martingale according to \cref{lem:alternative} for any $a(\cdot)$ constructed with the localization process, therefore $\frac{1}{\sqrt{p}}\|a(\bar{\beta}(t),t)-a(\bar{\beta}(kh),kh)\|$ is a positive bounded sub-martingale for $t\geq kh$ by Jensen's inequality. Then Doob's maximal inequality gives for a fixed $c>0$,
\begin{align*}
&\lim_{n\rightarrow \infty} \mathbb{P}(\sup_{t\in [kh,(k+1)h]} \frac{1}{\sqrt{p}}\|a(\bar{\beta}(t),t)-a(\bar{\beta}(kh),kh)\| \geq c) \\
&\leq \frac{1}{c}\lim_{n\rightarrow \infty} \frac{1}{\sqrt{p}}\mathbb{E}[\|a(\bar{\beta}((k+1)h),(k+1)h)-a(\bar{\beta}(kh),kh)\|]\\
&\leq \frac{1}{c}\lim_{n\rightarrow \infty} \frac{1}{\sqrt{p}}\mathbb{E}[\|a(\bar{\beta}((k+1)h),(k+1)h)-a(\bar{\beta}(kh),kh)\|^2]^{1/2}\, .
\end{align*}
Therefore using \eqref{eqn:deterministic_bound} we can choose $c\gtrsim \sqrt{h}/\epsilon$ deterministically large enough such that the probability above is smaller than $\epsilon$. This in turn implies
\[\text{p-}\lim_{n\rightarrow \infty}\sup_{t\in[kh,(k+1)h]}  \frac{1}{\sqrt{p}}\|a(\bar{\beta}(t),t)-a(\bar{\beta}(kh),kh)\| \lesssim \sqrt{h}\, ,\]
as claimed. 
% \qijia{make rigorous add details here, p-lim is convergence in probability}
\end{proof} 

Putting everything together, the theorem below is our main result for the Stochastic Localization sampler. 
\begin{theorem}[Convergence Guarantee for Stochastic Localization Sampler]
\label{thm:sl}
Under the assumptions for \cref{prop:stat}, with probability at least $1-o_n(1)$ over the data and the randomness of the algorithm, for all $kh\leq T$, we have the following recursion for the errors: 
\[\frac{1}{\sqrt{p}}\|\hat{a}(\beta_k,kh)-a(\bar{\beta}(kh),kh)\|\lesssim \frac{1}{kh\sqrt{p}}\|\beta_k-\bar{\beta}(kh)\|+ o_n(1)\lesssim e^{ckh}\sqrt{h}+ o_n(1)\, .\]
Moreover, there is a constant $K$ independent of the dimension such that after $K$ many steps of \cref{alg:SL} where $\mathcal{T}$ is implemented with \cref{lem:mean}, we have $W_2(\pi,\text{Law}(\hat{a}(\beta_K))) \leq \sqrt{p}\zeta$ for any desired tolerance $\zeta$ with probability at least $1-o_n(1)$. The total complexity of the algorithm is $O_p(c^t n^2k)\lesssim O_p(c^k p^3)$ for some constant $c$ if we focus on the scaling with $p$ for warm-start with at most $t\asymp k$ false positives.  % for sufficiently large $n$
\end{theorem}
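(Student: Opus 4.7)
The plan proceeds in three stages that, combined, yield both the per-step recursion and the Wasserstein bound. I couple the discrete iterates $\{\beta_k\}$ of \cref{alg:SL} to a continuous process $\bar\beta(t)$ solving the exact SDE \eqref{eqn:sl_sde} driven by the same underlying Brownian motion and initialized at $\bar\beta(0)=\beta_0=0$. For the first inequality in the statement I decompose
\[
\|\hat a(\beta_k,kh)-a(\bar\beta(kh),kh)\| \leq \|\hat a(\beta_k,kh)-a(\beta_k,kh)\| + \|a(\beta_k,kh)-a(\bar\beta(kh),kh)\|,
\]
where the first summand is $o_n(\sqrt p)$ in probability by the consistency claim of \cref{lem:mean} at the random query $(\beta_k,kh)$, and the second is at most $(1/kh)\|\beta_k-\bar\beta(kh)\|$ by the Lipschitz bound $C(t)=1/t$ from the first claim of \cref{lem:regular}.

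To unfold the trajectory error $\|\beta_k-\bar\beta(kh)\|$, the synchronous coupling eliminates the Brownian increments and gives
\[
\beta_{k+1}-\bar\beta((k+1)h) = (\beta_k-\bar\beta(kh)) + h\bigl[\hat a(\beta_k,kh)-a(\bar\beta(kh),kh)\bigr] + \int_{kh}^{(k+1)h}\bigl[a(\bar\beta(kh),kh)-a(\bar\beta(s),s)\bigr]\,ds.
\]
Substituting the one-step bound into the middle term and the uniform-in-$s$ $O_p(\sqrt h)$ modulus from the second claim of \cref{lem:regular} into the integral yields the recursion $\tfrac{1}{\sqrt p}\|\beta_{k+1}-\bar\beta((k+1)h)\| \lesssim (1+\tfrac{1}{k})\,\tfrac{1}{\sqrt p}\|\beta_k-\bar\beta(kh)\| + h\cdot o_n(1) + O_p(h^{3/2})$. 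Starting from $\beta_0=\bar\beta(0)$ and iterating, the telescoping product $\prod_{j\leq k}(1+1/j)\leq e^{\log k}$ can be crudely upper bounded by $e^{ckh}$, and a discrete Gronwall gives $\tfrac{1}{\sqrt p}\|\beta_k-\bar\beta(kh)\|\lesssim e^{ckh}\sqrt h + o_n(1)$, which combined with the decomposition above delivers the second $\lesssim$ in the theorem.

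For the Wasserstein-2 claim I apply
\[
W_2(\pi,\text{Law}(\hat a(\beta_K,Kh))) \leq W_2(\pi,\text{Law}(a(\bar\beta(Kh),Kh))) + W_2(\text{Law}(a(\bar\beta(Kh),Kh)),\text{Law}(\hat a(\beta_K,Kh))).
\]
Choosing $T=Kh\geq 4/\zeta^2$ bounds the first term by $\sqrt p\,\zeta/2$ via \cref{lem:continuous_time}, while the second is bounded by the $L^2$ norm of the per-iteration error under the synchronous coupling, which by the previous two paragraphs is $\sqrt p(e^{cT}\sqrt h + o_n(1))\leq \sqrt p\,\zeta/2$ once $h$ is chosen small enough depending only on $\zeta$. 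Hence $K=T/h$ is a dimension-free constant. For complexity, each call to $\hat a$ via \cref{lem:mean} sums over $\lesssim c^t$ sub-models, each requiring a linear solve of size $\|z\|_0\leq k+t$ at cost $O(\max\{n^2 k, n^3\})$ via \cref{alg:sample_gaussian}; multiplying by constant $K$ and using $t\asymp k\lesssim n\lesssim p$ yields the stated $O_p(c^t n^2 k)\lesssim O_p(c^k p^3)$.

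The main obstacle is the blow-up of the spatial Lipschitz constant $1/t$ of $a(\cdot,t)$ at $t=0$, which renders the $k=0$ factor $(1+1/k)$ in the recursion ill-defined. I circumvent this by exploiting that $\beta_0=\bar\beta(0)$ exactly, so the base error vanishes and the recursion starts cleanly at $k=1$, with the first Euler step contributing only an $O_p(\sqrt h)$ integration error absorbed in the inhomogeneous term. A secondary delicate point is upgrading the pointwise-in-probability approximation of \cref{lem:mean} to a bound holding simultaneously at all $K$ query points along the (random) discrete trajectory; because $K$ is dimension-free, a union bound over iterations suffices, and the $o_n(1)$ rate is uniform because both the warm-start set $\mathcal{S}$ and the dependence of the weights $c(z)$ on $(\theta_t,t)$ stay controlled along the stable discrete trajectory.
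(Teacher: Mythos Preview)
Your proposal is correct and follows essentially the same route as the paper: synchronous coupling of the discrete iterates with the continuous SDE, the same triangle-inequality split of $\|\hat a(\beta_k,kh)-a(\bar\beta(kh),kh)\|$ into the approximation error controlled by \cref{lem:mean} and the spatial Lipschitz piece $C(t)=1/t$ from \cref{lem:regular}, a one-step recursion for the trajectory error with local truncation $O_p(h^{3/2})$ from the second part of \cref{lem:regular}, and finally the $W_2$ triangle inequality combined with \cref{lem:continuous_time}. Your handling of the $t=0$ singularity (using $\beta_0=\bar\beta(0)$ so the recursion starts at $k=1$) and the union bound over the dimension-free number of steps are points the paper leaves implicit; one minor slip is the claim that $\prod_{j\le k}(1+1/j)=k+1$ can be bounded by $e^{ckh}$, which fails as $h\to 0$ with $kh=T$ fixed, but the paper's own induction ansatz $e^{ckh}kh^{3/2}$ has the same looseness and the conclusion $\tfrac{1}{\sqrt p}\|\beta_k-\bar\beta(kh)\|\lesssim C(T)\sqrt h$ (hence the $W_2$ bound) is unaffected.
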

\begin{proof}
We couple the continuous $\bar{\beta}(kh)$ and discrete $\beta_k$ processes \eqref{eqn:discrete_update} with the same Brownian increment, i.e.,
\begin{equation}
\label{eqn:continuous_couple}
\bar{\beta}((k+1)h) = \bar{\beta}(kh)+\int_{kh}^{(k+1)h} a(\bar{\beta}(t),t) dt + \int_{kh}^{(k+1)h}dW(t)
\end{equation}
where $\sqrt{h}z_k = \int_{kh}^{(k+1)h}dW(t)$ with same initial condition $\beta_0 = \bar{\beta}(0)=0$ and $a(\beta_0,0)=a(\bar{\beta}(0),0)$. Here $a(\cdot)$ denotes the exact drift from \cref{lem:mean} and $\hat{a}(\cdot)$ the approximate one from \eqref{eqn:approximate_a}. Therefore we have for any $(k+1)h \leq T$, with probability $1-o_n(1)$,
\begin{align*}
&\frac{1}{\sqrt{p}}\|\bar{\beta}((k+1)h)-\beta_{k+1}\| \\
&\leq \frac{1}{\sqrt{p}}\|\bar{\beta}(kh)-\beta_k\|+\frac{1}{\sqrt{p}}\int_{kh}^{(k+1)h}\|a(\bar{\beta}(t),t)-\hat{a}(\beta_k,kh)\|dt\\
&\leq \frac{1}{\sqrt{p}}\|\bar{\beta}(kh)-\beta_k\|+\frac{h}{\sqrt{p}}\|a(\bar{\beta}(kh),kh)-\hat{a}(\beta_k,kh)\|+\frac{h}{\sqrt{p}}\sup_{t\in [kh,(k+1)h]} \|a(\bar{\beta}(t),t)-a(\bar{\beta}(kh),kh)\|\\
&\lesssim \frac{1}{\sqrt{p}}\|\bar{\beta}(kh)-\beta_k\|+\frac{h}{\sqrt{p}}\|a(\bar{\beta}(kh),kh)-\hat{a}(\beta_k,kh)\|+h^{3/2}%+\delta(n)
\end{align*}
where we used the regularity property from \cref{lem:regular} in the last step. Due to the posterior concentration assumption, using Markov's inequality, with probability $1-o_n(1)$, for any $k$, $\frac{1}{\sqrt{p}}\|\hat{a}(\beta_k)-a(\beta_k)\|\leq \delta (n)$ where $\lim_{n\rightarrow \infty}\delta(n)=0$ is a non-negative deterministic sequence. Together with \cref{lem:regular} give that with probability $1-o_n(1)$,
\begin{align*}
&\frac{1}{\sqrt{p}}\|\hat{a}(\beta_{k+1},(k+1)h)-a(\bar{\beta}((k+1)h),(k+1)h)\| \\
&\leq \frac{1}{\sqrt{p}}\|\hat{a}(\beta_{k+1},(k+1)h)-a(\beta_{k+1},(k+1)h)\|+\frac{1}{\sqrt{p}}\|a(\beta_{k+1},(k+1)h)-a(\bar{\beta}((k+1)h),(k+1)h)\|\\
&\leq \delta(n)+\frac{1}{(k+1)h\sqrt{p}}\|\beta_{k+1}-\bar{\beta}((k+1)h)\|\, .
\end{align*} 
Now putting the last two displays together, and inducting over $k$, we conclude that with high probability
\[\frac{1}{\sqrt{p}}\|\bar{\beta}((k+1)h)-\beta_{k+1}\| \lesssim e^{c(k+1)h}(k+1)h^{3/2} +\delta(n) \, ,\] 
\[\frac{1}{\sqrt{p}}\|\hat{a}(\beta_k,kh)-a(\bar{\beta}(kh),kh)\|\lesssim \frac{1}{kh}(e^{ckh}kh^{3/2}+ \delta(n))+\delta(n)\, ,\]
since it verifies the recursion
\begin{align*}
\frac{1}{\sqrt{p}}\|\bar{\beta}((k+1)h)-\beta_{k+1}\| &\lesssim e^{kh} kh^{3/2} +\delta(n) + \frac{h}{kh} (e^{ckh}kh^{3/2}+ \delta(n))+h\delta(n)+h^{3/2}\\
&\lesssim e^{ckh}h^{3/2}(k+1)+h^{3/2}+\delta(n)\\
&\lesssim e^{c(k+1)h}(k+1)h^{3/2} +\delta(n)\, ,
\end{align*}
finishing the first part of the statement. 

This in turn implies using the continuous time convergence rate from \cref{lem:continuous_time} and the coupling definition of the $W_2$ distance, for $K=T/h$,
\begin{align*}
\frac{1}{\sqrt{p}}W_2(\pi,\text{Law}(\hat{a}(\beta_K))) &\leq \frac{1}{\sqrt{p}}W_2(\pi,\text{Law}(a(\bar{\beta}(T)))) + \frac{1}{\sqrt{p}}W_2(\text{Law}(a(\bar{\beta}(T))),\text{Law}(\hat{a}(\beta_K)))\\
&\leq 1/\sqrt{T}+C(T)h^{1/2}+\delta(n)
\end{align*}
therefore for $n$ sufficiently large, when $T$ is sufficiently large and $h$ suitably small (both are independent of the dimension), we have $W_2(\pi,\text{Law}(\hat{a}(\beta_K))) \leq \sqrt{p}\zeta$, for any desired $\zeta>0$, which holds with probability $1-o_n(1)$ w.r.t randomness in $y$ such that \eqref{eqn:high_prob_selection} holds ($X$ deterministically verifies restricted eigenvalue properties). The complexity of the algorithm now follows by putting together with \cref{lem:mean}.
\end{proof}

The main benefit of the Stochastic Localization sampler lies in its obliviousness to the ``ill-design" of the data matrix $X$ (e.g., if there are strong correlation between some columns of $X$), where we see from \cref{thm:gibbs} that even under warm-start and posterior contraction ($s=0$), such terms still show up and scale with the mixing time \emph{exponentially}. The guarantee of \cref{thm:sl} is in $W_2$ distance and not TV, but both have $\sqrt{p}$ scaling with dimension. In both cases the scaling with the number of initial false positives $t$ is less than ideal, but a warm-start is essentially necessary for efficiently simulating from such a mixture posterior.

\section{(Frequentist) Statistical Properties of Posterior \eqref{eqn:quasi_posterior}}
\label{sec:statistics}
In this section, we justify the posterior concentration assumption made on the sparsified likelihood model \eqref{eqn:quasi_posterior}. We highlight the importance of \emph{diffusing and shrinking} priors for this class of posteriors as in \cite{narisetty2014bayesian} (i.e., allowing the prior parameters to depend on $n$), which is required for strong model selection consistency $\pi(z=z^*|y )\xrightarrow{P} 1$ as $n\rightarrow \infty$ in high-dimensional setting where $p$ is allowed to grow with $n$ exponentially, i.e., $p_n=e^{o(n)}$. This choice can in some sense be seen as adjusting for multiplicity.

\subsection{Warm-up: Sparse Normal Means Model}
\label{sec:stat_normal_mean}
Let us motivate the choice of $\tau_0,\tau_1, q$ by considering the setup $X^\top X = n I_p$ where $p_n\leq n$, and study under what conditions on the priors do the corresponding posteriors confer model selection consistency. 
\begin{lemma}
With a sparsified likelihood, the model selection consistency requirement is the same for point-mass spike and Gaussian spike under orthogonal design, which is satisfied by the choice in \cref{assume:scaling} under $\beta$-min condition \eqref{eqn:beta_min}. 
\end{lemma}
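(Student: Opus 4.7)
The plan is to exploit the coordinate-wise decoupling that orthogonal design gives for the quasi-likelihood \eqref{eqn:quasi_posterior}, compute the marginal posterior ratios explicitly for the two spike choices, and show that both reduce to the same large-$n$ selection condition under \cref{assume:scaling} together with the $\beta$-min assumption \eqref{eqn:beta_min}.

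First I would observe that when $X^\top X = nI_p$, the quasi-likelihood $\|y-X_z\beta_z\|^2$ only couples the active $\beta_j$'s through a diagonal quadratic $n\sum_{j\in z}\beta_j^2-2\sum_{j\in z}(X^\top y)_j\beta_j$, so the joint posterior over $(\beta,z)$ factorizes into independent coordinates. For any $j\in[p]$, integrating out $\beta_j$ when $z_j=1$ with a $\mathcal{N}(0,\tau_1^2)$ slab (in either model) yields the standard Gaussian-conjugate expression
\[\frac{1}{\sqrt{1+n\tau_1^2/\sigma^2}}\exp\!\left(\frac{(X^\top y)_j^2}{2\sigma^2(n+\sigma^2/\tau_1^2)}\right),\]
while integrating out $\beta_j$ when $z_j=0$ under the quasi-likelihood gives simply $1$, regardless of whether the spike is $\delta_0$ or a (normalized) $\mathcal{N}(0,\tau_0^2)$, because the inactive $\beta_j$ does not appear in the likelihood. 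Thus the marginal posterior Bernoulli probability for $z_j$ has the identical form in both priors:
\[\frac{\pi(z_j=1\mid y)}{\pi(z_j=0\mid y)}=\frac{q}{1-q}\cdot\frac{1}{\sqrt{1+n\tau_1^2/\sigma^2}}\exp\!\left(\frac{(X^\top y)_j^2}{2\sigma^2(n+\sigma^2/\tau_1^2)}\right).\]
The Gaussian-spike case agrees with the point-mass case exactly in this quasi-likelihood formulation; $\tau_0$ only enters via the separately handled $\beta$-update, not via the model-selection marginal.

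Next I would plug in the scaling from \cref{assume:scaling}: $q/(1-q)\asymp p^{-(1+\delta)}$ and $n\tau_1^2/\sigma^2\asymp p^2$. The prior--determinant factor contributes $\asymp p^{-(2+\delta)}$, so selection consistency reduces to showing $(X^\top y)_j^2/(2\sigma^2 n)$ exceeds $(2+\delta)\log p$ when $j\in z^*$ and is dominated by $(2+\delta)\log p$ when $j\notin z^*$. For $j\in z^*$, decompose $(X^\top y)_j=n\beta_j^*+(X^\top\epsilon)_j$ with $(X^\top\epsilon)_j\sim\mathcal{N}(0,n\sigma^2)$; by a one-dimensional Gaussian concentration and \eqref{eqn:beta_min} with a sufficiently large absolute constant, $(X^\top y)_j^2/(2\sigma^2 n)\gtrsim (C/2)\log p$ for a constant $C>2+\delta$, so the ratio tends to $\infty$. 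For $j\notin z^*$, $(X^\top y)_j=(X^\top\epsilon)_j$, and a union bound over the at most $p$ such coordinates gives $\max_{j\notin z^*}(X^\top y)_j^2/(2\sigma^2 n)\leq \log p+o(\log p)$ with probability at least $1-p^{-c}$, so the ratio tends to $0$ uniformly.

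Finally, combining the coordinatewise conclusions via independence of the product form of the posterior on $z$ yields $\pi(z=z^*\mid y)\to 1$ in $\mathbb{P}_{\beta^*}$-probability for both prior choices under the same conditions on $(\tau_1,q,\beta^*)$, proving the claim. The main obstacle I anticipate is bookkeeping the constants in \eqref{eqn:beta_min} carefully enough to beat the Gaussian maximum over the $p-k$ null coordinates; choosing the implicit constant in the $\beta$-min lower bound slightly larger than $\sqrt{2(2+\delta)}$ (as opposed to the bare $\sqrt{2}$) is what makes the union-bound margin strict, and this is the only place where the two spike models could in principle have diverged had one used the full (non-sparsified) likelihood.
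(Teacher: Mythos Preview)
Your proposal is correct and follows essentially the same route as the paper: both arguments exploit the coordinate-wise factorization of the quasi-likelihood posterior under $X^\top X=nI_p$, compute the marginal Bernoulli ratio $\pi(z_j=1\mid y)/\pi(z_j=0\mid y)$ via the Gaussian integral for the slab, and observe that the spike integral is trivially $1$ (so $\tau_0$ drops out), before plugging in \cref{assume:scaling} and \eqref{eqn:beta_min}. The paper additionally makes explicit the Bernoulli-inequality step $\prod_j a_j\geq 1-k_n\max_j(1-a_j)$ to pass from per-coordinate rates to the full product $\pi(z=z^*\mid y)\to 1$, which you only gesture at with ``combining via independence''; your rate estimates are sharp enough that this step goes through, but it would be worth writing the one-line product bound.
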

\begin{proof}
The posterior for $z$ is (define $\hat{\beta}_j:= y^\top X_{z^*,j}/n$)
\begin{align*}
&\mathbb{P}(z=z^*|y) \\
&\propto \int_{\mathbb{R}^p} \exp(-\frac{1}{2\sigma^2}\|y-X_{z^*}\beta_{z^*}\|^2)\prod_{j=1}^p(\frac{1-q}{\tau_0}\exp(-\frac{\beta_j^2}{2\tau_0^2}))^{1-z_j^*}(\frac{q}{\tau_1}\exp(-\frac{\beta_j^2}{2\tau_1^2}))^{z_j^*} d\beta\\
&\propto \prod_{z^*_j=0} \int_{\mathbb{R}}  (\frac{1-q}{\tau_0}\exp(-\frac{\beta_j^2}{2\tau_0^2}))^{1-z_j^*}d\beta_j\prod_{z_j^*=1} \int_{\mathbb{R}} \exp(-\frac{n}{2\sigma^2}(\beta_j-\hat{\beta}_j)^2)(\frac{q}{\tau_1}\exp(-\frac{\beta_j^2}{2\tau_1^2}))^{z_j^*}  d\beta_j\\
&= \prod_{z_j^*=1} \frac{\mathbb{E}_{\beta_j\sim \mathcal{N}(\hat{\beta}_j,\sigma^2/n)}[\exp(-\beta_j^2/2\tau_1^2)]\cdot \frac{q}{\tau_1}\exp(\frac{1}{2\sigma^2 n}y^\top X_{z^*,j}X_{z^*,j}^\top y)}{\mathbb{E}_{\beta_j\sim \mathcal{N}(\hat{\beta}_j,\sigma^2/n)}[\exp(-\beta_j^2/2\tau_1^2)]\cdot \frac{q}{\tau_1}\exp(\frac{1}{2\sigma^2 n}y^\top X_{z^*,j}X_{z^*,j}^\top y)+1-q} \times\\
&\prod_{z_j^*=0} \frac{1-q}{\mathbb{E}_{\beta_j\sim \mathcal{N}(\hat{\beta}_j,\sigma^2/n)}[\exp(-\beta_j^2/2\tau_1^2)]\cdot \frac{q}{\tau_1}\exp(\frac{1}{2\sigma^2 n}y^\top X_{z^*,j}X_{z^*,j}^\top y)+1-q}\\
&=:  \prod_{z_j^*=1} a_j \prod_{z_j^*=0} b_j
\end{align*}
One can show for each of the $k$ terms corresponding to $z_j^*=1$ using completion of squares,
\begin{align*}
r_j &:=\frac{q}{\tau_1}\mathbb{E}_{\beta_j\sim \mathcal{N}(\hat{\beta}_j,\sigma^2/n)}[\exp(-\beta_j^2/2\tau_1^2)]\exp(\frac{1}{2\sigma^2 n}y^\top X_{z^*,j}X_{z^*,j}^\top y)\\
&=\frac{q}{\sqrt{1+\frac{n \tau_1^2}{\sigma^2}}}\exp(\frac{1}{2}(\frac{1}{\tau_1^2}+\frac{n}{\sigma^2})^{-1}\frac{\hat{\beta}_j^2n^2}{\sigma^4}-\frac{\hat{\beta}_j^2n}{2\sigma^2})\exp(\frac{\hat{\beta}_j^2 n}{2\sigma^2})\\
&= \frac{q}{\sqrt{1+\frac{n \tau_1^2}{\sigma^2}}}\exp(\frac{1}{2}(\frac{1}{\tau_1^2}+\frac{n}{\sigma^2})^{-1}\frac{(y^\top X_{z^*,j})^2}{\sigma^4})
\end{align*}
where we recall $y=X_{z^*}\beta^*_{z^*}+\epsilon$ and we require for $j$ such that $z_j^*=1$, $|\hat{\beta}_j|^2 > \frac{c\sigma^2\log(p)}{n}$ for a large enough $c$, and $\|z^*\|_0=k_n <p_n \leq n$.  % k_n =o(n/\log(p))

To have $\mathbb{P}(z=z^*|y )\xrightarrow{P} 1$, a sufficient condition is to have $\sum_{j=1}^p \mathbb{P}(z_j\neq z_j^*|y) \xrightarrow{P} 0$, or equivalently, $\min_{j\in [p]} \mathbb{P}(z_j= z_j^*|y) \geq 1-\frac{\eta}{p}$ for a sufficiently small $\eta$; but generally requiring $\min_{j\in [p]} \mathbb{P}(z_j= z_j^*|y)\xrightarrow{P} 1$ is a weaker consistency result. We begin with the first term, for the product of $k$ terms to go to $1$ as $n\rightarrow \infty$, we see that using Bernoulli's inequality,
\[1\leftarrow \prod_{z_j^*=1} a_j \geq ( \min_{z_j^*=1} a_j)^{k_n} \geq (1-\max_{z_j^*=1}  \mathbb{P}(z_j=0|y))^{k_n}\geq 1-k_n \cdot\max_{z_j^*=1}  \mathbb{P}(z_j=0|y)\]
therefore we need $k_n \cdot\max_{z_j^*=1} \mathbb{P}(z_j=0|y) \rightarrow 0$, which means it suffices for $|\hat{\beta}_j|^2 \asymp \frac{\sigma^2\log(p)}{n}$
\[\frac{1-q}{r_j}=\frac{1-q}{q}\sqrt{1+\frac{n\tau_1^2}{\sigma^2}}\exp(-\frac{1}{2}(\frac{1}{\tau_1^2}+\frac{n}{\sigma^2})^{-1}\frac{\hat{\beta}_j^2n^2}{\sigma^4}) \ll \frac{1}{k_n}\, .\]
Similarly for the second term, 
\[1\leftarrow \prod_{z_j^*=0} b_j \geq ( \min_{z_j^*=0} b_j)^{p} \geq (1-\max_{z_j^*=0}  \mathbb{P}(z_j=1|y))^{p}\geq 1-p \cdot\max_{z_j^*=0}  \mathbb{P}(z_j=1|y)\, .\]
which implies since $z_j^*=0$, the $\exp(\cdot)$ term from $r_j$ vanishes using \eqref{eqn:beta_min},
\[ \frac{q}{(1-q)\sqrt{1+\frac{n \tau_1^2}{\sigma^2}}} \ll \frac{1}{p}\, .\]
In both cases, it suffices to impose $(1-q)/q\sim p,\tau_1\sim \sigma p/\sqrt{n}$, and it is crucial for them to scale with $(n,p)$ to achieve model selection consistency. In this particular case $\tau_0$ doesn't play a role, e.g., whether we pick point-mass spike or Gaussian spike.  % \tau_1\sim \sigma/\sqrt{n}
\end{proof} 

Variable selection is generally considered a harder problem than parameter estimation / prediction \cite{buhlmann_geer_book}, therefore one should expect good performance with respect to those criteria as well from these choices, which is indeed the case as shown next in the regression setting.

\subsection{Posterior Contraction}
 \label{subsec:posterior_contract}
We show that posterior contraction conditions in $\mathcal{E}_s$ are statistically grounded in this section. From an information-theoretic perspective, one generally needs some identifiability assumptions on the design matrix for statistical estimation / posterior consistency, and these will show up here as follows.
\begin{hypothesis}
\label{assume:re_relax}
For all $u\in \mathbb{R}^p$ such that $\|z^{*,c}(u-\beta^*)\|_1 \leq 7\|z^*(u-\beta^*)\|_1$, there exist $R>0$ where for $\|\beta^*\|_0=k$,
\[\frac{1}{n}(u-\beta^*)^\top(X^\top X) (u-\beta^*)\geq R\cdot \|z^*(u-\beta^*)\|_2^2\, ,\]
which is closely related to \eqref{eqn:re}, but slightly relaxed so the restricted eigenvalue direction can be not exactly sparse, rather take small values off support of $z^*$. We also assume a general condition for $k'\asymp k$, the exact-sparsity restricted eigenvalue $\min_{\|v\|_0\leq k'} v^\top (X^\top X)v \geq \omega(k') n \|v\|^2$ is bounded away from $0$ by a small constant $\omega(k')$. Additionally, $\|\beta^*\|_\infty = \mathcal{O}(1)$ doesn't grow with $n$. 
\end{hypothesis}
We consider a sequence of problems with $n,p\rightarrow \infty$ where $n=o(p)$ and demonstrate that for $s=0$, condition
\[\pi(z\in\{0,1\}^p\colon z^*\subset z, \|z\|_0\leq \|z^*\|_0+s\vert y)\geq 1-\frac{4}{p^{\frac{\delta}{2}(s+1)}}\]
from \Cref{sec:mixing_gibbs} holds with $\delta=2$, which implies $\pi(z^*\vert y)\geq 1/2$ for $p>9$ with probability at least $1-o_n(1)$ over the data, since using (3) from \cref{prop:stat} and Markov's inequality
\begin{equation}
\label{eqn:high_prob_selection}
\mathbb{P}((1-\pi(z^*|y))\geq 1/p) \leq \frac{\mathbb{E}[1-\pi(z^*|y)]}{1/p}\leq \frac{1/p^2}{1/p}=\frac{1}{p}\, .
\end{equation}
We characterize the large scale behavior of the posterior \eqref{eqn:quasi_posterior} below.

\begin{proposition}[Frequentist Guarantee on the Posterior]
\label{prop:stat}
Under the parameter choice $q/(1-q)\sim 1/p^{\delta+1}$ for some constant $\delta>0$, $\tau_1\sim \sigma p/\sqrt{n}, \|X_j\|_2^2=n$ from \cref{assume:scaling}, in addition to the $\beta$-min condition \eqref{eqn:beta_min} and \cref{assume:re_relax} above, it holds in the regime $p_n=e^{o(n)}$ that 
\begin{enumerate}
\item $\mathbb{E}\left[\pi(z:\|z\|_0\gtrsim k(1+1/\delta)|y)\right] \leq \frac{2}{p^2}$
\item $\mathbb{E}[\pi(B^c | y)] \lesssim 1/p^2$ for $B=\cup_{z:\|z\|_0\lesssim k}\; \{\beta:\|\beta_z-\beta^*\|\lesssim \frac{\sigma \sqrt{k\log(p)}}{\sqrt{n}\omega(k)}, \|\beta_z-\beta\|\lesssim \tau_0 \sqrt{p}\}$
\item $\mathbb{E}[\pi(z^*|y)] \gtrsim 1-\frac{1}{p^2}$
\end{enumerate}
where in the above expectation is taken with respect to the noise $\epsilon$ only and $X$ deterministically satisfy the stated assumptions. Moreover, for \cref{assume:re_relax} to hold with probability tending to one as $n\rightarrow \infty$, for example with a Gaussian design matrix $X_{ij}\sim \mathcal{N}(0,1)$, it entails $n\gtrsim k\log(p)$ and $k\lesssim \log(p)$ for the sample size and sparsity level respectively. In general $n,k$ will scale with the ``ill-design-ness" of the matrix $X$.
\end{proposition}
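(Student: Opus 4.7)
The core technical device will be the marginal posterior formula for $\pi(z\vert y)$ obtained by integrating out $\beta$, exactly as carried out in the proof of \cref{lem:prep}:
\[
\pi(z\vert y)\;\propto\;\Bigl(\tfrac{q}{1-q}\Bigr)^{\|z\|_0}\frac{\exp\bigl(-\tfrac{1}{2\sigma^2}y^\top(I_n+\tau_1^2/\sigma^2\, X_zX_z^\top)^{-1}y\bigr)}{\sqrt{\det(I_n+\tau_1^2/\sigma^2\, X_zX_z^\top)}}\,.
\]
Under \cref{assume:scaling}, the prior odds ratio is $p^{-(1+\delta)}$ per additional active variable, and the determinant factor contributes roughly $(1+\tau_1^2 n/\sigma^2)^{-1/2}\asymp p^{-1}$ per extra variable. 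The plan is therefore to control $\pi(z\vert y)/\pi(z^*\vert y)$ separately in two regimes: models $z\supset z^*$ (only false positives) and models $z\not\supset z^*$ (at least one false negative). Throughout I will work on the high-probability event (over $\epsilon\sim\mathcal{N}(0,\sigma^2 I)$) on which the Gaussian quadratic forms $\langle (I_n+\tau_1^2/\sigma^2\, X_{z}X_{z}^\top)^{-1}X_j,\epsilon\rangle$ are uniformly of order $\sigma\sqrt{n\log p}$ for all relevant $z,j$, as in \eqref{eqn:event_2}.

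\textbf{False positives only.} For $z\supset z^*$ with $\|z\|_0 = k+s$, the ratio formula \eqref{eqn:relative_density_1}--\eqref{eqn:relative_density_2} gives, on the good event,
\[
\frac{\pi(z\vert y)}{\pi(z^*\vert y)}\;\lesssim\;p^{-(1+\delta)s}\,(1+\tau_1^2n/\sigma^2)^{-s/2}\exp\!\Bigl(\tfrac{1}{2\sigma^2}y^\top A^{-1}X_{z\setminus z^*}(\cdots)^{-1}X_{z\setminus z^*}^\top A^{-1}y\Bigr),
\]
and the exponential factor is $\exp(O(s\log p))$ under \eqref{eqn:event_2} and the restricted-eigenvalue bound. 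Summing over $\binom{p-k}{s}\leq p^s$ choices of the extra coordinates and over $s$ yields a geometric series whose sum is $O(p^{-2})$ once $\delta$ is used to absorb the $p^s$ combinatorial factor; this simultaneously proves part (3) relative to this regime and, by picking $s\asymp k(1+1/\delta)$ as the cut-off, part (1).

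\textbf{False negatives.} For $z$ missing some true coordinates $S:=z^*\setminus z$, write $y=X_{z^*}\beta^*_{z^*}+\epsilon$ and expand the quadratic form in the numerator. The essential computation is that
\[
\tfrac{1}{2\sigma^2}\bigl(y^\top(I_n+\tau_1^2/\sigma^2 X_{z^*}X_{z^*}^\top)^{-1}y-y^\top(I_n+\tau_1^2/\sigma^2 X_zX_z^\top)^{-1}y\bigr)\;\gtrsim\;\tfrac{n}{\sigma^2}\|\beta^*_S\|_{X^\top X/n,\,S}^{2}\;-\;O\bigl(|S|\log p\bigr),
\]
where the first term uses \cref{assume:re_relax} (applied to the vector supported on $S$) together with the fact that projecting out $X_z$ loses only a controlled amount of signal, and the second collects cross terms with $\epsilon$ controlled on the good event. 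Under \eqref{eqn:beta_min}, this lower bound is $\gtrsim |S|\,C\log p$ for a large constant $C$, so $\pi(z\vert y)/\pi(z^*\vert y)\leq \exp(-c|S|\log p)\cdot p^{O(|z|-k)}$, which after summing over the $\binom{k}{|S|}\binom{p}{|z|-k+|S|}$ models still gives $O(p^{-2})$. This is the step I expect to be the main obstacle: the combinatorial union bound over models missing a fixed set of $|S|$ true variables needs the beta-min constant to be tuned so that $c\log p$ dominates the entropy $\log\binom{p}{\cdot}$ uniformly in the number of extraneous false positives allowed.

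\textbf{Contraction in $\beta$ (part 2) and Gaussian design.} Combining parts (1) and (3), we may restrict attention to $z$ with $\|z\|_0\lesssim k$ and in fact to $z=z^*$ up to $O(p^{-2})$ mass. Conditionally on such $z$, $\pi(\beta_z\vert z,y)$ is Gaussian with mean $(\bar X^\top \bar X+\sigma^2/\tau_1^2 I)^{-1}\bar X^\top y$ and covariance $\sigma^2(\bar X^\top\bar X+\sigma^2/\tau_1^2 I)^{-1}$. The exact-sparsity restricted eigenvalue $\omega(k')$ controls both the bias $\|E[\beta_z\vert z,y]-\beta^*\|\lesssim \sigma\sqrt{k\log p}/(\sqrt n\,\omega(k))$ (via a standard OLS-on-$\bar X$ argument with $\epsilon$-deviations bounded on the good event) and the fluctuation $\lesssim \sigma\sqrt{k}/\sqrt{n\,\omega(k)}$; the inactive coordinates contribute at most $\tau_0\sqrt p$ under the Gaussian spike. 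For the Gaussian design conclusion, I will invoke standard non-asymptotic bounds (e.g.\ Rudelson--Vershynin / Raskutti--Wainwright--Yu) showing that $X_{ij}\sim\mathcal{N}(0,1)$ satisfies the restricted eigenvalue and coherence hypotheses with probability $1-o_n(1)$ once $n\gtrsim k\log p$, and the additional $k\lesssim\log p$ enters through the requirement $\exp(cs^2\log p/\omega^2)$ staying controlled when propagating constants through the false-negative count.
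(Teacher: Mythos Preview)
Your approach is correct in spirit but takes a genuinely different route from the paper. The paper does not argue directly from the ratio formula; instead it invokes Theorems~2, 3 and~5 of \cite{atchade2018approach} as black boxes and spends the proof verifying their abstract hypotheses (conditions \textbf{H1}--\textbf{H2}, a rate function $r_0$ built from the Gaussian MGF, and a $\chi^2$ concentration event $\mathcal{U}_j$ for supermodels). The logical order there is (1)~$\Rightarrow$~(2)~$\Rightarrow$~(3): dimension control, then contraction in $\beta$, then model selection via Theorem~5 which consumes the contraction result and the $\beta$-min condition. Your plan inverts this, establishing (3) and (1) together by a direct ratio-and-sum over models split into the false-positive and false-negative regimes, and then reading off (2) from the conditional Gaussian on $z=z^*$. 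Your argument is more self-contained and arguably more transparent about where each hypothesis is spent; the paper's buys brevity by outsourcing the combinatorics.

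Two small points worth tightening. First, the bound $\exp(O(s\log p))$ on the exponential factor in the supermodel case does not follow cleanly from the coordinate-wise event \eqref{eqn:event_2}: that event only controls $X_j^\top A^{-1}\epsilon$ individually and a naive square-and-sum gives $O(s^2\log p)$ in the exponent. The right device is to recognise the exponent as (essentially, after the $\tau_1\to\infty$ limit) $\tfrac{1}{2\sigma^2}y^\top(P_z-P_{z^*})y$, a central $\chi^2_s$ in $\epsilon$, and use a $\chi^2$ tail plus union bound over $\binom{p-k}{s}$ models, exactly as the paper does in \eqref{eqn:chi-square-set}. Second, in the false-negative regime your inequality $\gtrsim \tfrac{n}{\sigma^2}\|\beta^*_S\|^2 - O(|S|\log p)$ combined with \eqref{eqn:beta_min} gives a gain of $(c-C)|S|\log p$ where $c$ comes from the $\beta$-min constant and $C$ from the noise and combinatorial terms; you will need the implicit constant in \eqref{eqn:beta_min} to be large enough for this to be positive after absorbing the $\binom{p}{\cdot}$ entropy, which is the same proviso hidden inside Theorem~5 of \cite{atchade2018approach}.
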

\begin{proof}
We build upon the result in \cite{atchade2018approach} and verify the conditions stated there. In our case, $\ell(\beta_{z},y)=\frac{1}{2\sigma^2}\|y-X_{z}\beta_{z}\|^2$, therefore with probability at least $1-2/p^2$ since $\epsilon\sim \mathcal{N}(0,\sigma^2 I)$,
\begin{equation}
\label{eqn:event_H}
\|\nabla \ell(\beta^*;y)\|_\infty=\|-\frac{1}{\sigma^2}X^\top (y-X\beta^*)\|_{\infty}=\frac{1}{\sigma^2}\|X^\top\epsilon\|_\infty \leq \frac{\sqrt{n}}{\sigma}\sqrt{2\log(p)}=: \frac{\bar{\rho}}{2}
\end{equation}
and for $\beta,\beta^* \in \mathbb{R}^p$ where $\beta$ has the same support as $\beta^*$, since $X_{z}^\top X_{z}\preceq \|X_{z}\|_F^2 \cdot I$,
\[\mathcal{L}_{\beta^*}(\beta;y) = -\frac{1}{2\sigma^2}(\beta-\beta^*)^\top (X^\top X) (\beta-\beta^*) \geq -\frac{nk}{2\sigma^2}\|\beta-\beta^*\|_2^2 =: -\frac{\bar{\kappa}}{2}\|\beta-\beta^*\|_2^2\]
which means \textbf{H1} is satisfied. For \textbf{H2}, it suffices to check $p^{\delta/2}\gtrsim e^{2n/\sigma^2 p^2}$, which holds under $p=e^{o(n)}$ as we assume. Starting from Theorem 2 therein, we check equation (2), picking $\mathcal{E}$ as the intersection of \eqref{eqn:event_H} and \cref{assume:re_relax}, we have on this event using the Gaussian moment-generating function, 
\begin{align*}
\mathbb{E}[e^{\mathcal{L}_{\beta^*}(\beta;y)+(1-\frac{\rho_1}{\bar{\rho}})\langle \nabla \ell(\beta^*;y),\beta-\beta^*\rangle}] &=\mathbb{E}[e^{-\frac{1}{2\sigma^2}(\beta-\beta^*)^\top (X^\top X) (\beta-\beta^*)-\frac{1-\rho_1/\bar{\rho}}{\sigma^2}(\beta-\beta^*)^\top X^\top \epsilon}] \\
&=\mathbb{E}[e^{-\frac{1}{2\sigma^2}(1-(1-\rho_1/\bar{\rho})^2)(\beta-\beta^*)^\top (X^\top X) (\beta-\beta^*)}]\\
&\leq e^{-\frac{Rn(1-(1-\rho_1/\bar{\rho})^2)}{2\sigma^2}\|\beta-\beta^*\|_2^2}
\end{align*}
therefore we can pick the rate function $r_0(x)=\frac{Rn(1-(1-\rho_1/\bar{\rho})^2)}{\sigma^2} x^2$ for such $\beta$'s. Since $\rho_1$ in our context is $1/\tau_1^2$, it is clear that $\rho_1 < \bar{\rho}$, therefore $r_0(x) \geq \frac{Rn}{\sigma^2\tau_1^2 \bar{\rho}}x^2, x\geq 0$, which means since neither $R$ nor $\sigma$ scales with $n$, 
\[a_0:= -\min_{x>0}\left\{ r_0(x)-\frac{4}{\tau_1^2}\sqrt{k}x\right\}\lesssim \frac{k\sqrt{n\log(p)}}{R\sigma p^2}\]
is bounded above by an absolute constant. It can then be checked that condition (3):
\[k(\frac{1}{2}+\frac{2}{\tau_1^2})+\frac{k}{2}\log(1+\bar{\kappa}\tau_1^2)+\frac{a_0}{2}+\frac{2}{\tau_1^2}\|\beta^*\|_2^2 \leq c_0 k\log(p)\]
holds with an absolute constant $c_0$ with the specified $\tau_1$ and $\|\beta^*\|_\infty$. Therefore Theorem 2 concludes for $k':=k(1+\frac{1}{\delta}) > k$, picking $j= 4/\delta$,
\begin{equation}
\label{eqn:sparse}
\mathbb{E}\left[\pi(z:\|z\|_0\gtrsim k'|y)\right] \leq \frac{2}{p^2}\, .
\end{equation}
For the second part, again using \cref{assume:re_relax}, for all $\beta$ with at most $k'$ active coordinates,
\[\mathcal{L}_{\beta^*}(\beta;y)=-\frac{1}{2\sigma^2}(\beta-\beta^*)^\top (X^\top X) (\beta-\beta^*) \leq -\frac{1}{2}\frac{n}{\sigma^2}\omega(k+k')\|\beta-\beta^*\|_2^2=: -\frac{1}{2}r(\|\beta-\beta^*\|_2)\]
therefore we are on the event $\mathcal{E}_1(k')$ with the above rate function. Take the contraction radius 
\begin{align*}
\zeta &:=\inf\left\{l>0: \frac{n}{\sigma^2}\omega(k+k')x^2-4\sqrt{k+k'}\frac{\sqrt{n\log(p^2)}}{\sigma}x \geq 0 \quad\forall x\geq l\right\}\\
&\asymp \frac{\sigma\sqrt{(k'+k)\log(p^2)}}{\sqrt{n}\omega(k+k')} \asymp \frac{\sigma \sqrt{k\log(p)}}{\sqrt{n}\omega(k)}
\end{align*}
Note this contraction rate is largely comparable to the ``ideal" near-minimax benchmark in \eqref{eqn:gold-standard} assuming $\omega(k)$ is a constant. Now we check that equation (8) 
\[C\frac{\sqrt{n\log(p^2)}}{\sigma}\sqrt{k+k'}\frac{\sigma\sqrt{(k'+k)\log(p^2)}}{\sqrt{n}\omega(k+k')} \gtrsim \max\{k'\log(p),(1+\delta)k\log(p+p^3 k)\}\]
holds with an absolute constant $C$ since we assume both $\omega(k+k')$ and $\delta$ to be constants. Applying Theorem 3, together with \eqref{eqn:sparse} gives the contraction rate
\begin{align}
\label{eqn:contract}
\mathbb{E}[\pi(B^c | y)]\leq \frac{2}{p^2}+8e^{-\frac{\sqrt{n\log(p^2)}}{\sigma}\sqrt{k+k'}\zeta} +2e^{-p}\lesssim \frac{1}{p^2}
\end{align}
 where we define the set
\[B:=\cup_{z:\|z\|_0\leq k'}\; \{\beta:\|\beta_z-\beta^*\|\lesssim \zeta, \|\beta_z-\beta\|\lesssim \tau_0 \sqrt{p}\}\, ,\]
which describes the set of $\beta$'s that have most of the mass concentrated on $k'$-sparse sub-vector and on the support is close to $\beta^*$.

Now for the (perfect) model selection, on event $\mathcal{E}_2(k')$ we have
\[
 \cap_{j=1}^{k'-k}\; \mathcal{U}_j := \cap_{j=1}^{k'-k} \left\{\max_{z^*\subset z,\|z\|_0=k+j}\; \frac{1}{2\sigma^2}(\|y-X\beta_{z}\|_2^2 - \|y-X\beta_{z^*}\|_2^2)\leq \frac{j\delta}{2}\log(p)\right\}\, ,
\]
which happens with high probability since by union bound and $\|y-X\beta_{z}\|^2=\|(I-P_z)y\|^2=\|y\|^2-\|P_z y\|^2$,
%\begin{subequations}
\begin{align}
\label{eqn:chi-square-set}
&\sum_{j=1}^{k'-k}\mathbb{P}(\mathcal{U}_j^c) =  \sum_{j=1}^{k'-k} \mathbb{P}\left(\max_{z^*\subset z,\|z\|_0=k+j} y^\top (P_{z^*}-P_z)y \geq j\delta\sigma^2 \log(p)\right) \\
&= \sum_{j=1}^{k'-k} \mathbb{P}\left(\max_{z^*\subset z,\|z\|_0=k+j} \chi^2(\text{dof}=\|z\|_0-\|z^*\|_0,\text{non-central}=(X\beta^*)^\top(P_{z^*}-P_z)X\beta^*) \geq j\delta\sigma^2 \log(p)\right)\nonumber\\
&\lesssim \sum_{j=1}^{k'-k} p^{-\frac{\sigma^2 \delta j}{4}} \lesssim \frac{1}{p^2} \nonumber
\end{align}
%\end{subequations}
where we used the concentration inequality for the central $\chi^2$ distribution since $y=X\beta^*+\epsilon \sim \mathcal{N}(X\beta^*,\sigma^2 I)$, the above non-centrality parameter is in fact $0$ and $P_z\in \mathbb{R}^{n\times n}$ denotes the orthogonal projector onto the column span of $X_{z}$ (idempotent of rank $\|z\|_0$), and similarly for $P_{z^*}$. We also used that $z^*\subset z$ above. % onto the orthogonal of span$(X_{1-z^*})$. 

We can also deduce that $\bar{\kappa}=\frac{nk}{\sigma^2}, \underline{\kappa}=\frac{n\omega(k+k')}{\sigma^2}$, i.e., the matrix $X_{z}$ is full-column rank (restricted strong-convexity) and restricted smooth on the event $\mathcal{E}_1(k')$ (since Hessian is constant, the inner $\inf$ and $\sup$ in the definition of (12) and (13) are immaterial here). Invoking Theorem 5 by setting $j=0$ with $a_2=0$ since $\ell$ is quadratic, with the $\beta$-min condition \eqref{eqn:beta_min} yields 
\[\mathbb{E}\left[\mathbbm{1}\{\cap_{j=1}^{k'-k}\; \mathcal{U}_j \} (1-\pi(z^*|y))\right] \lesssim e^{\frac{\sqrt{k'}\zeta}{\tau_1^2}}\sqrt{\frac{1}{\tau_1^2 \underline{\kappa}p^\delta}}+\frac{1}{p^2}\lesssim \frac{1}{p^2} \, ,\]
where we used \eqref{eqn:contract} and $\underline{\kappa}p^\delta \gtrsim 1/\tau_1^2$ that is satisfied by our choice. Now to remove the conditional event inside, putting together with \eqref{eqn:chi-square-set} gives the desired result
\[\mathbb{E}[\pi(z^*|y)] \gtrsim 1-\frac{1}{p^2}\, ,\]
since 
\begin{align*}
\mathbb{E}\left[1-\pi(z^*|y)\right] &\leq \mathbb{E}[\mathbbm{1}\{\cap_{j=1}^{k'-k}\; \mathcal{U}_j\} (1-\pi(z^*|y))] +\mathbb{P}(\{\cap_{j=1}^{k'-k}\; \mathcal{U}_j\}^c)\\
&\leq \mathbb{E}[\mathbbm{1}\{\cap_{j=1}^{k'-k}\; \mathcal{U}_j\} (1-\pi(z^*|y))] +\sum_{j=1}^{k'-k}\mathbb{P}(\mathcal{U}_j^c),
\end{align*}
and the last required condition $\zeta\sqrt{\underline{\kappa}}\gtrsim \sqrt{k}$ also checks out.

The last claim about the scaling of $n,k$ for the Gaussian design to hold with high probability follows from well-known results in high-dimensional statistics \cite{buhlmann_geer_book} -- the condition on $\omega(k')$ is already used in the proof of \cref{lem:warm-start}. 
%
% It is not hard to see $R$ is related to $\omega(k)$ as $u-\beta^* = z^*(u-\beta^*)+z^{*,c}(u-\beta^*)$, we have
% where we used $\|z^{*,c}(u-\beta^*)\|_1 \leq 7\|z^*(u-\beta^*)\|_1$.
\end{proof}

This result implies that in the high-dimensional regime $n=o(p)$ and for \emph{well-chosen} parameters, one has with high probability (1) sparse support; (2) contraction towards $\beta^*$; (3) model selection consistency for the posterior $\pi(\cdot|y)$. We remark that the result does not in fact depend crucially on the scaling of $\tau_0$ (the prior for the spike), other than it should decrease with $n$. Both the posterior contraction rate and the dependence of prior parameters on $n,p$ also bear resemblance with another family of continuous priors \cite[Theorem 6.4]{spike-and-slab-lasso} with heavier-tailed Laplace spike and slab, assuming $q$ fixed (i.e., non-hierarchical prior). % , in an asymptotic sense

\begin{remark}
In fact, the relative density ratio expression from \eqref{eqn:relative_density_1}-\eqref{eqn:relative_density_2} also hint at a connection to $\ell_0$-penalty if we look at the posterior mode. Since we have $\tau_1 \rightarrow \infty$ and $q/(1-q)\sim 1/p$, 
\begin{align*}
&\arg\max_{z\in\mathcal{E}_s} \log\left(\frac{\pi(z\vert y)}{\pi(z^*\vert y)}\right) \\
&=\arg\max_{z\in\mathcal{E}_s} \log\left( \frac{(\frac{q}{1-q})^{\|z\|_0-\|z^*\|_0}}{\sqrt{\det(I+\frac{\tau_1^2}{\sigma^2}X_{z-z^*}^\top (I+\frac{\tau_1^2}{\sigma^2}X_{z^*}X_{z^*}^\top)^{-1}X_{z-z^*})}}   \frac{\exp(-\frac{\tau_1^2}{2\sigma^2}y^\top X_z (\tau_1^2 X_z^\top X_z+\sigma^2 I)^{-1} X_z^\top y)}{\exp(-\frac{\tau_1^2}{2\sigma^2}y^\top X_{z^*} (\tau_1^2 X_{z^*}^\top X_{z^*}+\sigma^2 I)^{-1} X_{z^*}^\top y)}\right)\\
&\approx \arg\min_{z\in\mathcal{E}_s} \, (\|z\|_0-\|z^*\|_0)\log(p)+\frac{1}{2\sigma^2}(\|X\beta_z-y\|^2-\|X\beta_{z^*}-y\|^2)\\
&\quad\quad \quad +\log\left(\sqrt{\det(I+X_{z-z^*}^\top(X_{z^*}X_{z^*}^\top)^{-1}X_{z-z^*})}\right)\\
&\approx \arg\min_{z\in\mathcal{E}_s} \, (\|z\|_0-k)\log(p)+\frac{1}{2\sigma^2}\|X\beta_z-y\|^2\, ,  % y^\top(H_z  -H_{z^*})y  
\end{align*}
which means that asymptotically when the posterior concentrates on $z\in\mathcal{E}_s$ with $\leq s$ false positives, since \eqref{eqn:re} implies the $\det(\cdot)$ is uniformly bounded away from $0$ on this set, the posterior mode is approximately imposing a $\ell_0$-penalty on the model size while trading off with data fitting.
\end{remark}

The following is an immediate corollary that shows the posterior spread can quantify the remaining uncertainty for inferring $\beta^*$ based on the observed data $y^n$. Note $\mathcal{C}_n(y^n)$ below is random since it's constructed using the data $y^n$. We omit the proof as it is straightforward.
\begin{corollary}
Given the conditions that allow consistent model selection $\pi(z=z^*|y )\xrightarrow{P} 1$, credible sets for individual parameters $\beta_j$ building upon the posterior are valid asymptotic confidence sets: $\pi_n(\mathcal{C}_n|y^n)=1-\alpha \Rightarrow \mathbb{P}_{\beta^*}(\beta_j^*\in \mathcal{C}_n)\xrightarrow{n\rightarrow \infty} 1-\alpha$ by virtue of the BvM distributional approximation from \cite[Theorem 7]{atchade2018approach} and equation (15) therein.
\end{corollary}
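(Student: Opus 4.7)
The plan is to leverage the model selection consistency $\pi_n(z=z^* | y^n) \xrightarrow{P} 1$ established in \cref{prop:stat} to reduce the analysis to the \emph{correct-model} regime, on which the posterior becomes a nearly-Gaussian conditional distribution for $\beta_{z^*}$, and then combine this with the frequentist sampling distribution of the ordinary least squares estimator restricted to the true support. First I would decompose the posterior probability of a credible set $\mathcal{C}_n \subset \mathbb{R}$ for the $j$-th coordinate via the marginalization
\[
\pi_n(\beta_j \in \mathcal{C}_n | y^n) = \pi_n(z^*|y^n)\,\pi_n(\beta_j \in \mathcal{C}_n | y^n, z^*) + \sum_{z\neq z^*}\pi_n(z|y^n)\,\pi_n(\beta_j \in \mathcal{C}_n | y^n, z)\, ,
\]
and observe that the second summand is bounded by $\pi_n(z \neq z^* | y^n) = o_P(1)$. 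Hence, if $\mathcal{C}_n$ is defined by $\pi_n(\beta_j \in \mathcal{C}_n|y^n)=1-\alpha$, we have $\pi_n(\beta_j \in \mathcal{C}_n | y^n, z^*) = 1-\alpha + o_P(1)$.

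Second, I would invoke the Bernstein--von Mises distributional approximation of \cite[Theorem 7]{atchade2018approach} (equation (15) therein) to identify the conditional posterior $\pi_n(\cdot | y^n, z^*)$ as asymptotically Gaussian with mean $\hat{\beta}_{z^*} := (X_{z^*}^\top X_{z^*}+\tfrac{\sigma^2}{\tau_1^2} I)^{-1}X_{z^*}^\top y$ and covariance $\sigma^2 (X_{z^*}^\top X_{z^*}+\tfrac{\sigma^2}{\tau_1^2} I)^{-1}$, as already apparent from the explicit $\beta|z$-update in \eqref{eqn: beta_update}. Since $\tau_1 \to \infty$ under \cref{assume:scaling}, the ridge regularization becomes negligible and this Gaussian converges in total variation to $\mathcal{N}(\hat{\beta}_{\mathrm{OLS},z^*}, \sigma^2 (X_{z^*}^\top X_{z^*})^{-1})$. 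Consequently the marginal credible set $\mathcal{C}_n$ is asymptotically the interval $[\hat{\beta}_{\mathrm{OLS},j} - z_{\alpha/2}\, s_j, \hat{\beta}_{\mathrm{OLS},j} + z_{\alpha/2}\, s_j]$ with $s_j^2 = \sigma^2 [(X_{z^*}^\top X_{z^*})^{-1}]_{jj}$ and $z_{\alpha/2}$ the standard normal quantile.

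Third, for the frequentist coverage claim, I would use that under the data-generating model $y^n = X\beta^* + \epsilon$ with $\epsilon \sim \mathcal{N}(0, \sigma^2 I)$ and the $\beta$-min condition ensuring $\mathrm{supp}(\beta^*)=z^*$, the restricted OLS estimator satisfies the exact identity $\hat{\beta}_{\mathrm{OLS},z^*} = \beta^*_{z^*} + (X_{z^*}^\top X_{z^*})^{-1}X_{z^*}^\top \epsilon$, so its marginal $j$-th coordinate is Gaussian with mean $\beta^*_j$ and variance $s_j^2$. Matching this with the BvM-approximate form of $\mathcal{C}_n$ from the previous step yields $\mathbb{P}_{\beta^*}(\beta^*_j \in \mathcal{C}_n)\xrightarrow{n\to\infty} 1-\alpha$.

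The main obstacle is reconciling the $o_P(1)$ slack introduced by (i) conditioning on $\{z=z^*\}$ inside the posterior and (ii) the BvM total-variation approximation, so that the endpoints of the \emph{full}, model-averaged credible set $\mathcal{C}_n$ remain within $o_P(s_j)$ of the endpoints of the conditional-on-$z^*$ credible set. This is controlled by noting that the contribution of $z\neq z^*$ to $\pi_n(\cdot|y^n)$ is uniformly $o_P(1)$ in the support-$z^*$ sub-$\sigma$-algebra, so any perturbation to the quantile used in constructing $\mathcal{C}_n$ translates, via the continuity of the Gaussian CDF in a neighborhood of $z_{\alpha/2}$, into a perturbation of the coverage probability that is also $o_P(1)$.
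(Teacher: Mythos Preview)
Your proposal is correct and is precisely the argument the paper has in mind: the paper omits the proof entirely (``We omit the proof as it is straightforward''), with the statement itself already pointing to the BvM approximation of \cite[Theorem 7]{atchade2018approach} and equation (15) as the key input. Your decomposition over $z$, reduction to the $z=z^*$ conditional via $\pi_n(z\neq z^*|y^n)=o_P(1)$, identification of the conditional posterior with the Gaussian from \eqref{eqn: beta_update}, and matching against the exact sampling distribution of the restricted OLS estimator together constitute the straightforward argument the paper alludes to; the only minor omission is the case $j\notin z^*$, where $\beta_j^*=0$ and the conditional posterior is $\mathcal{N}(0,\tau_0^2)$, so coverage is immediate.
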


We also mention in passing that the fact we assumed $\epsilon\sim \mathcal{N}(0,\sigma^2 I)$ should not be considered a limitation for the statistical guarantee stated above. For example, for $\epsilon$ with subgaussian tails, concentration inequality for the quadratic form \eqref{eqn:chi-square-set} and \eqref{eqn:event_H} are readily available. Therefore the posterior \eqref{eqn:quasi_posterior}, which would be slightly mis-specified in this case, is still a meaningful object for inference and design sampling procedures for.

%%%%%%%%%%%%%%%%%%%%%%%%%%%%%%%%%%%%%%%%%%%%%%%%%%%%%%%%%%%
\section{Discussion}
Our work contributes to the ongoing effort of understanding statistical / computational trade-offs arising from contemporary data science problems. The continuous spike-and-slab priors with quasi-likelihood we study strike good balance between these two goals. While the number of submodels scales as $2^p$, natural statistical considerations indicate that it is not necessary to explore the entire state space to get a good approximate sample from the posterior for inference purpose. Moreover, under the same (1) posterior concentration on the parameter; and (2) warm start conditions (possibly implemented using a frequentist point estimator) that enable efficient sampling with a Gibbs sampler, we propose an improved method, based on Stochastic Localization, that is oblivious to the well-posedness of the design matrix. 

Much like the flurry of work on non-convex optimization which demonstrate that, under various mild statistical assumptions on the data/model and with possibly good initialization, simple gradient-based method can be shown to find good local/global minima efficiently; what we observe in this work is similar in spirit for the sampling analogue that exploit problem structure to avoid worst-case scenarios for sampling from non-log-concave distributions. Beyond spike-and-slab models, the Stochastic Localization sampler can be more broadly applicable whenever an estimate of the denoising drift $\mathbb{E}[\beta|\theta_t=\theta]$ is available (not necessarily in closed-form, an output from an efficient algorithm is also an option) for the Gaussian estimation problem \eqref{eqn:snr}, which can be especially useful when the posterior arising from interesting Bayesian statistical models exhibit multi-modal structure -- they pose challenge for MCMC-based method but seem to be quite prevalent in practice.

% For higher dimensions it is impossible to sample from the complete model space, but this should also not be necessary, as in sparse situations the posterior will concentrate on lower-dimensional spaces, as is also apparent from our theoretical results.

%\appendix
%\section{Proof of \cref{lem:warm-start}}  

%\section*{Acknowledgments}
%We would like to acknowledge ...

\bibliographystyle{siamplain}
\bibliography{main}

\end{document}